\renewcommand{\textbf}[1]{\begingroup\bfseries\mathversion{bold}#1\endgroup}
\newtheorem{thm}{Theorem}[section]
\newtheorem{defi}{Definition}[section]
\newtheorem{corollary}[thm]{Corollary}
\newtheorem{prop}[thm]{Proposition}
\newtheorem{lemma}[thm]{Lemma}
\theoremstyle{definition}
\newtheorem{remark}[thm]{Remark}
\newtheorem{examples}[thm]{Examples}
\newcommand{\argmin}{\mathop{\rm argmin}\nolimits}
\newcommand{\R}{\mathbb R}
\newcommand{\Z}{\mathbb Z}
\newcommand{\N}{\mathbb N}
\numberwithin{equation}{section}
\def\XXint#1#2#3{{\setbox0=\hbox{$#1{#2#3}{\int}$}
    \vcenter{\hbox{$#2#3$}}\kern-.5\wd0}}
\date{date}
\begin{document}
\title{Local variational study of 2d lattice energies and application to Lennard-Jones type interactions}
\author{Laurent B\'{e}termin\\ \\
QMATH, Department of Mathematical Sciences, University of Copenhagen,\\ Universitetsparken 5, DK-2100 Copenhagen \O, Denmark.\\ \texttt{betermin@math.ku.dk}. ORCID id: 0000-0003-4070-3344 }
\date\today
\maketitle

\begin{abstract}
In this paper, we focus on finite Bravais lattice energies per point in two dimensions. We compute the first and second derivatives of these energies. We prove that the Hessian at the square and the triangular lattice are diagonal and we give simple sufficient conditions for the local minimality of these lattices. Furthermore, we apply our result to Lennard-Jones type interacting potentials that appear to be accurate in many physical and biological models. The goal of this investigation is to understand how the minimum of the Lennard-Jones lattice energy varies with respect to the density of the points. Considering the lattices of fixed area $A$, we find the maximal open set to which $A$ must belong so that the triangular lattice is a minimizer (resp. a maximizer) among lattices of area $A$. Similarly, we find the maximal open set to which $A$ must belong so that the square lattice is a minimizer (resp. a saddle point). Finally, we present a complete conjecture, based on numerical investigations and rigorous results among rhombic and rectangular lattices, for the minimality of the classical Lennard-Jones energy per point with respect to its area. In particular, we prove that the minimizer is a rectangular lattice if the area is sufficiently large.
\end{abstract}

\noindent
\textbf{AMS Classification:}  Primary 82B20; Secondary 52C15, 35Q40 \\
\textbf{Keywords:} Lattice energy; Theta functions; Triangular lattice; Crystallization; Interaction potentials; Lennard-Jones potential; Ground state; Local minimum. \\

%\tableofcontents

\section{Introduction and main results}

Many two-dimensional interacting systems exhibit a periodic order, especially at very low temperature. Ginzburg-Landau vortices, Wigner crystal, Bose-Einstein Condensates and graphene sheets are well-known examples of the manifestation of the so-called ``crystallization": the ground state of their interaction energy tends to be a lattice structure. As explained in \cite{Blanc:2015yu}, the mathematical justification of such phenomena is a challenging problem, even in two dimensions. If we consider only points interacting in the whole plane via a radially symmetric potential, only few results are known \cite{Rad2,Rad3,Crystal,Luca:2016aa}. In each case, the interaction potential can be viewed as an approximation of a Lennard-Jones type potential defined for $r>0$ by
\begin{equation}\label{DEFLJINTRO}
V_{a,t}^{LJ}(r):=\frac{a_2}{r^{t_2}}-\frac{a_1}{r^{t_1}},\quad a=(a_1,a_2)\in (0,+\infty)^2,\quad t=(t_1,t_2)\quad \textnormal{s.t.}\quad 1<t_1<t_2.
\end{equation}
For these models, the global optimality of a triangular lattice, i.e. a dilation of $\Z(1,0)\oplus\Z(1/2,\sqrt{3}/2)$ (see \eqref{deftriA}), is proved as the number of points goes to infinity, in the sense of the thermodynamic limit (i.e. for the average energy).

\medskip

The Lennard-Jones type potentials, also called ``Mie potentials" appear to be a good model for the interactions in a solid (see e.g. \cite[p. 624]{CondensMatter}), or to describe hydrogen bonds \cite{GelinKarplus}. They are also used in mathematical biology to study social aggregation \cite{MEKBS}. Furthermore, they can be seen as the difference of two homogeneous functions, which allows to efficiently use the change of scale. It was also pointed out by Ventevogel and Nijboer \cite{VN2} that the crystallization (at least in dimension $1$) is easier to prove for Lennard-Jones type potentials than for other long-range decreasing-increasing functions like the Morse potential $V_M(r)=Ae^{-\lambda r}-Be^{-\mu r}$, $\lambda>\mu>0$.

\medskip

In this paper, we choose the following approach. Let $f\in C^2((0,+\infty))$ be a radially symmetric potential such that, for any $k\in\{0,1,2\}$, $f^{(k)}$ is integrable at infinity (see \ref{deff}). Assuming that the ground state of the associated energy per point $E_f$ is a Bravais lattice $L=\Z u\oplus \Z v\subset \R^2$, what is the minimizer of 
\begin{equation*}
E_f[L]:=\sum_{p\in L\backslash \{0\}} f(|p|^2)
\end{equation*}
among those lattices? Thus, we restrict the minimization problem to the simplest periodic sets of points. This problem, also discussed in \cite[Sect. 2.5]{Blanc:2015yu}, appears to be an interesting first step in order to obtain information about the ground state of the energy. For example, it is possible to exclude a large majority of lattice structures from the list of possible minimizers thanks to it. Furthermore, any optimality result for $L\mapsto E_f[L]$ supports the associated crystallization conjecture for particles interacting through $f$.

\medskip

This problem of minimizing energies among Bravais lattices has been investigated for inverse power laws $f(r)=r^{-s/2}$, $s>0$ \cite{Rankin,Eno2,Cassels,Diananda} and gaussian potentials $f(r)=e^{-\pi \alpha r}$ \cite{Mont} where the corresponding energies are respectively the Epstein zeta function and the lattice theta function defined by
\begin{equation}\label{defEpsttheta}
\zeta_L(s)=\sum_{p\in L\backslash \{0\}} \frac{1}{|x|^{s}}, \quad \textnormal{and}\quad \theta_L(\alpha):=\sum_{p\in L} e^{-\pi \alpha |p|^2}.
\end{equation}
In each case, the triangular lattice is the unique minimizer at any scale, i.e. among all Bravais lattices of any fixed density. These results were used in Mathematical Physics to prove, for instance, the optimality of the triangular lattice for Bose-Einstein condensates \cite{AftBN} and Ginzburg-Landau vortices \cite{Sandier_Serfaty} among Bravais lattices of fixed density, supporting some important conjectures (see e.g. \cite{Betermin:2014rr} for a connection between Sandier-Serfaty's Vortices Conjecture and Smale's $7^{th}$ problem).

\medskip

This kind of problem contains a high degree of nonlinearity. Indeed, as explained in \cite{Rankin,Mont}, a two-dimensional Bravais lattice $L$ can be parametrized by three real numbers $(x,y,A)$ where 
\begin{equation*}
(x,y)\in \mathcal{D}:=\left\{ (x,y)\in \R^2 ; 0\leq x\leq 1/2, y>0, x^2+y^2\geq 1 \right\},\quad \textnormal{and}\quad A>0.
\end{equation*}
The set $\mathcal{D}$ is called the half modular domain, $(x,y)$ parametrizes the ``shape" of the lattice, and $A$ is its area (the area of its unit cell $\R^2/L$, $A=|u \wedge v |$, also called the covolume of $L$). Four types of lattices play an important role:
\begin{itemize}
\item The triangular lattice of area $A$ (see Figure \ref{TriLattice}), $\Lambda_A:=\sqrt{\frac{2A}{\sqrt{A}}}\left[\Z(1,0)\oplus \Z\left( \frac{1}{2},\frac{\sqrt{3}}{2} \right)  \right]$ parametrized by $(1/2,\sqrt{3}/2,A)$.
\begin{figure}[!h]
\centering
\includegraphics[width=5cm,trim = 1mm 0cm 1mm 0cm, clip]{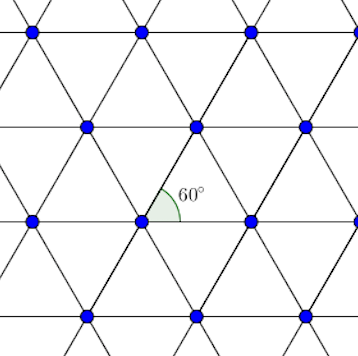} 
\caption{Triangular lattice.}
\label{TriLattice}
\end{figure}
\item The square lattice of area $A$ (see Figure \ref{SqLattice}), $\sqrt{A}\Z^2$ parametrized by $(0,1,A)$.
\begin{figure}[!h]
\centering
\includegraphics[width=5cm]{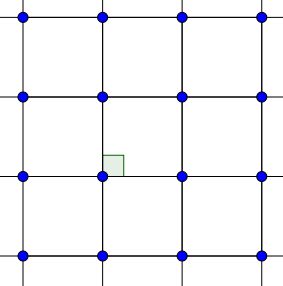} 
\caption{Square lattice.}
\label{SqLattice}
\end{figure}
\item The rhombic lattices (see Figure \ref{RhLattice}) parametrized by $(\cos \theta, \sin \theta, A)$, $60^\circ\leq \theta\leq 90^\circ$, having their generating vectors $u,v$ such that $|u|=|v|$ and $(\widehat{u,v})=\theta$ (including the triangular lattice if $\theta=60^\circ$ and the square lattice if $\theta=90^\circ$).
\begin{figure}[!h]
\centering
\includegraphics[width=5cm]{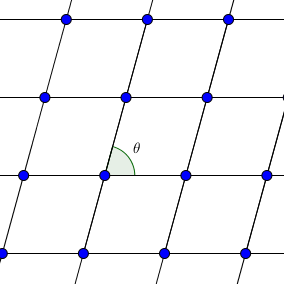} 
\caption{Rhombic lattice.}
\label{RhLattice}
\end{figure}
\item The rectangular lattices $\sqrt{A}\left[ \Z\left(\frac{1}{\sqrt{y}},0  \right)\oplus \Z \left(0,\sqrt{y}  \right) \right]$ (see Figure \ref{RectLattice}) parametrized by $(0,y,A)$, $y\geq 1$ (including the square lattice if $y=1$).
\begin{figure}[!h]
\centering
\includegraphics[width=5cm,trim = 0mm 0cm 1mm 0cm, clip]{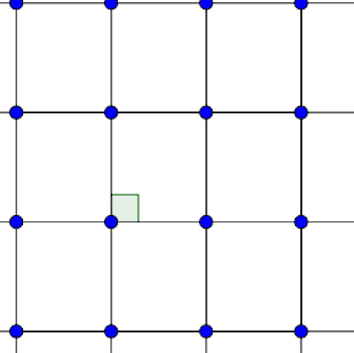} 
\caption{Rectangular lattice.}
\label{RectLattice}
\end{figure}
\end{itemize} 
Thus, the (nonlinear) energy of the lattice can be written as
\begin{equation*}
E_f[L]=E_f(x,y,A)=\sum_{(m,n)\in \Z^2\backslash (0,0)} f\left(A\left[\frac{1}{y}(m+xn)^2+yn^2  \right]   \right).
\end{equation*}
Then, once $A>0$ is fixed, the equations of the critical points of $(x,y)\mapsto E_f(x,y,A)$ are nonlinear (see Proposition \ref{deriv}), as is the behaviour of the minimizer with respect to $A$, i.e. $A\mapsto \argmin_{(x,y)\in \mathcal{D}}E_f(x,y,A)$. In this paper, we study the local minima of $(x,y)\mapsto E_f(x,y,A)$ for fixed $A>0$ in order to get a better understanding of the multistability of this system. More precisely, we want to characterize the values of $A$ such that the triangular lattice (resp. square lattice) given by $(x,y)=(1/2,\sqrt{3}/2)$ (resp. $(x,y)=(0,1)$) is a local minimizer, maximizer or saddle point of the energy. 

\medskip

It appears that finding the global minimum of $L\mapsto E_f[L]$ among all the Bravais lattices (with or without a fixed density) is a difficult problem, especially in larger dimension than two (see e.g. \cite{SarStromb}). In \cite{BeterminPetrache}, the author and Petrache developed some methods to study the lattice theta function defined by \eqref{defEpsttheta} based on dimension reduction. Using local minimality results in dimension $2$ for the triangular lattice and the square lattice, the local minimality of the Face-Centred-Cubic (FCC) and the Body-Centred-Cubic (BCC) lattices have been proved, for $L\mapsto \theta_L(\alpha)$, with respect to the parameter $\alpha$. Thus, the local stability of the triangular (or square) layers of a three-dimensional lattice can give information about the stability of the whole lattice. In \cite{Beterminlocal3d}, we have studied the same problem of local stability in dimension $3$ for cubic lattices. Some analogies with this work will be pointed out in this paper (see Table \ref{table-2d3d}). Furthermore, general results in any dimension have been proved by Coulangeon and Sch\"urmann  \cite{Coulangeon:kx,Coulangeon:2010uq,CoulSchurm2018} connecting local minimality and metric properties of some lattices.

\medskip

In \cite{Betermin:2014fy,BetTheta15}, the author and Zhang studied $E_f$ where $f=V_{a,t}^{LJ}$ is a Lennard-Jones type potential defined by \eqref{DEFLJINTRO}. Using theta functions, the triangular lattice has been proved to be a global minimum for small parameters $(t_1,t_2)$ and a minimizer at high density (i.e. for small $A$) for any parameters. It was also shown that the triangular lattice is no longer a minimizer of $E_{V_{a,t}^{LJ}}$ if $A$ is large enough. These results are summarized in the following theorem, where $\Gamma$ is the usual Gamma function:
\begin{thm}[\cite{Betermin:2014fy,BetTheta15}] Let $V_{a,t}^{LJ}$ be defined by \eqref{DEFLJINTRO}, then:
\begin{enumerate}
\item for any $A\leq \pi\left(\frac{a_2\Gamma(t_1)}{a_1\Gamma(t_2)}  \right)^{\frac{1}{t_2-t_1}}$, the triangular lattice $\Lambda_A$ is the unique minimizer, up to rotation, of $L\mapsto E_{V_{a,t}^{LJ}}[L]$ among Bravais lattices of fixed area $A$;
\item the triangular lattice $\Lambda_A$ is a minimizer of $E_{V_{a,t}^{LJ}}$ among Bravais lattices of fixed area $A$ if and only if
\begin{equation}\label{UBLJ}
A\leq \inf_{|L|=1,L\neq \Lambda_1}\left(\frac{a_2\left( \zeta_L(2t_2)-\zeta_{\Lambda_1}(2t_2) \right)}{a_1\left( \zeta_L(2t_1)-\zeta_{\Lambda_1}(2t_1) \right)}   \right)^{\frac{1}{t_2-t_1}},
\end{equation}
where the infimum is taken over the Bravais lattices $L$ of area $1$ and the Epstein zeta function $\zeta_L$ is defined by \eqref{defEpsttheta};
\item if $\pi^{-t_2}\Gamma(t_2)t_2\leq \pi^{-t_1}\Gamma(t_1)t_1$, then the minimizer of $L\mapsto E_{V_{a,t}^{LJ}}[L]$ among all the Bravais lattices (without a density constraint) is unique and triangular.
\end{enumerate}
\end{thm}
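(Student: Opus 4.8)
The three assertions all rest on rewriting the energy through the Epstein zeta function and the Mellin/theta representation, so the plan is to first record the common setup. Since $V_{a,t}^{LJ}(|p|^2)=a_2|p|^{-2t_2}-a_1|p|^{-2t_1}$, one has $E_{V_{a,t}^{LJ}}[L]=a_2\zeta_L(2t_2)-a_1\zeta_L(2t_1)$, and writing $L=\sqrt{A}L'$ with $|L'|=1$ gives the scaling $\zeta_L(s)=A^{-s/2}\zeta_{L'}(s)$, hence
\[
E_{V_{a,t}^{LJ}}[\sqrt{A}L']=a_2A^{-t_2}\zeta_{L'}(2t_2)-a_1A^{-t_1}\zeta_{L'}(2t_1).
\]
Next I would insert the identity $|p|^{-2t}=\tfrac{\pi^{t}}{\Gamma(t)}\int_0^\infty u^{t-1}e^{-\pi u|p|^2}\,du$ and sum over $p\in L\setminus\{0\}$ to obtain $\zeta_L(2t)-\zeta_{\Lambda_1}(2t)=\tfrac{\pi^{t}}{\Gamma(t)}I_L(t)$, where $I_L(t):=\int_0^\infty u^{t-1}\big(\theta_L(u)-\theta_{\Lambda_1}(u)\big)\,du$. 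Montgomery's theorem gives $\theta_{\Lambda_1}(u)\le\theta_L(u)$ for all $u>0$ and all unit-area $L$, strictly somewhere when $L\neq\Lambda_1$; thus $I_L(t)>0$ and $\zeta_{\Lambda_1}(2t)<\zeta_L(2t)$ for $L\neq\Lambda_1$. These positivity facts are the engine for everything below.

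For assertion (2) the plan is pure algebra. Setting $D_j:=\zeta_{L'}(2t_j)-\zeta_{\Lambda_1}(2t_j)>0$, the inequality $E_{V_{a,t}^{LJ}}[\sqrt{A}L']\ge E_{V_{a,t}^{LJ}}[\sqrt{A}\Lambda_1]$ reads $a_2A^{-t_2}D_2\ge a_1A^{-t_1}D_1$, i.e. $A^{t_2-t_1}\le a_2D_2/(a_1D_1)$. Since $D_1>0$ (strict Montgomery) this rearrangement is licit for every $L'\neq\Lambda_1$, and $\Lambda_1$ minimizes at fixed area $A$ exactly when it holds for all such $L'$; taking the infimum over unit-area $L'\neq\Lambda_1$ yields precisely the threshold \eqref{UBLJ}, which I henceforth denote $A^\ast$.

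For assertion (1) I would show that the explicit number is a lower bound for $A^\ast$. Substituting $D_j=\tfrac{\pi^{t_j}}{\Gamma(t_j)}I_L(t_j)$, the desired inequality $\pi\big(\tfrac{a_2\Gamma(t_1)}{a_1\Gamma(t_2)}\big)^{1/(t_2-t_1)}\le\big(\tfrac{a_2D_2}{a_1D_1}\big)^{1/(t_2-t_1)}$ collapses, after clearing the $\pi$, $\Gamma$ and $a$ factors, to the single monotonicity statement $I_L(t_1)\le I_L(t_2)$ for every $L\neq\Lambda_1$. I would prove this from $I_L(t_2)-I_L(t_1)=\int_0^\infty u^{t_1-1}(u^{t_2-t_1}-1)\big(\theta_L-\theta_{\Lambda_1}\big)(u)\,du$ together with the modular functional equation $\theta_L(u)=u^{-1}\theta_{L^*}(1/u)$, which for the self-dual $\Lambda_1$ folds the integral about $u=1$ and makes the integrand manifestly nonnegative. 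The hard part is precisely the non-self-dual case, where folding produces a combination of $\theta_L-\theta_{\Lambda_1}$ and $\theta_{L^*}-\theta_{\Lambda_1}$; I expect to close it by pairing $L$ with $L^*$ and, if necessary, reducing the competition to the one-parameter rhombic and rectangular families indicated in the abstract. Once $I_L(t_1)\le I_L(t_2)$ is established the chosen area is $\le A^\ast$, and uniqueness follows from the strictness of Montgomery's inequality.

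For assertion (3) the plan is to remove the area constraint by optimizing the scaling first. For fixed shape $L'$, minimizing $A\mapsto a_2A^{-t_2}\zeta_{L'}(2t_2)-a_1A^{-t_1}\zeta_{L'}(2t_1)$ gives the optimal area $A_{L'}=\big(\tfrac{t_2a_2\zeta_{L'}(2t_2)}{t_1a_1\zeta_{L'}(2t_1)}\big)^{1/(t_2-t_1)}$ and minimal value $-c\,\zeta_{L'}(2t_1)^{t_2/(t_2-t_1)}\zeta_{L'}(2t_2)^{-t_1/(t_2-t_1)}$ with $c>0$, so the global minimizer is the unit-area shape maximizing $F(L'):=\zeta_{L'}(2t_1)^{t_2/(t_2-t_1)}\zeta_{L'}(2t_2)^{-t_1/(t_2-t_1)}$, rescaled by $A_{L'}$. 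I would first exploit the hypothesis: since the shortest vectors of $\Lambda_1$ have length $>1$, the map $s\mapsto\zeta_{\Lambda_1}(s)$ is decreasing, and combining this with $\pi^{-t_2}\Gamma(t_2)t_2\le\pi^{-t_1}\Gamma(t_1)t_1$ (i.e. $\Gamma(t_2+1)\pi^{-t_2}\le\Gamma(t_1+1)\pi^{-t_1}$) shows that $A_{\Lambda_1}$ is at most the bound of (1), hence $A_{\Lambda_1}\le A^\ast$. Thus for every lattice of area $A\le A^\ast$ one has $E_{V_{a,t}^{LJ}}[\sqrt A L']\ge E_{V_{a,t}^{LJ}}[\sqrt A\Lambda_1]\ge E_{V_{a,t}^{LJ}}[\sqrt{A_{\Lambda_1}}\Lambda_1]$, the last step because $A_{\Lambda_1}$ globally minimizes the triangular energy over the scale. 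The main obstacle is the complementary regime of large-area shapes, where $\Lambda_1$ need not win at fixed area; I would control it by the boundary behaviour $E_{V_{a,t}^{LJ}}[\sqrt A L']\to0^-$ as $A\to\infty$ (so large-area lattices cannot beat the strictly negative $E_{V_{a,t}^{LJ}}[\sqrt{A_{\Lambda_1}}\Lambda_1]$) together with the Hessian computation at $\Lambda_1$ and the rhombic/rectangular reduction, which upgrade "$\Lambda_1$ maximizes $F$ locally" to a global statement; strictness again yields uniqueness.
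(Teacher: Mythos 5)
First, a point of calibration: the present paper does not prove this theorem at all --- it is quoted from \cite{Betermin:2014fy,BetTheta15} --- so your proposal can only be compared with the strategy of those works, which is indeed the one you adopt (Mellin representation, Montgomery's theta-function theorem, and folding the integral by lattice duality). Your assertion (2) is correct and complete: it is exactly the algebraic rearrangement legitimised by the strict positivity $D_1>0$ from Montgomery's theorem. Your assertion (1) is also on the right track: the reduction of the explicit bound to the single inequality $I_L(t_1)\le I_L(t_2)$ is correct. However, the step you yourself flag as ``the hard part'' (the non-self-dual case) is left open, and your fallback --- ``reducing the competition to the one-parameter rhombic and rectangular families'' --- would be inadmissible, since (1) is a statement over \emph{all} Bravais lattices. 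What closes the gap is a fact you never invoke: in dimension two the dual of a unit-covolume lattice $L=\Z u\oplus\Z v$ is $L^*=JL$, the rotation of $L$ by $90^\circ$ (compute the dual basis $u^*=(d,-c)$, $v^*=(-b,a)$ for $u=(a,b)$, $v=(c,d)$, $ad-bc=1$), so $\theta_{L^*}=\theta_L$ identically and \emph{every} unit lattice is self-dual at the level of theta functions. Folding then gives $I_L(t_2)-I_L(t_1)=\int_1^\infty\left[(u^{t_2-1}-u^{t_1-1})-(u^{-t_1}-u^{-t_2})\right]\Delta_L(u)\,du$ with $\Delta_L=\theta_L-\theta_{\Lambda_1}\ge 0$, and the bracket equals $(u^{t_2-t_1}-1)(u^{t_1-1}-u^{-t_2})\ge 0$ on $[1,\infty)$ because $t_1+t_2>1$; strictness gives uniqueness.

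The genuine gap is assertion (3). Your treatment of areas $A\le A^\ast$ is fine (the hypothesis plus monotonicity of $s\mapsto\zeta_{\Lambda_1}(s)$ does give $A_{\Lambda_1}\le A^\ast$), but your argument for the complementary regime fails. The claim that large-area lattices are harmless because $E_{V_{a,t}^{LJ}}[\sqrt{A}L']\to 0^-$ as $A\to\infty$ confuses a fixed-shape limit with a uniform one: the infimum over shapes at area $A$ does \emph{not} tend to $0$. Taking rectangular shapes degenerating proportionally to the area ($y=cA$), only the $n=0$ row survives and, in the classical case, the energy converges to $2c^6\zeta(12)-4c^3\zeta(6)$, whose minimum over $c$ is $-2\zeta(6)^2/\zeta(12)<0$; hence there are lattices of arbitrarily large area with energy bounded away from zero. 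This is precisely the degeneracy phenomenon established in Proposition \ref{degrect} and Proposition \ref{Rankinmethod} of this paper ($x_A=0$, $y_A\le CA$, $y_A\to\infty$), so the limit/infimum interchange your argument needs is exactly what fails for Lennard-Jones energies. Showing that these degenerate competitors cannot beat the optimal triangular lattice is the real content of assertion (3), and it is where the hypothesis $\pi^{-t_2}\Gamma(t_2)t_2\le\pi^{-t_1}\Gamma(t_1)t_1$ must do its work; your proposed repair --- a Hessian computation at $\Lambda_1$ plus the rhombic/rectangular reduction to ``upgrade local to global'' --- is not a proof, since a Hessian is purely local and those two-parameter families do not exhaust the space of shapes. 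So (2) is proved, (1) is proved modulo a standard duality fact, and (3) remains open in your proposal.
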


We remark that point 2. implies the non-minimality of $\Lambda_A$ if $A$ is sufficiently large. Hence, in \cite{Betermin:2014fy}, we numerically computed that the right side term of \eqref{UBLJ} in the classical case $V(r)=r^{-6}-2r^{-3}$ (i.e. $a=(2,1)$, $t=(3,6)$) is 
$$
A_{BZ}:= \inf_{|L|=1\atop L\neq \Lambda_1} \left(\frac{\zeta_L(12)-\zeta_{\Lambda_1}(12)}{2(\zeta_L(6)-\zeta_{\Lambda_1}(6))}  \right)^{1/3}\approx 1.138.
$$ 
Furthermore, we conjectured that the square lattice must be a minimizer for some values of the area (in an interval) larger than $A_{BZ}$. Obviously, our method, based on the global optimality of the triangular lattice for the theta function $L\mapsto \theta_L(\alpha)$ defined by \eqref{defEpsttheta}, was not adapted to prove the optimality of another lattice (square, rectangular or rhombic). Thus, the goal of this paper is to study $L\mapsto E_{V_{a,t}^{LJ}}[L]$ locally in order to get more information about the optimality of the triangular and the square lattice, and then to specify our conjecture -- in the classical case -- about the minimizers of $(x,y)\mapsto E_f(x,y,A)$ with respect to the area $A$.

\medskip

The fact that the square and the triangular lattices are critical points of $(x,y)\mapsto E_f(x,y,A)$ for any fixed $A$ is a consequence of the lattice symmetries, as proved in \cite[Thm. 4.4.(1)]{Coulangeon:2010uq}. We give an alternative proof of this result in Propositions \ref{D1xy} and \ref{triangCM}. Then, the Hessians of $(x,y)\mapsto E_f(x,y,A)$ appear to be diagonal for both lattices (see Corollary \ref{Hessienne01} and Proposition \ref{derivtriangular}), which is again a consequence of the symmetries. Moreover, the Hessian for the triangular lattice is a multiple of the identity. Thus, it is clear that, for any classical interacting potential $f$ (constructed with exponentials, inverse power laws or other classical functions), the triangular lattice is, for almost every $A>0$, a local minimizer or a local maximizer (see Corollary \ref{coranalytic}). Furthermore, we get the following result:

\begin{thm}[See Thm. \ref{THmain1} and Thm. \ref{THmain2} below] \label{THintro3}
We define the following sums:
\begin{center}
\begin{minipage}[l]{0.4\linewidth}
\begin{align*}
&S_1(s)=\sum_{m,n}\frac{m^4}{(m^2+mn+n^2)^s},\\
&S_3(s)=\sum_{m,n}\frac{m^2n^2}{(m^2+n^2)^s},
\end{align*}
\end{minipage}
\begin{minipage}[c]{0.5\linewidth}
\begin{align*}
&S_2(s)=\sum_{m,n}\frac{m^2}{(m^2+n^2)^s},\\
&S_4(s)=\sum_{m,n}\frac{(n^2-m^2)^2}{(m^2+n^2)^s}.
\end{align*}
\end{minipage}
\end{center}
\textnormal{\textbf{Part A: Local optimality of the triangular lattice.}} For any $(a,t)$ as in \eqref{DEFLJINTRO}, let 
$$
A_{0}:=\frac{\sqrt{3}}{2}\left( \frac{a_2 t_2(t_2-1)S_1(t_2+2)}{a_1t_1(t_1-1)S_1(t_1+2)} \right)^{\frac{1}{t_2-t_1}},
$$
then we have:
\begin{enumerate}
\item if $A<A_0$, then $\left(\frac{1}{2},\frac{\sqrt{3}}{2}  \right)$ is a local minimizer of $(x,y)\mapsto E_{V_{a,t}^{LJ}}(x,y,A)$;
\item if $A>A_0$, then $\left(\frac{1}{2},\frac{\sqrt{3}}{2}  \right)$ is a local maximizer of $(x,y)\mapsto E_{V_{a,t}^{LJ}}(x,y,A)$.
\end{enumerate} 
\textnormal{\textbf{Part B. Local optimality of the square lattice.}} Let
$$
g(s)=S_2(s+1)-2(s+1)S_3(s+2),\quad k(s)=(s+1)S_4(s+2)-2S_2(s+1),
$$
and define
$$A_1:=\left(\frac{a_2t_2g(t_2)}{a_1t_1g(t_1)}  \right)^{\frac{1}{t_2-t_1}} \quad \textnormal{and}\quad A_2:=\left(\frac{a_2t_2k(t_2)}{a_1t_1k(t_1)}  \right)^{\frac{1}{t_2-t_1}}.
$$
It holds:
\begin{enumerate}
\item if $A_1<A<A_2$, then $(0,1)$ is a local minimizer of $(x,y) \mapsto E_{V_{a,t}^{LJ}}(x,y,A)$;
\item if $A\not\in [A_1,A_2]$, then $(0,1)$ is a saddle point of $(x,y) \mapsto E_{V_{a,t}^{LJ}}(x,y,A)$.
\end{enumerate}
\end{thm}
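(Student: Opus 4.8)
The plan is to reduce everything to the signs of the diagonal Hessian entries supplied by Proposition \ref{THintro2}. Both lattices are critical points of $(x,y)\mapsto E_{V_{a,t}^{LJ}}(x,y,A)$ (as recalled above), and by Proposition \ref{THintro2} the Hessians at $(0,1)$ and at $(1/2,\sqrt3/2)$ are diagonal, i.e. $\partial^2_{xy}=0$. Hence the nature of each critical point is decided purely by the signs of the two diagonal entries: both positive gives a strict local minimizer, both negative a local maximizer, opposite signs a saddle point. After inserting $f=V_{a,t}^{LJ}$, for which $f'(r)=-a_2 t_2 r^{-t_2-1}+a_1 t_1 r^{-t_1-1}$ and $f''(r)=a_2 t_2(t_2+1)r^{-t_2-2}-a_1 t_1(t_1+1)r^{-t_1-2}$, the whole problem becomes a sign analysis, in $A$, of the resulting lattice sums.

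For Part A I would plug $f',f''$ into the common value $\partial^2_{xx}=\partial^2_{yy}=:H(A)$ at $(1/2,\sqrt3/2)$. Every term carries a factor $A^{-t_1}$ or $A^{-t_2}$ once the explicit powers of $2A/\sqrt3$ are simplified (here one uses $(2/\sqrt3)^2=4/3$), and the surviving sums are $\sum_{m,n}n^2/Q^{s+1}$ and $\sum_{m,n}n^4/Q^{s+2}$ with $Q=m^2+mn+n^2$. The key step is to push these onto $S_1$ via the symmetry of the triangular form: the reflection $(m,n)\mapsto(n,m)$ gives $\sum n^4/Q^s=S_1(s)$, while the threefold symmetry (most transparently through $a=m$, $b=n$, $c=-(m+n)$ with $a+b+c=0$ and $Q=\tfrac12(a^2+b^2+c^2)$) produces the linear relations $\sum m^3 n/Q^s=-\tfrac12 S_1(s)$ and $\sum m^2 n^2/Q^s=\tfrac13\sum_{m,n}Q^{2}/Q^{s}$. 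Combining them collapses $\sum n^2/Q^{s+1}$ to exactly $S_1(s+2)$, whence $H(A)$ reduces to the difference of the two terms $\left(\tfrac{\sqrt3}{2}\right)^{t_2}a_2 t_2(t_2-1)S_1(t_2+2)A^{-t_2}$ and $\left(\tfrac{\sqrt3}{2}\right)^{t_1}a_1 t_1(t_1-1)S_1(t_1+2)A^{-t_1}$. Since $S_1>0$ and $t_i>1$, both coefficients are strictly positive, so $H(A)>0\iff A<A_0$, giving the local minimizer/maximizer dichotomy.

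For Part B the same substitution at $(0,1)$ is easier, the only symmetry needed being $(m,n)\mapsto(n,m)$, which yields $\sum m^2/P^s=\sum n^2/P^s=S_2(s)$ with $P=m^2+n^2$, while the weights $m^2n^2$ and $(n^2-m^2)^2$ are exactly $S_3$ and $S_4$ by definition. Collecting powers of $A$ gives $\partial^2_{xx}=2[a_1 t_1 g(t_1)A^{-t_1}-a_2 t_2 g(t_2)A^{-t_2}]$ and $\partial^2_{yy}=a_2 t_2 k(t_2)A^{-t_2}-a_1 t_1 k(t_1)A^{-t_1}$, so that $\partial^2_{xx}>0\iff A>A_1$ and $\partial^2_{yy}>0\iff A<A_2$. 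The two inequalities then combine into the local-minimum interval $(A_1,A_2)$, and in the complement exactly one diagonal entry changes sign, producing a saddle point rather than a maximizer.

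The main obstacle is precisely the last implication in Part B: these equivalences, the very definition of $A_1,A_2$ as positive thresholds, and the nonemptiness $A_1<A_2$ all rely on $g(t_i)>0$ and $k(t_i)>0$, equivalently $S_2(s+1)>2(s+1)S_3(s+2)$ and $(s+1)S_4(s+2)>2S_2(s+1)$. This positivity is delicate because as $s\to1^+$ the sums $S_2,S_3,S_4$ individually diverge, so $g$ and $k$ are differences of large quantities; I expect to control it by isolating the dominant short-vector contributions (which give $g(s)\to2$ and $k(s)\sim4s$ as $s\to\infty$) and bounding the tail, supplemented by explicit numerical estimates on the intermediate range of exponents. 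By contrast, Part A needs no such input, the factor $(t_i-1)S_1>0$ being automatic, so the triangular case is unconditional once the symmetry identities are established.
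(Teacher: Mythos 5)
Your overall route is exactly the paper's: criticality of both lattices, the diagonal Hessians of Proposition \ref{THintro2}, then a sign analysis in $A$. Part A of your argument is essentially identical to the proof of Theorem \ref{THmain1}: the relations you extract from $a+b+c=0$ are precisely Lemma \ref{sumlatticetri} (the symmetry $q(-m-n,n)=q(m,n)$), the collapse $\sum_{m,n}m^2(m^2+mn+n^2)^{-s-1}=S_1(s+2)$ is the paper's key computation, and your two-term expression for the diagonal entry reproduces the paper's $T_f(A)$ and the threshold $A_0$ exactly. Likewise, your Part B equivalences $\partial^2_{xx}E>0\iff A>A_1$ and $\partial^2_{yy}E>0\iff A<A_2$ coincide with the proof of Theorem \ref{THmain2} via Proposition \ref{CNminsquare}.

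The genuine gap is the step you yourself single out as the main obstacle and then only sketch: $g(s)>0$ and $k(s)>0$. Since the statement quantifies over \emph{all} exponents $1<t_1<t_2$, these inequalities are needed for every $s\in(1,+\infty)$, and ``short-vector contributions plus tail bounds plus numerics on the intermediate range'' cannot deliver that: the dangerous regime is not a compact intermediate range but the endpoint $s\to1^+$. There $S_2(s+1)=\tfrac12\zeta_{\Z^2}(2s)$, $S_3(s+2)$ and $S_4(s+2)$ all blow up like multiples of $\zeta_{\Z^2}(2s)\sim\pi/(s-1)$, and the singular parts inside $g$ and $k$ cancel identically: writing $H(s)=\sum_{m,n}(m^4-6m^2n^2+n^4)(m^2+n^2)^{-s-2}$ (which has no pole at $s=1$), one has $g(s)=\tfrac{1-s}{4}\zeta_{\Z^2}(2s)+\tfrac{s+1}{4}H(s)$ and $k(s)=\tfrac{s-1}{2}\zeta_{\Z^2}(2s)+\tfrac{s+1}{2}H(s)$, so near $s=1$ positivity is a statement about the finite part of a difference of divergent sums --- invisible to large-$s$ asymptotics and not settled by finitely many numerical evaluations. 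The paper closes this step instead by a uniform-in-$s$ algebraic comparison: it rewrites $g(s)=\sum_{m,n}m^4(m^2+n^2)^{-s-2}-(2s+1)S_3(s+2)$ and compares $S_3(s+2)$ with $\sum_{m,n}m^4(m^2+n^2)^{-s-2}$ through the substitution $(m,n)=(k+\ell,k-\ell)$, obtaining $g(s)\geq\bigl(1-\tfrac{1+2s}{1+2^{s+1}}\bigr)\sum_{m,n}m^4(m^2+n^2)^{-s-2}$ and $k(s)\geq\tfrac{s2^{s+2}-4}{1+2^{s+1}}\sum_{m,n}m^4(m^2+n^2)^{-s-2}$. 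Be aware, however, that even this step is delicate: the substitution parametrizes only the pairs with $m+n$ even, so the direction of the printed comparison inequality deserves scrutiny (numerically $S_3(4)\approx 0.36$ while $\tfrac19\sum_{m,n}m^4(m^2+n^2)^{-4}\approx 0.29$), which confirms that this positivity is the true mathematical content of Part B and cannot be waved through. Finally, two smaller points: positivity of $g,k$ makes $A_1,A_2$ well defined but, contrary to what your last paragraph suggests, does not imply $A_1<A_2$ (that is equivalent to $k(t_1)/g(t_1)<k(t_2)/g(t_2)$, a monotonicity-type fact about $k/g$); and this nonemptiness is actually needed for the saddle claim, since if $A_2<A<A_1$ were possible both diagonal entries would be negative and $(0,1)$ would be a local maximum, not a saddle. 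Neither your proposal nor the paper addresses that point.
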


\begin{remark}
It seems difficult to compare $A_0$ and $A_1$ in general. However, in the classical case $a=(2,1)$, $t=(3,6)$, we have $A_1\approx 1.143< A_0\approx 1.152$. Therefore, for any $A\in (A_1,A_0)$, the triangular and the square lattices are local minimizers of the energy.
\end{remark}

For the classical Lennard-Jones interaction $V$, i.e. $a=(2,1)$ and $t=(3,6)$, we can compare this result with \cite{Beterminlocal3d} where the local optimality of the cubic lattice $\Z^3$, the BCC lattice and the FCC lattice has been studied. In Table \ref{table-2d3d}, we have summarized our results in terms of scaling parameter $\ell=V^{1/3}$ in the three-dimensional case ($V$ being the volume of the unit cell) and $\ell=\sqrt{A}$ in the two-dimensional case. Comparing the optimality of $\Z^2$ and $\Z^3$, it turns out that the results are similar -- there is an interval of area/volume where they are local minimizers and saddle points outside -- and the numerical values for the bounds of these intervals are very close to each other (once the volumes and areas are converted to lengths). The main difference is for the FCC and BCC lattices, which can be viewed as layering of triangular lattices (see e.g. \cite[Sect. 2.B.]{BeterminPetrache}). There is a small volume region where the FCC and BCC lattices are saddle points, which is not the case for the triangular lattice. This phenomenon, as well as the fact that all the values are different, is obviously explained by the additional dimension. However, the numerical values of the volume for the local minimality and maximality of these lattices are also very close to each other. Thus, as explained in \cite{BeterminPetrache} for the lattice theta function, it seems that the study of the local minimality/maximality of one layer of a three-dimensional lattice gives a rather accurate information on the local minimality/maximality of the whole lattice for the classical Lennard-Jones energy.

\begin{table}[!h]
\centering
\begin{tabular}{|c|c|c|c|c|c|}
\hline
 & \textbf{$\ell\Z^2$} & \textbf{$\Lambda_{\ell^2}$} & \textbf{$\ell\Z^3$} & \textbf{$\ell$FCC  and $\ell$BCC} \\
 \hline
 \textbf{Local minimum} & $1.069<\ell < 1.126$ & $0<\ell < 1.073$ & $1.063<\ell<1.104$ & $0< \ell <1.029$\\
\hline
\textbf{Local maximum} & No & $\ell > 1.073$ &  No & $\ell>1.095$\\
\hline
\textbf{Saddle point} & $\ell \not\in (1.069,1.126)$ & No & $\ell\not\in (1.063,1.104)$ & $1.029<\ell<1.095$\\
\hline
\end{tabular}
\caption{Local optimality results for $2d$ and $3d$ lattices in terms of the scaling parameter $\ell\in \{\sqrt{A},V^{1/3}\}$.}
\label{table-2d3d}
\end{table}

Furthermore, again in the classical case, we give a complete conjecture, improving that of \cite{Betermin:2014fy}, based on our previous result and numerical simulations. A summary of this conjecture is given in Figure \ref{CONJINTRO} and justified in Sections \ref{secrhomb} and \ref{rectangular}. Furthermore, we summarize in Table \ref{tableintro} what is precisely conjectured, proved and numerically checked.

\begin{figure}[!h]
\centering
\includegraphics[width=16cm]{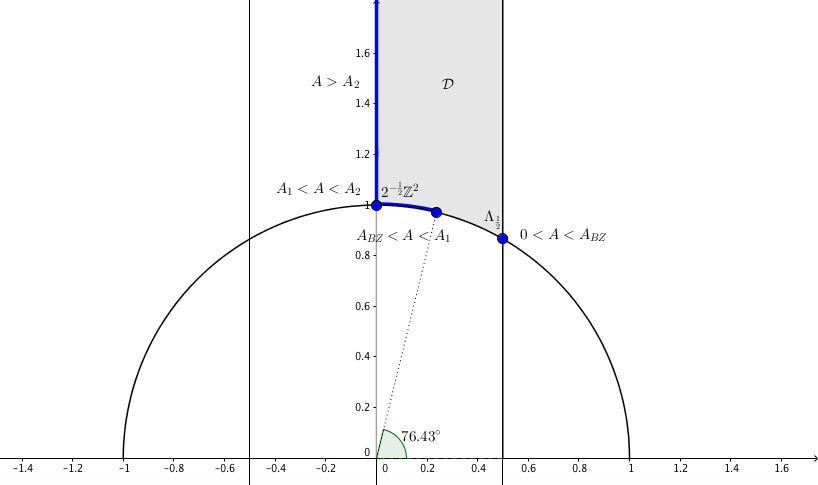} 
\caption{\textbf{Conjecture} about the minimization of $(x,y)\mapsto E_V(x,y,A)$ with respect to $A$. (1) If $0<A<A_{BZ}\approx 1.138$, then the minimizer is triangular. (2) If $A_{BZ}<A<A_1\approx 1.143$, then the minimizer is a rhombic lattice with an angle covering monotonically and continuously the interval $[76.43^\circ,90^\circ)$. (3) If $A_1<A<A_2\approx 1.268$, then the minimizer is a square lattice. (4) If $A>A_2$, then the minimizer is a rectangular lattice which degenerates (the primitive cell is more and more thin, see Figure \ref{Conjrectdeg}) as $A\to +\infty$.}
\label{CONJINTRO}
\end{figure}

\begin{table}[!h]
\centering
\begin{tabular}{|c|c|c|}
\hline
\textbf{Area $A$} & \textbf{Min of $L_A\mapsto E_V[L_A]$} & \textbf{Status}\\
\hline
$0<A<\frac{\pi}{(120)^{1/3}}\approx 0.637$ & triangular & proved in \cite{Betermin:2014fy}\\
\hline
$\frac{\pi}{(120)^{1/3}}<A<A_{BZ}\approx 1.138$ & triangular & num. + loc. min. proved in Th. \ref{THmain1}\\
\hline
$A_{BZ}<A<A_1\approx 1.143$ & rhombic & num. \\
\hline
$A_1<A<A_2\approx 1.268$ & square & num. + loc. min. proved in Th. \ref{THmain2} \\
\hline
$A>A_2$ & rectangular & num., proved for large $A$ in Prop \ref{Rankinmethod} \\
\hline
\end{tabular}
\caption{Summary of our works. The abbreviations ``num." and ``loc. min" mean ``numerically checked" and ``local minimality".}
\label{tableintro}
\end{table}

Using a method of Rankin \cite{Rankin} and bounding the minimizer of $y\mapsto E_V(0,y,A)$ in terms of $A$, we show the following result, which partially proves the point $(4)$ of our Conjecture in Figure \ref{CONJINTRO}:

\begin{thm}[See Prop. \ref{degrect} and Prop. \ref{Rankinmethod} below]\label{THintro4}
Let $V(r)=\frac{1}{r^6}-\frac{2}{r^3}$, then there exists $\tilde{A}>0$ such that for any $A>\tilde{A}$, the minimizer $(x_A,y_A)$ of $(x,y)\mapsto E_{V}(x,y,A)$ is such that $x_A=0$, $y_A\geq 1$. Furthermore, we have $\displaystyle \lim_{A\to +\infty} y_A=+\infty$.
\end{thm}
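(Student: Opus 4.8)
The plan is to rescale to unit area and to exploit the two–term structure of the classical Lennard–Jones energy. Writing $|p|^2=A\,q_{x,y}(m,n)$ with $q_{x,y}(m,n)=\frac1y(m+xn)^2+yn^2$ the unit–area quadratic form, and setting $\zeta(x,y;s):=\sum_{(m,n)\neq(0,0)}q_{x,y}(m,n)^{-s/2}$ (the Epstein zeta function of the unit–area lattice), one has
$$A^6E_V(x,y,A)=\zeta(x,y;12)-2A^3\,\zeta(x,y;6)=:G_A(x,y),$$
so minimizing $E_V(\cdot,\cdot,A)$ over $\mathcal D$ is the same as minimizing $G_A$. First I would record that $G_A$ is continuous and coercive on $\mathcal D$ (as $y\to\infty$ the term $\zeta(\cdot,\cdot;12)$ grows like $y^6$ and dominates $-2A^3\zeta(\cdot,\cdot;6)\sim -y^3$), so a minimizer $(x_A,y_A)$ exists.

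Second, I would prove the degeneracy $y_A\to\infty$ by a comparison–and–compactness argument valid for \emph{any} minimizer, rectangular or not. Using $\zeta(0,y;s)=2\zeta_{\mathbb R}(s)\,y^{s/2}(1+o(1))$ as $y\to\infty$ (where $\zeta_{\mathbb R}$ is the Riemann zeta function, the leading term coming from the row $n=0$), optimizing $G_A(0,y)$ over $y$ at $y^\ast\sim A(\zeta_{\mathbb R}(6)/\zeta_{\mathbb R}(12))^{1/3}$ yields $\min_{\mathcal D}G_A\le G_A(0,y^\ast)\le-\kappa A^6$ for $A$ large, with $\kappa>0$. Since $\zeta(\cdot,\cdot;12)\ge0$, any minimizer then satisfies $\zeta(x_A,y_A;6)\ge\frac{\kappa}{2}A^3$. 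On each compact slab $\mathcal D\cap\{y\le Y_0\}$ the lattices are non-degenerate (shortest vector $1/\sqrt y\ge1/\sqrt{Y_0}$) and $\zeta(\cdot,\cdot;6)$ is bounded, so $\zeta(x_A,y_A;6)\to\infty$ forces $y_A>Y_0$ for $A$ large; as $Y_0$ is arbitrary, $y_A\to\infty$, which already gives $y_A\ge1$ (and once rectangularity is known this is immediate from $(0,y_A)\in\mathcal D$).

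Third, and this is where Rankin's row decomposition enters, I would show that for $y$ large the slice $x\mapsto G_A(x,y)$ is minimized exactly at $x=0$, which together with $y_A\to\infty$ gives $x_A=0$. Grouping the lattice sum by rows and applying Poisson summation in $m$ on each row $n\neq0$, one has $\zeta(x,y;s)=\sum_n R_n^{(s)}(nx)$ with $R_n^{(s)}(\alpha)=\sum_k\hat h^{(s)}_n(k)e^{2\pi ik\alpha}$, the row $n=0$ being $x$-independent and all Fourier coefficients $\hat h^{(s)}_n(k)>0$; hence each row, and therefore $\zeta(\cdot,y;s)$ itself, is maximized at $x=0$ for every $s$. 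Consequently both brackets in
$$G_A(x,y)-G_A(0,y)=2A^3\big[\zeta(0,y;6)-\zeta(x,y;6)\big]-\big[\zeta(0,y;12)-\zeta(x,y;12)\big]$$
are non-negative, and $G_A(x,y)\ge G_A(0,y)$ reduces to bounding the ratio of the second bracket to the first by $2A^3$. The key point is that both differences are governed by the rows $n=\pm1$, $k=\pm1$ and behave like $(\mathrm{const})\,y^{1/2}e^{-2\pi y}(1-\cos2\pi x)$ with the \emph{same} power $y^{1/2}$ for $s=6$ and $s=12$; thus the ratio stays bounded, say by $\rho<\infty$, uniformly in $x\in(0,1/2]$ as $y\to\infty$. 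Since $2A^3\to\infty$, for $A$ large we have $2A^3>\rho$, whence $G_A(x,y)>G_A(0,y)$ for all $x\in(0,1/2]$ once $y\ge Y_1$; as $y_A\to\infty$ this forces $x_A=0$.

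The main obstacle is this last step: making the uniform bound on the ratio rigorous. It requires quantitative control of $\hat h^{(s)}_n(k)=(yn^2)^{-s/2}y|n|\,\Phi_s(2\pi k y|n|)$, where $\Phi_s$ is the Fourier transform of $(1+u^2)^{-s/2}$, together with the uniform smallness of the contributions of $(n,k)\neq(\pm1,\pm1)$ relative to the common factor $(1-\cos2\pi x)$, including the regime $x\to0^+$ where numerator and denominator both vanish. Everything else (existence, the scaling identity, the degeneracy estimate, and the row-extremality at $x=0$) is comparatively routine. Finally, having reduced to rectangular lattices, the precise location of $y_A$ and the limit $y_A\to+\infty$ can also be read off directly from the one-variable function $y\mapsto G_A(0,y)$.
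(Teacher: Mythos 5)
Your proposal is sound, and its architecture is genuinely different from the paper's, even though both rest on the same analytic core: Poisson summation in $m$ along the rows $n\neq 0$, positivity of the resulting Bessel-type coefficients, and the fact that the $s=12$ and $s=6$ row contributions carry the same order in $y$, so that the factor $2A^3$ eventually dominates. The paper argues in the opposite order. Proposition \ref{Rankinmethod} first establishes $\partial_x E_V(x,y,A)\geq 0$ on \emph{all} of $\mathcal{D}$ for large $A$, with equality only at $x=0$ --- citing Rankin's explicit formula for $\partial_x\zeta$, Rankin's inequality $\Lambda(k,y,3)>0$ valid down to $y=\sqrt{3}/2$, and Montgomery's positivity of $(k+1)\sin 2\pi x-\sin 2\pi(k+1)x$ --- which yields $x_A=0$ with no prior information on $y_A$; Proposition \ref{degrect} then treats the one-dimensional rectangular problem by comparison with the test point $y=A^{1/3}$ and an explicit quadratic inequality, which gives the sharper conclusion $X_1(A)^{1/3}\leq y_A\leq X_2(A)^{1/3}$, hence $y_A\leq CA$. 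You instead prove degeneracy \emph{first}, for an arbitrary (not necessarily rectangular) minimizer, via the soft comparison $\min G_A\leq -\kappa A^6$ plus boundedness of $\zeta(\cdot,\cdot;6)$ on compact slabs --- an argument with no counterpart in the paper --- and you then need the $x$-slice comparison only in the regime $y\geq Y_1$ large, where the Bessel asymptotics are clean and uniform. What your route buys: it dispenses with Rankin's positivity result on all of $\mathcal{D}$ and weakens the required uniformity to the asymptotic regime; what it loses: the explicit bounds on $y_A$ and the global monotonicity in $x$. The ``main obstacle'' you flag is real but surmountable exactly as you indicate: bound the numerator using $1-\cos(2\pi knx)\leq (kn)^2(1-\cos 2\pi x)$ together with the geometric decay $e^{-2\pi y|nk|}$ of the tail, and bound the denominator below by its $(n,k)=(\pm 1,\pm 1)$ terms; note that the paper's own proof is no more detailed at the corresponding point, merely asserting that $\tilde{\Lambda}(k,y,3)$ and $\tilde{\Lambda}(k,y,6)$ ``have the same order with respect to $y$''. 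One cosmetic correction: the common prefactor of the dominant terms is $y^{0}e^{-2\pi y}$ rather than $y^{1/2}e^{-2\pi y}$ (the power of $y$ cancels for every $s$), but since your argument uses only that the power is the same for $s=6$ and $s=12$, nothing changes.
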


\begin{figure}[!h]
\centering
\includegraphics[width=5cm]{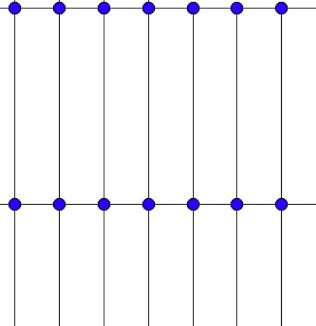} 
\caption{Degeneracy of the rectangular minimizer.}
\label{Conjrectdeg}
\end{figure}

The conjecture we give is actually comparable to the numerical study of Ho and Mueller \cite[Fig. 1 and 2]{Mueller:2002aa} about the two-component Bose-Einstein Condensates (see also the review \cite[Fig. 16]{ReviewvorticesBEC}). Indeed, $A^6 E_V(x,y,A)=\zeta_L(12)-A^3\zeta_L(6)$ is the sum of two terms with opposite behavior. The first one, $\zeta_L(12)$, is minimized by the triangular lattice and the second one, $-A^3\zeta_L(6)$, admits a degenerate minimizer. Thus, if $A$ is small enough, then the global behavior of $E_V(x,y,A)$ is similar to the one of $\zeta_L(12)$, i.e. the triangular lattice is the minimizer, and if $A$ is large enough, the minimizer must degenerate. That is precisely what appears in our results and numerics. We find exactly the same kind of terms in the energy studied by Ho and Mueller (see Section \ref{summary} for more explanations). Furthermore, according to the one-dimensional study of Ventevogel and Nijboer \cite{VN2}, it is reasonable to think that the behavior of $A\mapsto \argmin_{L_A} E_f[L_A]$ should be the same for a large class of potentials with a well and that are equivalent to a completely monotone function (see Definition \ref{defCMfct}) in a neighbourhood of the origin (which ensures the optimality of the triangular lattice at high density). By a Mellin transform argument (see e.g. \cite[Eq. (27)]{Coulangeon:kx}), it may be possible to derive some properties from the Lennard-Jones type potentials to some more general physically-relevant radially symmetric potentials, and maybe also for Ho-Mueller's lattice energy. In other words, we believe that the properties of the minimizers depicted in Figure \ref{CONJINTRO} could be universal for a large class of potentials and energies involving repulsion at short distance, equilibrium at finite distance and decay of the interaction at large distance to $0$. For example, another recent work \cite{Samaj12,TrizacWigner16} on Wigner bilayers presents a surprising similarity.

\medskip

Another question would be to generalize these results to non-radially symmetric potentials. For instance, it is possible to add an angle-dependent term to the energy. These types of potentials were used in \cite{ELi,Stef2} to prove the minimality of a honeycomb structure and in \cite{Stef1} to show the optimality of a square lattice configuration. In both cases, the radial part of the energy is an approximation of a Lennard-Jones type potential and the angle parts favour a certain geometry for the minimizer (square or hexagonal). Then, we can easily believe that our results (especially Theorem \ref{THintro3} on local minimality) can be applied to such models, once the angle parts is taken in order to favour the square or the triangular lattice. It is however absolutely not clear that our conjecture would stay true for this kind of potential. Furthermore, it has been recently shown in \cite{BetKnuepfspatiallyextended} that the (global and local) minimality results obtained for the lattice theta function stay true for interactions of radially symmetric masses. Therefore, it is natural to think that the result of our paper could stay true for Lennard-Jones interactions of masses which would be sufficiently concentrated around the lattice sites.

\medskip

\textbf{Plan of the paper:} The next section is devoted to the precise definition of the potentials, energy and lattices. In Section \ref{part2}, we compute the two first derivatives of our energy $E_f$ in the general case. Thus, we apply these results to Lennard-Jones type potentials $V_{a,t}^{LJ}$ in Section \ref{part3} and we prove Theorem \ref{THintro3}. In Section \ref{part4}, we study numerically $(x,y)\mapsto E_V(x,y,A)$ in the classical Lennard-Jones case $V(r)=r^{-6}-2r^{-3}$, especially among rhombic and rectangular lattices. Our conjecture is explained and justified in Section \ref{summary}.

\section{Lattices, parametrization and energies}\label{part1}

\subsection{Lattice parametrization and general energy}\label{secparam}
Let $L=\Z u\oplus \Z v\subset \R^2$ be a Bravais lattice. We say that $A>0$ is the area (or covolume) of $L$ if $|u\wedge v|=A$, i.e. the area of its primitive cell is $A$. If $L$ is of area $1/2$, we use the usual parametrization (see Rankin \cite{Rankin} or Montgomery \cite{Mont}) of $L$ by 
$$
(x,y)\in \mathcal{D}=\{(x,y)\in \R^2; 0\leq x \leq 1/2 , y>0 ; x^2+y^2\geq 1\},
$$
where $\mathcal{D}$ is the half fundamental modular domain. It actually corresponds to parametrize $u$ and $v$ by $(x,y)\in \mathcal{D}$ such that
$$
u=\left(\frac{1}{\sqrt{2y}},0  \right)\quad \text{and} \quad v=\left( \frac{x}{\sqrt{2y}},\sqrt{\frac{y}{2} } \right).
$$
Thus, a lattice $L_A$ of given area $A>0$ is uniquely parametrized by $u_A$ and $v_A$ such that
$$
L_A=\Z u_A\oplus \Z v_A:=\Z\left(\frac{\sqrt{A}}{\sqrt{y}},0  \right)\oplus \Z\left( \frac{x\sqrt{A}}{\sqrt{y}},\sqrt{A}\sqrt{y}\right),
$$
with $(x,y)\in \mathcal{D}$. The point $(x,y)$ parametrizes the shape of $L_A$ and $A$ it's inverse density. Furthermore, we have, for any $(m,n)\in \Z^2$,
$$
|mu_A+nv_A|^2=A\left[\frac{1}{y}(m+xn)^2+yn^2  \right],
$$
and all these values are the square of the distances from $(0,0)$ to the points of $L_A$. This function of $(m,n)$ is the quadratic form associated to $L_A$.

\medskip

We recall that the triangular lattice of area $A$ (also called ``hexagonal lattice" or ``Abrikosov lattice" in the context of Superconductivity \cite{Sandier_Serfaty}) is defined, up to rotation, by
\begin{equation}\label{deftriA}
\Lambda_A:=\sqrt{\frac{2A}{\sqrt{3}}}\left[\Z (1,0) \oplus \Z (1/2,\sqrt{3}/2)\right],
\end{equation}
and the square lattice of area $A$ is $\sqrt{A}\Z^2$. In Figure \ref{Parametrization}, we have represented the fundamental domain $\mathcal{D}$. The point $(0,1)$ corresponds to the square lattice $2^{-1/2}\Z^2$ of area $1/2$ and $(1/2,\sqrt{3}/2)$ corresponds to the triangular lattice $\Lambda_{1/2}$ of area $1/2$.

\medskip

\begin{figure}[!h]
\centering
\includegraphics[width=13cm]{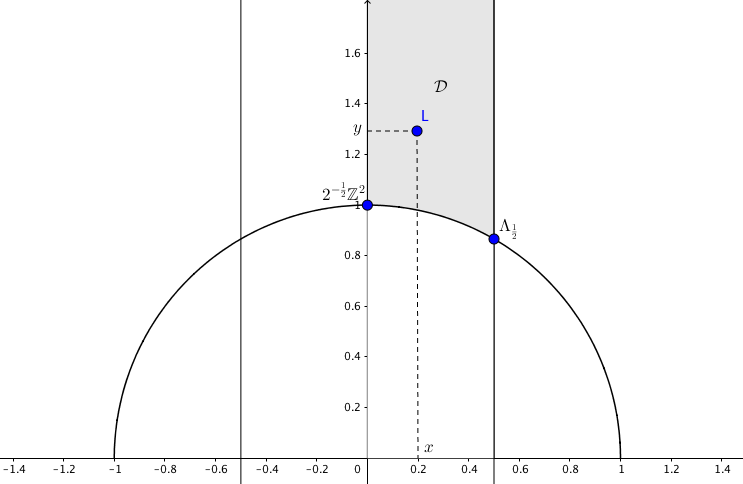}
\caption{Fundamental domain $\mathcal{D}$ and parametrization of a lattice $L$ by $(x,y)$.}
\label{Parametrization}
\end{figure}

We define the space of functions $\mathcal{F}$ by
\begin{equation}\label{deff}
\mathcal{F}:=\left\{f\in C^2(\R_+^*);\forall k\in\{0,1,2\}, |f^{(k)}(r)|=O(r^{-\eta_k-k}),\textnormal{for some } \eta_k>1  \right\}.
\end{equation}
Thus, for any $A>0$, for any Bravais lattice $L_A$ of area $A$ and any $f\in\mathcal{F}$, we define its $f$-energy by
$$
E_f[L_A]=E_f(x,y,A)=\sum_{p\in L_A\backslash\{0\}} f(|p|^2)=\sum_{m,n} f\left(A\left[\frac{1}{y}(m+xn)^2+yn^2  \right]   \right),
$$
where the sum is taken over all $(m,n)\in\Z^2\backslash\{(0,0)\}$. Throughout the paper, we do not specify if the summation is taken over $\Z^2$ or $\Z^2\backslash \{(0,0)\}$ because it will be obvious according to the definition of $f$. Furthermore, the possible value at the origin is the same for any Bravais lattice and then does not have any importance in our problem. Thus, the function $(x,y)\mapsto E_f(x,y,A)$ belongs to $C^2(\mathcal{D})$ and, for any $k\in\{1,2\}$, the $k$-th derivative of $E_f$ is
\begin{align*}
&\partial^{(k)} E_f(x,y,A)=\sum_{m,n}\partial^{(k)}f\left(A\left[\frac{1}{y}(m+xn)^2+yn^2  \right]   \right),
\end{align*}
with respect to any variables. Furthermore, the symmetry $E_f(-x,y,A)=E_f(x,y,A)$ justifies the fact that we study $(x,y)\mapsto E_f(x,y,A)$ in the half modular domain $\mathcal{D}$.

\subsection{Rhombic and rectangular lattices}\label{rhombrectdef}

\begin{defi}[Rhombic lattice]
We say that a Bravais lattice $L_A=\Z u_A \oplus \Z v_A$, parametrized by $(x,y,A)$, is rhombic if it is generated by two vectors of the same length $|u_A|=|v_A|$, which is equivalent to $x^2+y^2=1$. In particular, if $L_A$ is rhombic, then there exists $\theta\in \left[ 60^\circ,90^\circ \right]$ such that $x=\cos\theta$ and $y=\sin\theta$. Thus, we define, for any $f\in\mathcal{F}$, any $60^\circ\leq \theta \leq 90^\circ$ and any $A>0$,
$$
E_f(\theta,A):=E_f(\cos\theta,\sin\theta,A).
$$
\end{defi}

\begin{lemma}If $L_A=\Z u_A \oplus \Z v_A$ is rhombic and $(x,y)=(\cos\theta,\sin\theta)$, then $(\widehat{u_A,v_A})=\theta$.
\end{lemma}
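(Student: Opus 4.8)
The plan is to compute directly the cosine of the angle $(\widehat{u_A,v_A})$ from the explicit parametrization and to identify it with $\cos\theta$. First I would recall from Section \ref{secparam} that
$$
u_A=\left(\frac{\sqrt{A}}{\sqrt{y}},0  \right)\quad \text{and} \quad v_A=\left( \frac{x\sqrt{A}}{\sqrt{y}},\sqrt{A}\sqrt{y}\right),
$$
and compute the Euclidean inner product $u_A\cdot v_A=Ax/y$ together with the squared norms $|u_A|^2=A/y$ and $|v_A|^2=Ax^2/y+Ay=A(x^2+y^2)/y$.

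Next I would invoke the rhombic hypothesis in the form $x^2+y^2=1$. This immediately gives $|v_A|^2=A/y=|u_A|^2$, so that both generators have length $\sqrt{A/y}$ (which re-proves $|u_A|=|v_A|$), and the standard formula for the angle between two nonzero vectors yields
$$
\cos\left(\widehat{u_A,v_A}\right)=\frac{u_A\cdot v_A}{|u_A|\,|v_A|}=\frac{Ax/y}{A/y}=x.
$$
Substituting $x=\cos\theta$ then produces $\cos(\widehat{u_A,v_A})=\cos\theta$.

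Finally I would conclude that $(\widehat{u_A,v_A})=\theta$ by inverting the cosine. Both quantities lie in $\left[\frac{\pi}{3},\frac{\pi}{2}\right]$: the geometric angle between two vectors always lies in $[0,\pi]$, and $\theta$ is restricted to $\left[\frac{\pi}{3},\frac{\pi}{2}\right]$ by the definition of the rhombic parametrization. On this interval the cosine is strictly decreasing, hence injective, so equality of the cosines forces equality of the angles.

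There is essentially no obstacle here, as the argument reduces to a one-line computation once the parametrization is substituted and the rhombic relation $x^2+y^2=1$ is used. The only point deserving a word of care is the range of the angle, which is needed so that the cosine may be inverted unambiguously; this is guaranteed by the constraint $\theta\in\left[\frac{\pi}{3},\frac{\pi}{2}\right]$ imposed in the definition.
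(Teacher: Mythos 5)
Your proposal is correct and follows essentially the same route as the paper's proof: compute $u_A\cdot v_A=Ax/y$ and the norms from the explicit parametrization, use the rhombic relation $x^2+y^2=1$ to get $\cos\bigl(\widehat{u_A,v_A}\bigr)=x=\cos\theta$, and invert the cosine using the admissible range of angles. Your write-up is just slightly more explicit about the injectivity of cosine on $[0,\pi]$, which the paper compresses into the remark that $\theta\in[\pi/3,\pi/2]$.
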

\begin{proof}
This is clear because, since $L$ is rhombic, we have
$$
u_A\cdot v_A=\frac{A x}{y}=|u_A||v_A|\cos(\widehat{u_A,v_A})=A\frac{\sqrt{x^2+y^2}}{y}\cos(\widehat{u_A,v_A})=\frac{A\cos(\widehat{u_A,v_A})}{y}.
$$
Therefore $\cos(\widehat{u_A,v_A})=x=\cos\theta$ and $(\widehat{u_A,v_A})=\theta$ because $\theta\in [60^\circ,90^\circ]$.
\end{proof}

\begin{defi}[Rectangular lattice]
We say that a Bravais lattice $L_A$, parametrized by $(x,y,A)$, is rectangular if its primitive cell is a rectangle, i.e. $u_A\bot v_A$ or, equivalently, if $x=0$ and $y\geq 1$. Thus, we define, for any $f\in\mathcal{F}$, any $y\geq 1$ and any $A>0$,
$$
E_f(y,A):=E_f(0,y,A).
$$
\end{defi}

\begin{remark}
If $L_A$ is rectangular, then it is generated by  $\displaystyle u_A=\sqrt{A}\left(\frac{1}{\sqrt{y}},0\right)$ and $\displaystyle v_A=\sqrt{A}\left(0,\sqrt{y}  \right)$.
\end{remark}

\section{Computation of the first and the second derivatives of $E_f$}\label{part2}
In this part, we compute the first and second derivatives of $(x,y)\mapsto E_f(x,y,A)$ with respect to $x$ and $y$, for fixed $A>0$. We do not give all the details of the computations, but only the key points. 
\subsection{First derivatives}\label{firstderiv}

The following results stay true if there is no condition for the second derivative of $f$. Furthermore, we will find again a result of Coulangeon and Sch\"urmann \cite[Thm. 4.4.(1)]{Coulangeon:2010uq} in the simple two-dimensional case: the square lattice and the triangular lattice are both critical points of $L_A\mapsto E_f[L_A]$ for any $A>0$. Indeed, it turns out (see e.g. \cite{Martinet}) that each shell (or layer) of $\Lambda_A$ and $\sqrt{A}\Z^2$ is a spherical $2$-design, i.e. for any shell $\mathcal{S}$ of $\Lambda_A$ or $\sqrt{A}\Z^2$ that belongs to the circle $C_r$ of radius $r$ we have
$$
\frac{1}{2\pi r}\int_{C_r} p(x)dx=\frac{1}{\sharp \mathcal{S}}\sum_{x\in \mathcal{S}} p(x)
$$
for any polynomial $p$ of degree up to $2$.
 
\begin{prop}\label{deriv}
We have, for any $f\in \mathcal{F}$, any $A>0$ and any $(x,y)\in \mathcal{D}$,
\begin{align*}
&\partial_x E_f(x,y,A)=\frac{2A}{y}\sum_{m,n} (mn+n^2x)f'\left(A\left[\frac{1}{y}(m+xn)^2+yn^2  \right]   \right),\\
&\partial_y E_f(x,y,A)=-\frac{A}{y^2}\sum_{m,n} (m^2+2xmn+(x^2-y^2)n^2)f'\left(A\left[\frac{1}{y}(m+xn)^2+yn^2  \right]   \right).
\end{align*}
\end{prop}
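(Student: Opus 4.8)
The plan is to differentiate the series defining $E_f$ term by term and then resum, the only genuine issue being the justification of this interchange. First I would fix $A>0$ and write each summand as $f(g_{m,n}(x,y))$, where
$$
g_{m,n}(x,y) = A\left[\frac{1}{y}(m+xn)^2 + yn^2\right].
$$
A direct computation of the inner partial derivatives gives $\partial_x g_{m,n} = \frac{2An}{y}(m+xn) = \frac{2A}{y}(mn + n^2 x)$ and, after expanding $(m+xn)^2 = m^2 + 2mxn + x^2n^2$,
$$
\partial_y g_{m,n} = A\left[-\frac{1}{y^2}(m+xn)^2 + n^2\right] = -\frac{A}{y^2}\left(m^2 + 2xmn + (x^2-y^2)n^2\right).
$$
Applying the chain rule to each term, $\partial_x[f(g_{m,n})] = f'(g_{m,n})\,\partial_x g_{m,n}$ and likewise in $y$, and summing over $(m,n)\in\Z^2\setminus\{(0,0)\}$ produces exactly the two claimed formulas.

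The step that requires care is the interchange of differentiation and the infinite sum, i.e. showing that the formally differentiated series converge uniformly on compact subsets of $\mathcal{D}$. Here I would exploit the growth conditions built into $\mathcal{F}$ in \eqref{deff}. On any compact $K\subset\mathcal{D}$ the variable $y$ stays in a fixed interval $[y_-,y_+]\subset(0,\infty)$ and $x$ in $[0,1/2]$, so the positive definite form $\frac{1}{y}(m+xn)^2 + yn^2$ is bounded below by $c\,(m^2+n^2)$ for some $c=c(K)>0$ uniformly on $K$; hence $g_{m,n}\geq Ac\,|(m,n)|^2$. Since $|f'(r)| = O(r^{-\eta_1-1})$ with $\eta_1>1$, the factor $f'(g_{m,n})$ is $O\big(|(m,n)|^{-2\eta_1-2}\big)$, while the polynomial prefactors $mn+n^2x$ and $m^2+2xmn+(x^2-y^2)n^2$ are $O\big(|(m,n)|^2\big)$ uniformly on $K$. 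Each differentiated summand is therefore $O\big(|(m,n)|^{-2\eta_1}\big)$ uniformly on $K$, and since $2\eta_1>2$ the series $\sum_{(m,n)}|(m,n)|^{-2\eta_1}$ converges. By the Weierstrass $M$-test the differentiated series converge uniformly on $K$, which legitimizes differentiating $E_f$ term by term and yields the proposition.

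I expect this uniform-convergence argument to be the main (and essentially the only) obstacle; the chain-rule computation itself is routine. The point worth emphasizing is that the exponent condition $\eta_1>1$ in the definition of $\mathcal{F}$ is exactly what is needed: it guarantees absolute and locally uniform convergence of the first-derivative series once the quadratic prefactor has been absorbed, just as the companion bound with $\eta_0>1$ already guarantees convergence of $E_f$ itself. An entirely analogous estimate, now invoking $\eta_2>1$, will be what makes the term-by-term computation of the second derivatives in the subsequent results legitimate as well, which is why the three separate growth conditions appear in \eqref{deff}.
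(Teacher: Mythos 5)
Your proof is correct and takes essentially the same route as the paper, which in fact states Proposition \ref{deriv} without any written proof: the formulas are treated as a direct chain-rule computation, with the term-by-term differentiation justified implicitly by the decay conditions in the definition \eqref{deff} of $\mathcal{F}$ (this is exactly the assertion $\partial^{(k)} E_f(x,y,A)=\sum_{m,n}\partial^{(k)}f(\cdot)$ made in Section \ref{secparam}). Your uniform lower bound $\frac{1}{y}(m+xn)^2+yn^2\geq c(m^2+n^2)$ on compact subsets of $\mathcal{D}$ together with the Weierstrass $M$-test supplies precisely the details the paper leaves to the reader, so nothing is missing.
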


\begin{prop}\label{D1xy}
For any $A>0$ and any $f\in\mathcal{F}$, $(0,1)$ is a critical point of $(x,y)\mapsto E_f(x,y,A)$.
\end{prop}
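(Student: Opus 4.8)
The plan is to deduce everything directly from the first-derivative formulas already established in Proposition \ref{deriv}, so no new computation of derivatives is needed. First I would specialize those two expressions to the point $(x,y)=(0,1)$. Setting $x=0$ and $y=1$ collapses the argument of $f'$ to the rotationally and coordinate-symmetric quantity $A[m^2+n^2]$, and leaves
$$
\partial_x E_f(0,1,A)=2A\sum_{m,n} mn\, f'\!\left(A[m^2+n^2]\right),\qquad
\partial_y E_f(0,1,A)=-A\sum_{m,n} (m^2-n^2)\, f'\!\left(A[m^2+n^2]\right),
$$
the sums being over $(m,n)\in\Z^2\backslash\{(0,0)\}$. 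The entire argument then reduces to showing that each of these two lattice sums vanishes by an involution of the index set that fixes the argument of $f'$ but flips the sign of the coefficient.

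For $\partial_x$ I would use the sign change $(m,n)\mapsto(-m,n)$, which is a bijection of $\Z^2\backslash\{(0,0)\}$ leaving $m^2+n^2$ invariant while sending the prefactor $mn$ to $-mn$. Pairing each term with its image therefore cancels the sum term by term, giving $\partial_x E_f(0,1,A)=0$. For $\partial_y$ I would instead swap the two coordinates, $(m,n)\mapsto(n,m)$: this again preserves $m^2+n^2$ but sends $m^2-n^2$ to $n^2-m^2=-(m^2-n^2)$, so the sum equals its own negative and hence is zero (the diagonal terms $m=n$ already having zero coefficient, so that the involution's fixed points cause no trouble). This yields $\partial_y E_f(0,1,A)=0$, and the two vanishing partials are exactly the assertion that $(0,1)$ is a critical point.

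The only genuinely quantitative point, and the one I would make sure to state carefully, is that these manipulations require absolute convergence so that the rearrangement/pairing is legitimate. Here the hypothesis $f\in\mathcal{F}$ from \eqref{deff} does the work: with $|f'(r)|=O(r^{-\eta_1-1})$ and $\eta_1>1$, each summand is bounded by a constant times $(m^2+n^2)^{-\eta_1}$, since the coefficients $mn$ and $m^2-n^2$ grow at most like $m^2+n^2$; and $\sum_{(m,n)}(m^2+n^2)^{-\eta_1}$ converges precisely because $\eta_1>1$. Thus both series converge absolutely and the involutions may be applied freely. I do not expect any real obstacle beyond recording this convergence check; conceptually the statement is just the $2$-design symmetry of $\sqrt{A}\,\Z^2$ made explicit, as already noted before the proposition.
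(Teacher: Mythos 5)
Your proposal is correct and follows essentially the same route as the paper's own proof: both specialize the formulas of Proposition \ref{deriv} to $(0,1)$, annihilate $\partial_x E_f(0,1,A)$ via the pairing $(m,n)\mapsto(-m,n)$, and annihilate $\partial_y E_f(0,1,A)$ via the exchange of variables $(m,n)\mapsto(n,m)$. The absolute-convergence check you record is a useful precision that the paper leaves implicit in the decay conditions defining $\mathcal{F}$, but it is not a departure in method.
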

\begin{proof}
By Proposition \ref{deriv}, we get
\begin{align*}
&\partial_x E_f(0,1,A)=2A\sum_{m,n} mn f'\left(A\left[m^2+n^2 \right]   \right),\\
&\partial_y E_f(0,1,A)=-A\sum_{m,n} (m^2-n^2)f'\left(A\left[m^2+n^2 \right]   \right).
\end{align*}
The first sum is equal to zero by pairing $(m,n)$ and $(-m,n)$. The second is equal to zero because
$$
\sum_{m,n} m^2 f(A[m^2+n^2])=\sum_{m,n} n^2 f(A[m^2+n^2])
$$
by exchange of variables.
\end{proof}

\begin{lemma}\label{sumlatticetri}
For any $(m,n)\in \Z^2\backslash\{(0,0) \}$, let $q(m,n)=m^2+mn+n^2$ and $F:\R\to \R$ be such that the following sums are convergent, then
\begin{equation}\label{LATSUM1}
\sum_{m,n}mn F(q(m,n))=-\frac{1}{2}\sum_{m,n} n^2 F(q(m,n)),
\end{equation}
\begin{equation}\label{LATSUM2}
\sum_{m,n}n^3m F(q(m,n))=-\frac{1}{2}\sum_{m,n}n^4 F(q(m,n)),
\end{equation}
\begin{equation}\label{LATSUM3}
\sum_{m,n}m^2n^2 F(q(m,n))=\frac{1}{2}\sum_{m,n}n^4 F(q(m,n)).
\end{equation}
\end{lemma}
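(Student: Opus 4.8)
The plan is to exploit the fact that $q(m,n)=m^2+mn+n^2$ is the norm form of the triangular lattice and is therefore invariant under the rotational symmetry of that lattice. Concretely, I would introduce the linear map $\phi(m,n)=(m+n,-m)$, whose matrix $\bigl(\begin{smallmatrix}1&1\\-1&0\end{smallmatrix}\bigr)$ lies in $\mathrm{GL}_2(\Z)$, so that $\phi$ is a bijection of $\Z^2$ fixing the origin and hence a bijection of $\Z^2\setminus\{(0,0)\}$. A one-line expansion shows $q(\phi(m,n))=(m+n)^2+(m+n)(-m)+m^2=m^2+mn+n^2=q(m,n)$, i.e. $\phi$ preserves $q$.

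The key observation is then that for any polynomial weight $P$ one has $\sum_{m,n}P(m,n)F(q(m,n))=\sum_{m,n}P(\phi(m,n))F(q(\phi(m,n)))=\sum_{m,n}P(\phi(m,n))F(q(m,n))$, the first step being the reindexing of the sum along the bijection $\phi$. To avoid any issue with conditional convergence I would group the sum by the finite level sets $\{q=N\}$: since $\phi$ preserves $q$ and is a global bijection, it permutes each shell, so on each shell $F(q)$ is constant and the identity is a finite reindexing; summing the shells back gives the invariance. Writing $\Sigma[P]:=\sum_{m,n}P(m,n)F(q(m,n))$, this reads $\Sigma[P]=\Sigma[P\circ\phi]$.

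With this tool, \eqref{LATSUM1} follows at once from the degree-two weight $P=m^2$: since $P\circ\phi=(m+n)^2=m^2+2mn+n^2$, invariance gives $\Sigma[m^2]=\Sigma[m^2]+2\Sigma[mn]+\Sigma[n^2]$, i.e. $2\Sigma[mn]+\Sigma[n^2]=0$. For \eqref{LATSUM2} and \eqref{LATSUM3} I would apply the same invariance to the degree-four monomials, expanding each $P\circ\phi$ by the binomial theorem. The weight $P=n^4$ gives $\Sigma[n^4]=\Sigma[m^4]$; the weight $P=m^2n^2$ gives $\Sigma[m^4]+2\Sigma[m^3n]=0$, hence $\Sigma[m^3n]=-\tfrac12\Sigma[m^4]$; the weight $P=mn^3$ gives $\Sigma[mn^3]=-\Sigma[m^4]-\Sigma[m^3n]=-\tfrac12\Sigma[m^4]$, which combined with $\Sigma[m^4]=\Sigma[n^4]$ is exactly \eqref{LATSUM2}; finally the weight $P=m^4$ gives $4\Sigma[m^3n]+6\Sigma[m^2n^2]+4\Sigma[mn^3]+\Sigma[n^4]=0$, and substituting the values just found yields $6\Sigma[m^2n^2]=3\Sigma[m^4]$, that is $\Sigma[m^2n^2]=\tfrac12\Sigma[n^4]$, which is \eqref{LATSUM3}.

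I expect no serious obstacle: the whole content is selecting the correct symmetry $\phi$ of the triangular form and then solving the small linear system that its invariance imposes on the monomial sums. The only points requiring care are the justification of the term rearrangement, cleanly handled by the shell-by-shell grouping since $\phi$ permutes each finite level set $\{q=N\}$, and the verification that the chosen weights produce enough independent relations to pin down all of $\Sigma[m^3n]$, $\Sigma[m^2n^2]$, $\Sigma[mn^3]$ in terms of $\Sigma[m^4]=\Sigma[n^4]$; one may optionally cross-check consistency with the remaining weight $P=m^3n$.
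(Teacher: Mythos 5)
Your proof is correct and follows essentially the same route as the paper: both rest on reindexing the sum by a unimodular substitution preserving $q$ (you use $\phi(m,n)=(m+n,-m)$; the paper uses $(m,n)\mapsto(-m-n,n)$ together with the swap $(m,n)\mapsto(n,m)$) and then solving the resulting linear relations among the monomial-weighted sums. Your shell-by-shell justification of the rearrangement is a welcome extra precision that the paper's proof leaves implicit.
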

\begin{proof}
The key point is the fact that, for any $(m,n)\in\Z^2\backslash\{(0,0) \}$, 
$$
q(-m-n,n)=q(m,n).
$$
Consequently, we get
\begin{align*}
\sum_{m,n}mnF(q(m,n))&=\sum_{m,n}(-m-n)n F(q(m,n))\\
&=-\sum_{m,n}mnF(q(m,n))-\sum_{m,n}n^2F(q(m,n)),
\end{align*}
and \eqref{LATSUM1} is proved. For the second equality, we compute
\begin{align*}
\sum_{m,n}mn^3F(q(m,n))&=\sum_{m,n}n^3(-m-n)F(q(m,n))\\
&=-\sum_{m,n}mn^3F(q(m,n))-\sum_{m,n}n^4F(q(m,n)),
\end{align*}
and \eqref{LATSUM2} is proved. For the last one, we remark that, using $q(m,n)=q(n,m)$,
\begin{align*}
\sum_{m,n}n^4F(q(m,n))&=\sum_{m,n}(-m-n)^4F(q(m,n))\\
&=\sum_{m,n}(2m^4+6m^2n^2+8m^3n)F(q(m,n)),
\end{align*}
and it follows that
$$
\sum_{m,n}m^2n^2 F(q(m,n))=-\frac{1}{6}\sum_{m,n}n^4 F(q(m,n))-\frac{4}{3}\sum_{m,n}m^3n F(q(m,n)).
$$
Combining this equality with \eqref{LATSUM2}, we get the result.
\end{proof}

\begin{prop}\label{triangCM}
For any $A>0$ and any $f\in \mathcal{F}$, $\left(\frac{1}{2},\frac{\sqrt{3}}{2}   \right)$ is a critical point of $(x,y)\mapsto E_f(x,y,A)$.
\end{prop}
\begin{proof}
Using Proposition \ref{deriv}, we obtain
$$
\partial_x E_f\left( \frac{1}{2},\frac{\sqrt{3}}{2},A  \right)=\frac{4A}{\sqrt{3}}\sum_{m,n}\left(mn+\frac{n^2}{2}  \right)f'\left(\frac{2A}{\sqrt{3}}[m^2+mn+n^2]   \right)
$$
and
$$
\partial_y E_f\left( \frac{1}{2},\frac{\sqrt{3}}{2} ,A \right)=-\frac{2A}{3}\sum_{m,n}\left(m^2+mn-\frac{n^2}{2}\right)f'\left(\frac{2A}{\sqrt{3}}[m^2+mn+n^2]   \right).
$$
We remark, exchanging $m$ and $n$, that
$$
\partial_y E_f\left( \frac{1}{2},\frac{\sqrt{3}}{2},A  \right)=-\frac{2A}{3}\sum_{m,n}\left(mn+\frac{n^2}{2}  \right)f'\left(\frac{2A}{\sqrt{3}}[m^2+mn+n^2]   \right)=-\frac{\partial_x E_f\left( \frac{1}{2},\frac{\sqrt{3}}{2},A  \right)}{2\sqrt{3}}.
$$
Thus, by \eqref{LATSUM1}, we get
$$
\sum_{m,n}mn f'\left(\frac{2A}{\sqrt{3}}[m^2+mn+n^2] \right)=-\frac{1}{2}\sum_{m,n}n^2 f'\left(\frac{2A}{\sqrt{3}}[m^2+mn+n^2] \right),
$$
i.e.
$$
\sum_{m,n}\left(mn+\frac{n^2}{2}  \right)f'\left(\frac{2A}{\sqrt{3}}[m^2+mn+n^2]   \right)=0,
$$
and the result is proved.
\end{proof}

Now we recall a simple application of Montgomery results \cite{Mont} to the case of completely monotone interacting potentials. 

\begin{defi}\label{defCMfct}
We say that $f:(0,\infty)\to \R$ is completely monotone if, for any $k\in \N$ and any $r>0$, $(-1)^k f^{(k)}(r)\geq 0$.
\end{defi}

\begin{prop}\label{genmgt} (\cite{Mont})
If $f\in \mathcal{F}$ is completely monotone, then for any $A>0$ and for any $(x,y)$ such that $0<x<1/2$ and $y>\sqrt{3}/2$, we have 
$$
\partial_x E_f(x,y,A)<0 \quad \textnormal{and} \quad  \partial_y E_f(x,y,A)>0.
$$ 
In particular, $(x,y)=(1/2,\sqrt{3}/2)$ is the only minimizer of $(x,y)\mapsto  E_f(x,y,A)$ and $x=(0,1)$ is a saddle point. Furthermore, this function has no other critical point.
\end{prop}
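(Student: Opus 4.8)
The plan is to reduce the completely monotone case to Montgomery's theorem for the Gaussian (theta) interaction via the Bernstein--Widder representation. By the Bernstein--Widder theorem a completely monotone $f$ admits a representation $f(r)=\int_0^{+\infty}e^{-rt}\,d\mu(t)$ for a non-negative Borel measure $\mu$; the decay built into \eqref{deff} forces $\mu(\{0\})=0$ and ensures $\mu$ is not the zero measure (a constant $f$ does not decay, hence is excluded from $\mathcal F$). Substituting this into the definition of $E_f$ and applying Tonelli's theorem (all terms are non-negative), I would write
\[
E_f(x,y,A)=\int_0^{+\infty}\Big(\sum_{m,n}e^{-tA[\frac{1}{y}(m+xn)^2+yn^2]}\Big)\,d\mu(t)=\int_0^{+\infty}\big(\Theta(x,y;tA)-1\big)\,d\mu(t),
\]
where $\Theta(x,y;c):=\sum_{(m,n)\in\Z^2}e^{-c[\frac{1}{y}(m+xn)^2+yn^2]}$ is, up to the convention $c=\pi\alpha$ on its parameter, exactly the lattice theta function $\theta_{L_A}$ studied by Montgomery.

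The next step is to differentiate under the integral sign. Since each $f^{(k)}$, $k\in\{1,2\}$, is again completely monotone up to sign and satisfies the decay of \eqref{deff}, the families $\{\partial_x\Theta(\cdot;tA)\}$ and $\{\partial_y\Theta(\cdot;tA)\}$ admit, locally uniformly in $(x,y)$, a $\mu$-integrable majorant, so dominated convergence gives
\[
\partial_x E_f(x,y,A)=\int_0^{+\infty}\partial_x\Theta(x,y;tA)\,d\mu(t),\qquad \partial_y E_f(x,y,A)=\int_0^{+\infty}\partial_y\Theta(x,y;tA)\,d\mu(t).
\]
I would then invoke Montgomery's monotonicity result \cite{Mont}: for every fixed $c>0$ and every $(x,y)$ with $0<x<1/2$ and $y>\sqrt3/2$ one has $\partial_x\Theta(x,y;c)<0$ and $\partial_y\Theta(x,y;c)>0$. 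Because $\mu\ge0$ carries positive mass on $(0,+\infty)$ and each integrand has a strict sign there, the two integrals inherit $\partial_x E_f<0$ and $\partial_y E_f>0$ on the open region, which is the first assertion.

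For the critical-point structure I would argue as follows. On $\mathcal D$ the constraints $0\le x\le1/2$ and $x^2+y^2\ge1$ force $y\ge\sqrt3/2$, with equality only at $(1/2,\sqrt3/2)$; hence every point of $\mathcal D$ with $0<x<1/2$ satisfies $y>\sqrt3/2$ and, by the above, has $\partial_x E_f\ne0$, so there is no critical point in the interior. Using that $E_f$ strictly decreases when $x$ increases and strictly increases when $y$ increases, its infimum over $\mathcal D$ can only be attained at the unique corner $(1/2,\sqrt3/2)$, which is a critical point by Proposition~\ref{triangCM}, giving uniqueness of the triangular minimizer. Finally $(0,1)$ is a critical point by Proposition~\ref{D1xy}, and the symmetry $E_f(-x,y,A)=E_f(x,y,A)$ together with $\partial_x E_f<0$ for $x>0$ yields $E_f(x,1,A)<E_f(0,1,A)$ for small $x\ne0$, so $(0,1)$ is a strict local maximum in the $x$-direction; since the Bernstein superposition also transfers to $E_f$ the fact that $y=1$ minimizes the one-dimensional rectangular section $y\mapsto\Theta(0,y;c)$ for each $c>0$, the point $(0,1)$ is a strict local minimum in the $y$-direction, hence a saddle.

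The main obstacle I anticipate is not the Bernstein reduction itself but the two analytic inputs it rests on. First, the rigorous differentiation under the $\mu$-integral must be matched to the exponents $\eta_k>1$ of \eqref{deff} so that the dominating function is genuinely $\mu$-integrable near $t=0$. Second, one must cite and use Montgomery's monotonicity inequalities precisely, including the one-dimensional statement that $y=1$ minimizes the rectangular theta section, which is exactly what pins down the \emph{saddle} (rather than extremal) nature of the square lattice: note that the Hessian formulas of Proposition~\ref{THintro2} alone do not settle this, since for a general completely monotone $f$ both $\partial^2_{xx}E_f(0,1,A)$ and $\partial^2_{yy}E_f(0,1,A)$ are the difference of a negative $f'$-term and a positive $f''$-term, of a priori indeterminate sign.
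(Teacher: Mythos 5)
Your proposal is correct and takes essentially the same route as the paper: the paper's own proof is precisely the one-line observation that the Bernstein--Widder representation $f(r)=\int_0^{+\infty}e^{-rt}\,d\mu(t)$, combined with Montgomery's monotonicity lemmas \cite[Lem. 4 and 7]{Mont} for the theta function, yields the claim. Your write-up simply supplies the details the paper leaves implicit (Tonelli, differentiation under the $\mu$-integral, strictness from the positive mass of $\mu$, and the saddle/uniqueness deductions), so it is a fleshed-out version of the same argument rather than a different one.
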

\begin{proof}
It is clear by Montgomery results \cite[Lem. 4 and 7]{Mont} and the fact (see e.g. \cite[Section 3]{BetTheta15}) that any completely monotone function $f$ can be written as the Laplace transform of a positive Borel measure $\mu$ on $[0,+\infty)$, i.e.
$$
f(r)=\int_0^{+\infty} e^{-rt}d\mu(t).
$$
\end{proof}

\begin{examples}
In particular, the previous proposition holds for the Epstein zeta function and the theta functions defined by \eqref{defEpsttheta}.
\end{examples}

\subsection{Second derivatives}

\begin{prop}
For any $A>0$, any $f\in\mathcal{F}$ and any $(x,y)\in\mathcal{D}$, we have
\begin{align*}
\partial^2_{xx}E_f(x,y,A)=&\frac{2A}{y}\sum_{m,n}n^2 f'\left(A\left[\frac{1}{y}(m+xn)^2+yn^2  \right]   \right)\\
&+\frac{4A^2}{y^2}\sum_{m,n}(mn+n^2x)^2f''\left(A\left[\frac{1}{y}(m+xn)^2+yn^2  \right]   \right),
\end{align*}

\begin{align*}
\partial^2_{yy}E_f(x,y,A)=&\frac{2A}{y^3}\sum_{m,n}(m+xn)^2f'\left(A\left[\frac{1}{y}(m+xn)^2+yn^2  \right]   \right)\\
&+A^2\sum_{m,n}\left(n^2-\frac{(m+xn)^2}{y^2}  \right)^2f''\left(A\left[\frac{1}{y}(m+xn)^2+yn^2  \right]   \right),
\end{align*}
and

\begin{align*}
\partial^2_{xy}E_f(x,y,A)=&-\frac{2A}{y^2}\sum_{m,n}(mn+n^2x)f'\left(A\left[\frac{1}{y}(m+xn)^2+yn^2  \right]   \right)\\
&+\frac{2A^2}{y}\sum_{m,n}(mn+n^2x)\left( n^2-\frac{(m+xn)^2}{y^2}  \right)f''\left(A\left[\frac{1}{y}(m+xn)^2+yn^2  \right]   \right).
\end{align*}
In particular, if $(x,y)\in\mathcal{D}$ is a critical point of $(x,y)\mapsto E_f(x,y,A)$, then
\begin{equation*}
\partial^2_{xy}E_f(x,y,A)=\frac{2A^2}{y}\sum_{m,n}(mn+n^2x)\left( n^2-\frac{(m+xn)^2}{y^2}  \right)f''\left(A\left[\frac{1}{y}(m+xn)^2+yn^2  \right]   \right).
\end{equation*}
\end{prop}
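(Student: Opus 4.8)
The plan is to obtain all three second derivatives by differentiating term by term the first-derivative formulas of Proposition \ref{deriv}, a step justified by the decay hypotheses $\eta_k>1$ in the definition \eqref{deff} of $\mathcal{F}$, which (as already recorded just after \eqref{deff}) make $(x,y)\mapsto E_f(x,y,A)$ of class $C^2(\mathcal{D})$ and legitimise differentiation under the sum. I would write $Q=Q(m,n,x,y)=A\left[\frac{1}{y}(m+xn)^2+yn^2\right]$ for the common argument of $f$, and first record the two partial derivatives of this quadratic form:
$$
\partial_x Q=\frac{2A}{y}(mn+n^2x),\qquad \partial_y Q=A\left(n^2-\frac{(m+xn)^2}{y^2}\right),
$$
obtained by direct differentiation, noting for $\partial_y Q$ that $y$ enters both through the factor $1/y$ and through the factor $y$ multiplying $n^2$.

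Then I would apply the product rule to each expression of Proposition \ref{deriv}. For $\partial^2_{xx}$ I differentiate $\frac{2A}{y}\sum(mn+n^2x)f'(Q)$ in $x$: the coefficient $mn+n^2x$ contributes $n^2 f'(Q)$, while $f'(Q)$ contributes $f''(Q)\partial_x Q$, and inserting $\partial_x Q$ produces the two stated sums with prefactors $\frac{2A}{y}$ and $\frac{4A^2}{y^2}$. The mixed derivative $\partial^2_{xy}$ comes from differentiating the same $x$-derivative in $y$: the prefactor $\frac{2A}{y}$ yields the $f'$-term with coefficient $-\frac{2A}{y^2}$, and differentiating $f'(Q)$ yields the $f''$-term through $\partial_y Q$. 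For $\partial^2_{yy}$ I would first rewrite $\partial_y E_f$ in the equivalent form $A\sum\left(n^2-\frac{(m+xn)^2}{y^2}\right)f'(Q)$ (which matches Proposition \ref{deriv} after expanding $(m+xn)^2$); differentiating the bracket gives $\frac{2(m+xn)^2}{y^3}$, and differentiating $f'(Q)$ reproduces the same bracket via $\partial_y Q$, producing the square in the $f''$-term.

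The only genuine care is the bookkeeping in the $y$-direction, where $y$ appears simultaneously in the explicit prefactors and inside $Q$, so each $y$-derivative splits into two contributions; keeping $\partial_y E_f$ in the rewritten form above makes this split transparent and avoids re-expanding $(m+xn)^2$ repeatedly. Finally, for the critical-point identity \eqref{Dxycritpoint} I would simply use that a critical point satisfies $\partial_x E_f(x,y,A)=0$, i.e. $\sum(mn+n^2x)f'(Q)=0$ by Proposition \ref{deriv}; this annihilates the first ($f'$) sum in the expression for $\partial^2_{xy}E_f$ and leaves exactly the stated formula. No convergence issue beyond the one already guaranteed by membership in $\mathcal{F}$ arises, so the argument is a controlled chain-rule computation rather than an estimate.
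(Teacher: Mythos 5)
Your proposal is correct and follows exactly the paper's route: the paper's proof is simply ``clear by direct computation'' plus the observation that $\partial_x E_f(x,y,A)=0$ at a critical point kills the $f'$-sum in $\partial^2_{xy}E_f$, and your chain-rule computation with $\partial_x Q=\frac{2A}{y}(mn+n^2x)$, $\partial_y Q=A\bigl(n^2-\frac{(m+xn)^2}{y^2}\bigr)$ is precisely that computation spelled out. Your justification of term-by-term differentiation via the decay conditions defining $\mathcal{F}$ matches what the paper records right after \eqref{deff}, so nothing is missing.
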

\begin{proof}
It is a straightforward computation. The last point follows from the equation $\partial_x E_f(x,y,A)=0$ and the expression of $\partial^2_{xy}E_f(x,y,A)$.
\end{proof}

\begin{corollary}\label{Hessienne01}
Let $A>0$ and $f\in\mathcal{F}$, then the second derivatives of $(x,y)\mapsto E_f(x,y,A)$ at point $(0,1)$ are:
\begin{align*}
&\partial^2_{xx}E_f(0,1,A)=2A\sum_{m,n}n^2 f'\left(A\left[m^2+n^2 \right]   \right)+4A^2\sum_{m,n}m^2n^2 f''\left(A\left[m^2+n^2 \right]   \right),\\
& \partial^2_{yy}E_f(0,1,A)=2A\sum_{m,n}m^2 f'\left(A\left[m^2+n^2 \right]   \right)+A^2\sum_{m,n}(n^2-m^2)^2f''\left(A\left[m^2+n^2 \right]   \right),\\
&\partial^2_{xy}E_f(0,1,A)=0.
\end{align*}
\end{corollary}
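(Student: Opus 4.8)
The plan is to obtain every entry of the Hessian at $(0,1)$ by specializing the general second-derivative formulas of the previous Proposition to $x=0$ and $y=1$, and then to exploit the parity of the resulting lattice sums.

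First I would substitute $x=0$, $y=1$ into the expression for $\partial^2_{xx}E_f$. The common argument $A[y^{-1}(m+xn)^2+yn^2]$ collapses to $A[m^2+n^2]$, the prefactors $2A/y$ and $4A^2/y^2$ become $2A$ and $4A^2$, and the quadratic factor $(mn+n^2x)^2$ reduces to $m^2n^2$; this yields the stated formula for $\partial^2_{xx}E_f(0,1,A)$ at once. The same substitution in $\partial^2_{yy}E_f$ turns $(m+xn)^2$ into $m^2$ and the factor $\left(n^2-(m+xn)^2/y^2\right)^2$ into $(n^2-m^2)^2$, giving the claimed second line. These two steps are purely routine simplifications with no convergence issues, since $f\in\mathcal{F}$ guarantees absolute convergence of all the sums involved.

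For the off-diagonal entry I would first invoke Proposition \ref{D1xy}, which asserts that $(0,1)$ is a critical point of $(x,y)\mapsto E_f(x,y,A)$. This licenses the use of the reduced expression \eqref{Dxycritpoint} for $\partial^2_{xy}$ valid at a critical point, whose first-order term has already dropped out. Substituting $x=0$, $y=1$ into \eqref{Dxycritpoint} leaves
$$\partial^2_{xy}E_f(0,1,A)=2A^2\sum_{m,n}mn(n^2-m^2)f''\left(A[m^2+n^2]\right).$$
The key observation is then that the summand is odd under the involution $(m,n)\mapsto(-m,n)$: the factor $mn(n^2-m^2)$ changes sign while the argument $A[m^2+n^2]$ of $f''$ is invariant. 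Pairing each $(m,n)$ with $(-m,n)$ therefore forces the sum to vanish, which completes the proof.

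The computation presents essentially no obstacle; it is a direct specialization followed by a one-line symmetry argument. The only point requiring care is that the vanishing of $\partial^2_{xy}$ relies on already knowing that $(0,1)$ is critical, so that \eqref{Dxycritpoint} rather than the full off-diagonal formula applies. Were one instead to start from the general $\partial^2_{xy}E_f$, the first-order piece $-\frac{2A}{y^2}\sum_{m,n}(mn+n^2x)f'(\ldots)$ would also have to be dealt with; at $(0,1)$ it equals $-2A\sum_{m,n}mn\,f'\left(A[m^2+n^2]\right)$ and vanishes by the very same pairing, so either route closes the argument.
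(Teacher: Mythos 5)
Your proposal is correct and follows essentially the same route as the paper: specialize the general second-derivative formulas at $(x,y)=(0,1)$ and kill the off-diagonal entry by the parity pairing $(m,n)\mapsto(-m,n)$. The only cosmetic difference is that the paper substitutes into the full $\partial^2_{xy}$ formula and lets the pairing annihilate both the $f'$ and $f''$ sums, which is precisely the alternative route you note at the end, so the two arguments coincide in substance.
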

\begin{proof}
The both first results are obvious. Furthermore, we have
\begin{align*}
&\partial^2_{xy}E_f(0,1,A)\\
&=-2A\sum_{m,n}mnf'\left(A\left[m^2+n^2 \right]   \right)+2A\sum_{m,n}mn(n^2-m^2)f''\left(A\left[m^2+n^2 \right]   \right)=0
\end{align*}
by pairing $(m,n)$ and $(-m,n)$ in each sum.
\end{proof}

\begin{prop}\label{CNminsquare}
If $A>0$ and $f\in \mathcal{F}$ are such that
\begin{align*}
&K_f^1(A):=\sum_{m,n}n^2 f'\left(A\left[m^2+n^2 \right]   \right)+2A\sum_{m,n}m^2n^2 f''\left(A\left[m^2+n^2 \right]   \right)>0,\\
&K_f^2(A):=\sum_{m,n}2m^2 f'\left(A\left[m^2+n^2 \right]   \right)+A\sum_{m,n}(n^2-m^2)^2f''\left(A\left[m^2+n^2 \right]   \right)>0,
\end{align*}
then $(0,1)$ is a local minimizer of $(x,y)\mapsto E_f(x,y,A)$.
\end{prop}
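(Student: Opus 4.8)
The plan is to apply the second-order sufficient condition for a local minimum. Since $E_f(\cdot,\cdot,A)\in C^2(\mathcal{D})$ and, by Proposition \ref{D1xy}, $(0,1)$ is a critical point of $(x,y)\mapsto E_f(x,y,A)$, it suffices to check that the Hessian at $(0,1)$ is positive definite. The first thing I would invoke is Corollary \ref{Hessienne01}, which already gives $\partial^2_{xy}E_f(0,1,A)=0$, so the Hessian is diagonal. For a diagonal symmetric $2\times 2$ matrix positive definiteness is equivalent to both diagonal entries being strictly positive; hence the usual minor/eigenvalue bookkeeping of the second-derivative test collapses to the two scalar inequalities $\partial^2_{xx}E_f(0,1,A)>0$ and $\partial^2_{yy}E_f(0,1,A)>0$.

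The crux is then the purely algebraic observation that these two positivity conditions are exactly $K_f^1(A)>0$ and $K_f^2(A)>0$. Reading off the expressions in Corollary \ref{Hessienne01} and factoring out the positive prefactor $A$, one has
\begin{align*}
\partial^2_{xx}E_f(0,1,A)&=2A\Big[\sum_{m,n}n^2 f'(A[m^2+n^2])+2A\sum_{m,n}m^2n^2 f''(A[m^2+n^2])\Big]=2A\,K_f^1(A),\\
\partial^2_{yy}E_f(0,1,A)&=A\Big[2\sum_{m,n}m^2 f'(A[m^2+n^2])+A\sum_{m,n}(n^2-m^2)^2 f''(A[m^2+n^2])\Big]=A\,K_f^2(A).
\end{align*}
Since $A>0$, the hypotheses $K_f^1(A)>0$ and $K_f^2(A)>0$ force $\partial^2_{xx}E_f(0,1,A)>0$ and $\partial^2_{yy}E_f(0,1,A)>0$, so the diagonal Hessian is positive definite and $(0,1)$ is a strict local minimizer.

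The only point requiring care — which I would flag as the main subtlety rather than a genuine obstacle — is that $(0,1)$ lies on the boundary arc $x^2+y^2=1$ and on the edge $x=0$ of $\mathcal{D}$, so a priori the second-derivative test on a domain with a corner need not apply directly. This is resolved by the symmetry $E_f(-x,y,A)=E_f(x,y,A)$ noted after \eqref{deff}: the energy extends to a $C^2$ function on a full two-sided neighborhood of $(0,1)$ in the open upper half-plane $\R\times(0,+\infty)$, on which $(0,1)$ is an interior critical point (both first partials vanish) with positive-definite Hessian, hence a strict local minimizer there. Restricting back to $\mathcal{D}$, which sits inside that neighborhood, preserves local minimality, which completes the argument.
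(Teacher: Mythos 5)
Your proposal is correct and follows essentially the same route as the paper's (very terse) proof: invoke Proposition \ref{D1xy} for criticality, Corollary \ref{Hessienne01} for the diagonal Hessian, and observe that the diagonal entries are $2A\,K_f^1(A)$ and $A\,K_f^2(A)$, hence positive under the hypotheses. Your extra care about $(0,1)$ lying on the boundary of $\mathcal{D}$, resolved via the symmetry $E_f(-x,y,A)=E_f(x,y,A)$ and the fact that the defining series is $C^2$ on a full neighborhood in the upper half-plane, is a welcome refinement of a point the paper leaves implicit.
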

\begin{proof}
It follows directly from Corollary \ref{Hessienne01} and Proposition \ref{D1xy}.
\end{proof}

\begin{prop}\label{derivtriangular}
Let $f\in\mathcal{F}$, then the second derivatives at point $\left(\frac{1}{2},\frac{\sqrt{3}}{2}  \right)$ are:
\begin{align*}
T_f(A):&=\partial^2_{xx}E_f\left(\frac{1}{2},\frac{\sqrt{3}}{2},A  \right)=\partial^2_{yy}E_f\left(\frac{1}{2},\frac{\sqrt{3}}{2},A  \right)\\
&=\frac{4A}{\sqrt{3}}\sum_{m,n}n^2 f'\left(\frac{2A}{\sqrt{3}}[m^2+mn+n^2]  \right)+\frac{4A^2}{3}\sum_{m,n}n^4f''\left(\frac{2A}{\sqrt{3}}[m^2+mn+n^2]  \right),
\end{align*}

and $\displaystyle \partial^2_{xy}E_f\left(\frac{1}{2},\frac{\sqrt{3}}{2},A  \right)=0$.
\end{prop}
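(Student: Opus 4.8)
The plan is to substitute the base point $(x,y)=(1/2,\sqrt3/2)$ directly into the three general second-derivative formulas established above, and then to collapse the resulting lattice sums using the identities of Lemma \ref{sumlatticetri}. First I would record that at this point the common argument of $f'$ and $f''$ simplifies: since $\frac1y=\frac{2}{\sqrt3}$ and $y=\frac{\sqrt3}{2}$, a short computation gives $\frac1y(m+\tfrac n2)^2+yn^2=\frac{2}{\sqrt3}(m^2+mn+n^2)=\frac{2}{\sqrt3}q(m,n)$, so every summand carries $f'$ or $f''$ evaluated at $\frac{2A}{\sqrt3}q(m,n)$; write $F'$ and $F''$ for these. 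Because $q$ is symmetric, $q(m,n)=q(n,m)$, I may freely exchange $m$ and $n$ inside the sums, which together with \eqref{LATSUM1}--\eqref{LATSUM3} is all that is needed.

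For $\partial^2_{xx}$, substituting $y=\frac{\sqrt3}{2}$ turns the $f'$-prefactor $\frac{2A}{y}$ into $\frac{4A}{\sqrt3}$ and the $f''$-prefactor $\frac{4A^2}{y^2}$ into $\frac{16A^2}{3}$, while $(mn+n^2x)^2=(mn+\tfrac{n^2}{2})^2=m^2n^2+mn^3+\tfrac14 n^4$. Applying \eqref{LATSUM3} to the $m^2n^2$ term and \eqref{LATSUM2} to the $mn^3$ term, the first two contributions cancel and one is left with $\tfrac14\sum n^4F''$; multiplying by $\frac{16A^2}{3}$ reproduces the claimed $\frac{4A^2}{3}\sum n^4F''$. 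The $f'$-part needs no identity, so $\partial^2_{xx}$ is as stated.

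For $\partial^2_{yy}$ I would proceed analogously. Using $(m+\tfrac n2)^2=m^2+mn+\tfrac14n^2$, the symmetry $\sum m^2F'=\sum n^2F'$ and \eqref{LATSUM1} reduce the $f'$-sum to $\tfrac34\sum n^2F'$, and the prefactor $\frac{2A}{y^3}=\frac{16A}{3\sqrt3}$ again yields $\frac{4A}{\sqrt3}\sum n^2F'$. For the $f''$-sum I would write $n^2-\frac{(m+n/2)^2}{y^2}=\frac23(n^2-2m^2-2mn)$, square it to get $\frac49(n^4+4m^4+8m^3n-4mn^3)$, and collapse the four monomials using $\sum m^4F''=\sum n^4F''$ and \eqref{LATSUM2}; the coefficients combine to $3\sum n^4F''$, so this term equals $\frac{4A^2}{3}\sum n^4F''$, matching $\partial^2_{xx}$. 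Finally, since $(1/2,\sqrt3/2)$ is a critical point (Proposition \ref{triangCM}), $\partial^2_{xy}$ is given by \eqref{Dxycritpoint}; expanding $(mn+\tfrac{n^2}{2})\bigl(n^2-\frac{(m+n/2)^2}{y^2}\bigr)$ as $\frac23(-2m^3n-3m^2n^2+\tfrac12 n^4)$ and applying \eqref{LATSUM2}--\eqref{LATSUM3} makes the bracket vanish, giving $\partial^2_{xy}=0$.

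The computations are routine once these reductions are set up, so the only real care is bookkeeping: I expect the main (minor) obstacle to be checking that the monomials in the $\partial^2_{yy}$ and $\partial^2_{xy}$ expansions combine with exactly the right rational coefficients, in particular that the prefactors $\frac{2A}{y}$, $\frac{2A}{y^3}$, $\frac{16A^2}{3}$ cancel against the fractions $\tfrac14$, $\tfrac34$, $\frac49\cdot3$ produced by Lemma \ref{sumlatticetri} to leave the two clean prefactors $\frac{4A}{\sqrt3}$ and $\frac{4A^2}{3}$, and that the cross term truly collapses to zero rather than to a nonzero multiple of $\sum n^4F''$.
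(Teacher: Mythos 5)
Your proposal is correct and follows essentially the same route as the paper: substitute $(x,y)=(1/2,\sqrt{3}/2)$ into the general second-derivative formulas, observe that the argument becomes $\frac{2A}{\sqrt{3}}q(m,n)$, and collapse the resulting monomial sums via Lemma \ref{sumlatticetri} together with the $m\leftrightarrow n$ symmetry of $q$; all of your coefficient bookkeeping ($\tfrac14$, $\tfrac34$, $\tfrac49\cdot 3$, and the vanishing of $-2(-\tfrac12)-3\cdot\tfrac12+\tfrac12$) checks out. The only cosmetic difference is that for the cross term you invoke \eqref{Dxycritpoint} together with Proposition \ref{triangCM}, whereas the paper kills the $f'$-sum directly by \eqref{LATSUM1} — the same mechanism packaged slightly differently.
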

\begin{remark}
By Proposition \ref{D1xy}, we already know that the triangular lattice is a critical point of the energy. Consequently, if $T_f(A)>0$, then $\left(\frac{1}{2},\frac{\sqrt{3}}{2}  \right)$ is a local minimizer of $(x,y)\mapsto E_f(x,y,A)$ and if $T_f(A)<0$, then $\left(\frac{1}{2},\frac{\sqrt{3}}{2}  \right)$ is a local maximizer of $(x,y)\mapsto E_f(x,y,A)$.
\end{remark}
\begin{proof}
We define, for any $(m,n)\in\Z^2\backslash\{(0,0)\}$ and any $A>0$, the quadratic form associated to $\Lambda_A$ by
$$
Q_A(m,n):=\frac{2A}{\sqrt{3}}[m^2+mn+n^2]. 
$$
It is straightforward to get
\begin{align*}
\partial^2_{xx}E_f\left(\frac{1}{2},\frac{\sqrt{3}}{2},A  \right)=&\frac{4A}{\sqrt{3}}\sum_{m,n}n^2 f'\left(Q_A(m,n)  \right)+\frac{16A^2}{3}\sum_{m,n}\left( mn+\frac{n^2}{2}\right)^2f''\left(Q_A(m,n)  \right),
\end{align*}
\begin{align*}
\partial^2_{yy}E_f\left(\frac{1}{2},\frac{\sqrt{3}}{2},A  \right)=&\frac{16A}{3\sqrt{3}}\sum_{m,n}\left( m+\frac{n}{2}\right)^2f'\left(Q_A(m,n) \right)\\
&+A^2\sum_{m,n}\left(n^2-\frac{4}{3}\left( m+\frac{n}{2} \right)^2  \right)^2f''\left(Q_A(m,n) \right),
\end{align*}
and
\begin{align*}
\partial^2_{xy}E_f\left(\frac{1}{2},\frac{\sqrt{3}}{2},A  \right)=&-\frac{8A}{3}\sum_{m,n}\left( mn+\frac{n^2}{2} \right)f'\left(Q_A(m,n)   \right)\\
&+\frac{4A^2}{\sqrt{3}}\sum_{m,n}\left( mn+\frac{n^2}{2} \right)\left(n^2-\frac{4}{3}\left( m+\frac{n}{2} \right)^2  \right)f''\left(Q_A(m,n)  \right).
\end{align*}
Now, let us prove that $\partial^2_{xx}E_f\left(\frac{1}{2},\frac{\sqrt{3}}{2},A  \right)=\partial^2_{yy}E_f\left(\frac{1}{2},\frac{\sqrt{3}}{2},A  \right)$, and more precisely that
\begin{equation}\label{latsum1}
\frac{4}{\sqrt{3}}\sum_{m,n}n^2 f'\left(Q_A(m,n) \right)=\frac{16}{3\sqrt{3}}\sum_{m,n}\left( m+\frac{n}{2}\right)^2f'\left(Q_A(m,n) \right)
\end{equation}
and
\begin{equation}\label{latsum2}
\frac{16}{3}\sum_{m,n}\left( mn+\frac{n^2}{2}\right)^2f''\left(Q_A(m,n)  \right)=\sum_{m,n}\left(n^2-\frac{4}{3}\left( m+\frac{n}{2} \right)^2  \right)^2f''\left( Q_A(m,n) \right).
\end{equation}

By \eqref{LATSUM1}, we get, 
\begin{align*}
\frac{16}{3\sqrt{3}}\sum_{m,n}\left( m+\frac{n}{2}\right)^2f'\left(Q_A(m,n) \right)&=\frac{16}{3\sqrt{3}}\sum_{m,n}\left( m^2+\frac{n^2}{4}+mn\right)f'\left(Q_A(m,n) \right)\\
&=\frac{16}{3\sqrt{3}}\sum_{m,n}\left( m^2+\frac{n^2}{4}-\frac{n^2}{2}\right)f'\left(Q_A(m,n) \right)\\
&=\frac{4}{\sqrt{3}}\sum_{m,n}n^2 f'\left(Q_A(m,n) \right),
\end{align*}
and \eqref{latsum1} is proved. For the second equality, applying \eqref{LATSUM2} and \eqref{LATSUM3}, we obtain
\begin{align*}
\sum_{m,n}\left(n^2-\frac{4}{3}\left( m+\frac{n}{2} \right)^2  \right)^2f''\left( Q_A(m,n) \right)&=\frac{4}{9}\sum_{m,n}(5n^4+4m^3n)f''\left( Q_A(m,n) \right)\\
&=\frac{4}{3}\sum_{m,n}n^4 f''\left( Q_A(m,n) \right)
\end{align*}
and
\begin{align*}
\frac{16}{3}\sum_{m,n}\left( mn+\frac{n^2}{2}\right)^2f''\left(Q_A(m,n)  \right)&=\frac{16}{3}\sum_{m,n}\left( m^2n^2+\frac{n^4}{4}+mn^3 \right)f''\left( Q_A(m,n) \right)\\
&=\frac{16}{3}\sum_{m,n}\left(\frac{n^4}{2}+\frac{n^4}{4}-\frac{n^4}{2}   \right)f''\left( Q_A(m,n) \right)\\
&=\frac{4}{3}\sum_{m,n}n^4 f''\left( Q_A(m,n) \right).
\end{align*}
Hence, \eqref{latsum2} is established. By \eqref{LATSUM1}, the first sum in the expression of $\partial^2_{xy}E_f\left(\frac{1}{2},\frac{\sqrt{3}}{2},A  \right)$ is equal to $0$. Combining \eqref{LATSUM2} and \eqref{LATSUM3}, we easily prove that the second part of $\partial^2_{xy}E_f\left(\frac{1}{2},\frac{\sqrt{3}}{2},A  \right)$ is also equal to $0$. 
\end{proof}

\begin{corollary}\label{coranalytic}
If $f\in \mathcal{F}$ is analytic on an open neighbourhood of $(0,+\infty)$, then for almost every $A>0$, $(1/2,\sqrt{3}/2)$ is a local minimizer or a local maximizer of $(x,y)\mapsto E_f(x,y,A)$.
\end{corollary}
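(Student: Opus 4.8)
The plan is to read off the nature of the critical point $(1/2,\sqrt{3}/2)$ from the single scalar $T_f(A)$ and then to show that its sign is constant off a negligible set of areas. By Proposition \ref{derivtriangular} the Hessian of $(x,y)\mapsto E_f(x,y,A)$ at $(1/2,\sqrt{3}/2)$ equals $T_f(A)$ times the $2\times 2$ identity matrix, and $(1/2,\sqrt{3}/2)$ is a critical point for every $A>0$ by Proposition \ref{triangCM}; hence, by the remark following Proposition \ref{derivtriangular}, it is a local minimizer when $T_f(A)>0$ and a local maximizer when $T_f(A)<0$. Consequently it suffices to prove that $Z:=\{A>0:\ T_f(A)=0\}$ has Lebesgue measure zero, and for this I would show that $A\mapsto T_f(A)$ is real-analytic and not identically zero, so that $Z$ is discrete.

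For the real-analyticity I would complexify the area. Writing $q(m,n)=m^2+mn+n^2$ as in Lemma \ref{sumlatticetri}, each summand $A\mapsto n^2 f'\bigl(\tfrac{2A}{\sqrt3}q(m,n)\bigr)$ and $A\mapsto n^4 f''\bigl(\tfrac{2A}{\sqrt3}q(m,n)\bigr)$ extends holomorphically to a complex neighbourhood of $(0,+\infty)$ since $f$ is analytic there; the point is the local uniform convergence of the two series defining $T_f$ on such a neighbourhood, which would make $T_f$ holomorphic and thus real-analytic on $(0,+\infty)$. Using $q(m,n)=(m+n/2)^2+\tfrac34 n^2\ge \tfrac34 n^2$, one has $n^2\le\tfrac43 q(m,n)$ and $n^4\le\tfrac{16}{9}q(m,n)^2$, so on the real axis the bounds of \eqref{deff} give $|n^2 f'|\lesssim q(m,n)^{-\eta_1}$ and $|n^4 f''|\lesssim q(m,n)^{-\eta_2}$, with $\sum_{(m,n)\neq 0}q(m,n)^{-\eta_k}<\infty$ because $\eta_k>1$. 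The genuine obstacle is that these bounds are assumed only on $(0,+\infty)$: to run the estimate on a complex neighbourhood I need $f',f''$ (and, through Cauchy's estimates, every $f^{(k)}$) to decay on a full complex sector around the positive axis, not merely on the axis. This is exactly where the hypothesis is effective for the ``classical'' potentials the corollary targets — finite combinations of exponentials $e^{-\alpha r}$ and inverse powers $r^{-s}$ decay in such a sector — whereas a merely real-analytic $f\in\mathcal F$ with purely oscillatory decay (e.g. $\sin(r)/r^4$) need not, and then $T_f$ would fail to be analytic; I would therefore carry out the analyticity step under the sector-decay that these examples enjoy.

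It remains to exclude $T_f\equiv 0$. Here I would exploit the symmetries of Lemma \ref{sumlatticetri}: combining the relations there with $q(m,n)=q(n,m)$ one gets $\sum_{m,n}n^2F(q)=\tfrac23\sum_{m,n}qF(q)$ and $\sum_{m,n}n^4F(q)=\tfrac23\sum_{m,n}q^2F(q)$, and using $\frac{d^j}{dB^j}f(Bq)=q^jf^{(j)}(Bq)$ one rewrites, after the substitution $B=2A/\sqrt3$,
\[
T_f(A)=\tfrac23\bigl(2B\,\Theta'(B)+B^2\,\Theta''(B)\bigr)=\tfrac23\,\frac{d}{dB}\bigl(B^2\Theta'(B)\bigr),\qquad \Theta(B):=\sum_{(m,n)\neq 0} f\bigl(B\,q(m,n)\bigr).
\]
Thus $T_f\equiv 0$ forces $B^2\Theta'(B)$ to be constant, hence $\Theta(B)=-C/B+D$; but \eqref{deff} gives $\Theta(B)=O(B^{-\eta_0})$ with $\eta_0>1$, so $\Theta(B)\to 0$ and $B\Theta(B)\to 0$ as $B\to+\infty$, forcing $D=C=0$ and $\Theta\equiv 0$. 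Finally $\Theta\equiv 0$ reads $\sum_{k}r(k)f(Bk)\equiv 0$, with $r(k)$ the number of $(m,n)\neq 0$ satisfying $q=k$, and a uniqueness argument (Mellin inversion, after noting $\sum_k r(k)k^{-s}\not\equiv 0$) yields $f\equiv 0$, a degenerate case in which $E_f$ is constant and the conclusion holds trivially. Putting these together, for analytic $f$ with sector decay $T_f$ is real-analytic and $\not\equiv 0$, so $Z$ is discrete, and $(1/2,\sqrt{3}/2)$ is a local minimizer or a local maximizer for every $A\notin Z$, that is, for almost every $A>0$.
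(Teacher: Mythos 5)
Your proposal follows the same skeleton as the paper's own proof, which consists of exactly two sentences: analyticity of $f$ is said to give analyticity of $A\mapsto T_f(A)$, hence a discrete zero set, hence the conclusion via the Remark after Proposition \ref{derivtriangular}. But you supply two substantive ingredients that the paper omits, and both are genuinely needed. First, the paper never rules out $T_f\equiv 0$, without which ``analytic $\Rightarrow$ discrete zero set'' is false; your identity $T_f(A)=\tfrac23\,\frac{d}{dB}\bigl(B^2\Theta'(B)\bigr)$ with $B=2A/\sqrt3$ is correct (it does follow from Lemma \ref{sumlatticetri}, which yields $\sum_{m,n}n^2F(q)=\tfrac23\sum_{m,n}qF(q)$ and $\sum_{m,n}n^4F(q)=\tfrac23\sum_{m,n}q^2F(q)$, and the two termwise differentiations of $\Theta$ are legitimate by the decay in \eqref{deff}), and your decay argument forcing $C=D=0$, hence $\Theta\equiv0$, is right. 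Second, you correctly identify that the complexification/Weierstrass step requires decay of $f'$ and $f''$ on a complex \emph{sector}: for fixed non-real $A$ the points $\tfrac{2A}{\sqrt3}q(m,n)$ run off to infinity along a ray of fixed nonzero argument, and an arbitrary open neighbourhood of $(0,+\infty)$ need contain no such ray, so the complexified series need not even be defined, let alone locally uniformly convergent. This is a real lacuna in the paper's assertion that ``$T_f$ is also analytic''; it is harmless for the examples the corollary targets (finite sums of exponentials and inverse power laws, which decay on sectors), but it means that both you and the paper actually establish the statement only under an implicit sector-decay hypothesis, not for every real-analytic $f\in\mathcal{F}$.

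The one soft spot on your side is the very last step of the degenerate case: to get $f\equiv 0$ from $\Theta\equiv 0$ by Mellin inversion you need the Mellin transform of $f$ to converge in some strip, and \eqref{deff} only constrains $f$ at infinity, not near $r=0$ (an analytic $f\in\mathcal{F}$ may blow up arbitrarily fast at the origin, even under sector decay at infinity). Since the whole point of that case is to dispose of $T_f\equiv 0$, this implication cannot simply be waved through; it needs either an additional hypothesis near $0$ or a different argument. This affects only the degenerate branch of your proof, but as written it is incomplete there, whereas your treatment of the main branch is more rigorous than the paper's.
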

\begin{proof}
If $f$ is analytic, then $f'$ and $f''$ are analytic on an open neighbourhood of $(0,+\infty)$ and $T_f$ is also analytic on an open neighbourhood of $(0,+\infty)$. Then, the set of zeros of $A\mapsto T_f(A)$ is a discrete set and $T_f(A)\neq 0$ for almost every $A>0$.
\end{proof}

\begin{examples}
This result is true for any sum of inverse power laws $f(r)=\sum_{i=1}^p a_i r^{-s_i}$, $s_i>1$, any sum of exponential functions or any sum of those types of functions (see \cite{BetTheta15} for more examples).
\end{examples}

\section{Application to Lennard-Jones type interactions}\label{part3}
The aim of this part is to apply the previous results to the class of Lennard-Jones type potentials (also called ``Mie potentials"). We recall our definition from \cite[Section 6.3]{BetTheta15}.

\begin{defi}\label{defiLJpoten}
For any $t=(t_1,t_2)\in \R^2$ such that $1<t_1<t_2$ and any $a=(a_1,a_2)\in (0,+\infty)^2$, we define the Lennard-Jones type potential (see Figure \ref{Ljonespot} for its graph) on $(0,+\infty)$ by
$$
V_{a,t}^{LJ}(r):=\frac{a_2}{r^{t_2}}-\frac{a_1}{r^{t_1}}.
$$
Hence, its lattice energy is defined, for any Bravais lattice $L\subset \R^2$, by
$$
E_{V_{a,t}^{LJ}}[L]=a_2\zeta_L(2t_2)-a_1\zeta_L(2t_1),
$$
where the Epstein zeta function $\zeta_L$ is defined by \eqref{defEpsttheta}. Furthermore, we define the following lattice sums:

\begin{minipage}[l]{0.4\linewidth}
\begin{align*}
&S_1(s)=\sum_{m,n}\frac{m^4}{(m^2+mn+n^2)^s},\\
&S_3(s)=\sum_{m,n}\frac{m^2n^2}{(m^2+n^2)^s},
\end{align*}
\end{minipage}
\begin{minipage}[c]{0.5\linewidth}
\begin{align*}
&S_2(s)=\sum_{m,n}\frac{m^2}{(m^2+n^2)^s},\\
&S_4(s)=\sum_{m,n}\frac{(n^2-m^2)^2}{(m^2+n^2)^s}.
\end{align*}
\end{minipage}
\begin{figure}[!h]
\centering
\includegraphics[width=10cm,height=80mm]{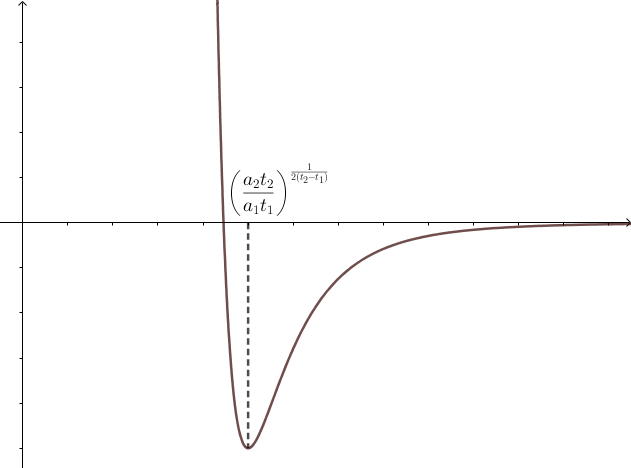}
\caption{Graph of $r\mapsto V_{a,t}^{LJ}(r^2)$}
\label{Ljonespot}
\end{figure}
\end{defi}

As we explained in \cite[Section 6.3]{BetTheta15}, these Lennard-Jones potentials are used in molecular simulation (classical interaction between atoms, hydrogen bonds, for finding energetically favourable regions in protein binding sites) or in the study of social aggregation \cite{MEKBS}. In particular, the classical $(12-6)$ Lennard-Jones potential (see \cite{LJ}) is a good simple model that approximates the interaction between neutral atoms.

\begin{thm}\label{THmain1}
For any $(a,t)$ as in Definition \ref{defiLJpoten}, let 
$$
A_{0}:=\frac{\sqrt{3}}{2}\left( \frac{a_2 t_2(t_2-1)S_1(t_2+2)}{a_1t_1(t_1-1)S_1(t_1+2)} \right)^{\frac{1}{t_2-t_1}},
$$
then we have:
\begin{enumerate}
\item if $A<A_0$, then $\left(\frac{1}{2},\frac{\sqrt{3}}{2}  \right)$ is a local minimizer of $(x,y)\mapsto E_{V_{a,t}^{LJ}}(x,y,A)$;
\item if $A>A_0$, then $\left(\frac{1}{2},\frac{\sqrt{3}}{2}  \right)$ is a local maximizer of $(x,y)\mapsto E_{V_{a,t}^{LJ}}(x,y,A)$.
\end{enumerate}
\end{thm}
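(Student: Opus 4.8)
The plan is to reduce the whole statement to the sign of the single scalar $T_{V_{a,t}^{LJ}}(A)$ of Proposition \ref{derivtriangular}, since the remark following that proposition already converts $T_f(A)>0$ into local minimality and $T_f(A)<0$ into local maximality of the triangular point. Because $T_f$ depends linearly on $f$ (through $f'$ and $f''$) and $V_{a,t}^{LJ}(r)=a_2r^{-t_2}-a_1r^{-t_1}$, I would first compute $T_{r^{-s}}(A)$ for a single inverse power $f(r)=r^{-s}$ and then assemble $T_{V_{a,t}^{LJ}}(A)=a_2T_{r^{-t_2}}(A)-a_1T_{r^{-t_1}}(A)$.

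For the single power, substituting $f'(r)=-sr^{-s-1}$ and $f''(r)=s(s+1)r^{-s-2}$ into the formula of Proposition \ref{derivtriangular}, and writing $c=\frac{2A}{\sqrt3}$ and $q(m,n)=m^2+mn+n^2$ so that the common argument is $c\,q(m,n)$, the prefactors $\frac{4A}{\sqrt3}=2c$ and $\frac{4A^2}{3}=c^2$ simplify and give
\[
T_{r^{-s}}(A)=c^{-s}\Bigl[-2s\sum_{m,n}\frac{n^2}{q^{s+1}}+s(s+1)\sum_{m,n}\frac{n^4}{q^{s+2}}\Bigr].
\]
The crucial step is to show that the two lattice sums here are equal and both equal $S_1(s+2)$. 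By the symmetry $q(m,n)=q(n,m)$ one has $\sum_{m,n} n^4 q^{-s-2}=S_1(s+2)$. For the other sum I would expand $n^2q=n^2(m^2+mn+n^2)$ to get $\sum n^2 q^{-s-1}=\sum m^2n^2 q^{-s-2}+\sum mn^3 q^{-s-2}+\sum n^4 q^{-s-2}$, then apply identities \eqref{LATSUM3} and \eqref{LATSUM2} of Lemma \ref{sumlatticetri} with $F=q^{-s-2}$, which give $\sum m^2n^2 q^{-s-2}=\tfrac12\sum n^4 q^{-s-2}$ and $\sum mn^3 q^{-s-2}=-\tfrac12\sum n^4 q^{-s-2}$; the half-contributions cancel, so $\sum n^2 q^{-s-1}=S_1(s+2)$ as well. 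Hence $T_{r^{-s}}(A)=c^{-s}s(s-1)S_1(s+2)$.

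Assembling the two terms then yields
\[
T_{V_{a,t}^{LJ}}(A)=a_2\Bigl(\tfrac{2A}{\sqrt3}\Bigr)^{-t_2}t_2(t_2-1)S_1(t_2+2)-a_1\Bigl(\tfrac{2A}{\sqrt3}\Bigr)^{-t_1}t_1(t_1-1)S_1(t_1+2),
\]
and I would finish by reading off its sign. Dividing the inequality $T_{V_{a,t}^{LJ}}(A)>0$ by the positive factor $(2A/\sqrt3)^{-t_2}$ and using $t_2>t_1>1$, $a_i>0$ and $S_1>0$, it is equivalent to $(2A/\sqrt3)^{t_2-t_1}<\frac{a_2 t_2(t_2-1)S_1(t_2+2)}{a_1 t_1(t_1-1)S_1(t_1+2)}$, i.e. precisely to $A<A_0$. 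Thus $T_{V_{a,t}^{LJ}}(A)>0$ iff $A<A_0$ and $<0$ iff $A>A_0$, and the remark after Proposition \ref{derivtriangular} gives the two claims.

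I expect the only genuinely delicate point to be the lattice-sum collapse $\sum n^2 q^{-s-1}=\sum n^4 q^{-s-2}=S_1(s+2)$; the rest is linearity, the explicit derivatives of power laws, and a monotonicity argument in $A$. I would also remark that convergence is never an issue, since $r^{-t_i}\in\mathcal F$ for $t_i>1$ and $S_1(t_i+2)$ converges because $t_i+2>3$.
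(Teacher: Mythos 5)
Your proof is correct and takes essentially the same route as the paper: both reduce the theorem to the sign of $T_{V_{a,t}^{LJ}}(A)$ from Proposition \ref{derivtriangular}, use the identities \eqref{LATSUM2}--\eqref{LATSUM3} of Lemma \ref{sumlatticetri} to collapse the two lattice sums to $S_1(s+2)$, and conclude with the same monotonicity argument in $A$, arriving at
$T_{V_{a,t}^{LJ}}(A)=a_2t_2(t_2-1)\bigl(\tfrac{2A}{\sqrt{3}}\bigr)^{-t_2}S_1(t_2+2)-a_1t_1(t_1-1)\bigl(\tfrac{2A}{\sqrt{3}}\bigr)^{-t_1}S_1(t_1+2)$.
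Your decomposition by linearity into single inverse powers, giving $T_{r^{-s}}(A)=\bigl(\tfrac{2A}{\sqrt{3}}\bigr)^{-s}s(s-1)S_1(s+2)$, is only a cosmetic repackaging of the paper's computation with $h(s)=\tfrac{s-1}{2}S_1(s+2)$.
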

\begin{proof}
According to Proposition \ref{derivtriangular}, we easily get
\begin{align*}
T_f(A)=\frac{4A}{\sqrt{3}}\left( \frac{\sqrt{3}}{2A} \right)^{t_2+1}\left\{-a_1t_1\left(\sqrt{3}/2  \right)^{t_1-t_2}h(t_1)A^{t_2-t_1} +a_2t_2h(t_2)\right\},
\end{align*}
where $\displaystyle h(s)=\frac{s+1}{2}S_1(s+2)-\sum_{m,n}\frac{m^2}{(m^2+mn+n^2)^{s+1}}$. Combining \eqref{LATSUM2} and \eqref{LATSUM3}, we remark that
\begin{align*}
\sum_{m,n}\frac{m^2}{(m^2+mn+n^2)^{s+1}}&=\sum_{m,n}\frac{m^2(m^2+mn+n^2)}{(m^2+mn+n^2)^{s+2}}\\
&=\sum_{m,n}\frac{m^2n^2+mn^3+m^4}{(m^2+mn+n^2)^{s+2}}\\
&=S_1(s+2).
\end{align*}
Consequently, we obtain
$$
h(s)=\frac{s-1}{2}S_1(s+2),
$$
and $T_f(A)>0$ if and only if $A<A_0$. The second point is clear.
\end{proof}

\begin{remark}
In the particular classical case $a=(2,1)$ and $t=(3,6)$, the interaction potential is
$$
V(r)=\frac{1}{r^6}-\frac{2}{r^3},
$$
and $\displaystyle V(r^2)=\frac{1}{r^{12}}-\frac{2}{r^6}$ is the so-called classical Lennard-Jones potential. The previous result shows that, for any $A<A_0\approx 1.152$ (resp. $A>A_0$) the triangular lattice is a local minimizer  (resp. maximizer) of $L_A\mapsto E_{V}[L_A]$.
\end{remark}

\begin{thm}\label{THmain2}
For any $(a,t)$ as in Definition \ref{defiLJpoten}, let
$$
g(s)=S_2(s+1)-2(s+1)S_3(s+2),\quad k(s)=(s+1)S_4(s+2)-2S_2(s+1),
$$
and define
$$A_1:=\left(\frac{a_2t_2g(t_2)}{a_1t_1g(t_1)}  \right)^{\frac{1}{t_2-t_1}} \quad \textnormal{and}\quad A_2:=\left(\frac{a_2t_2k(t_2)}{a_1t_1k(t_1)}  \right)^{\frac{1}{t_2-t_1}}.
$$
It holds:
\begin{enumerate}
\item if $A_1<A<A_2$, then $(0,1)$ is a local minimizer of $(x,y) \mapsto E_{V_{a,t}^{LJ}}(x,y,A)$;
\item if $A\not\in [A_1,A_2]$, then $(0,1)$ is a saddle point of $(x,y) \mapsto E_{V_{a,t}^{LJ}}(x,y,A)$.
\end{enumerate}
\end{thm}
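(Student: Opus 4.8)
The plan is to reduce the statement to the signs of the two diagonal Hessian entries at $(0,1)$ and then to make these signs explicit for the Lennard-Jones potential. By Proposition \ref{D1xy}, $(0,1)$ is always a critical point, and by Corollary \ref{Hessienne01} the Hessian there is diagonal ($\partial^2_{xy}E_f(0,1,A)=0$). With the notation of Proposition \ref{CNminsquare} one has $\partial^2_{xx}E_f(0,1,A)=2AK_f^1(A)$ and $\partial^2_{yy}E_f(0,1,A)=AK_f^2(A)$ with $A>0$, so the nature of $(0,1)$ is entirely governed by the signs of $K_f^1(A)$ and $K_f^2(A)$: a local minimizer when both are positive, a local maximizer when both are negative, and a saddle point when they have opposite signs.

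First I would substitute $f=V_{a,t}^{LJ}$, for which $f'(r)=a_1t_1r^{-t_1-1}-a_2t_2r^{-t_2-1}$ and $f''(r)=a_2t_2(t_2+1)r^{-t_2-2}-a_1t_1(t_1+1)r^{-t_1-2}$, into $K_f^1(A)$ and $K_f^2(A)$, pull out the powers $A^{-t_1-1}$ and $A^{-t_2-1}$, and recognize the resulting lattice sums. By the $m\leftrightarrow n$ symmetry, $\sum_{m,n}n^2(m^2+n^2)^{-s-1}=S_2(s+1)$, while the second-derivative terms produce $S_3(s+2)$ and $S_4(s+2)$. A direct computation then yields
\begin{align*}
&K_f^1(A)=a_1t_1A^{-t_1-1}g(t_1)-a_2t_2A^{-t_2-1}g(t_2),\\
&K_f^2(A)=a_2t_2A^{-t_2-1}k(t_2)-a_1t_1A^{-t_1-1}k(t_1),
\end{align*}
with $g$ and $k$ exactly as in the statement.

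Assuming $g$ and $k$ are positive at $t_1$ and $t_2$, I would then solve the two inequalities. Factoring $A^{-t_1-1}>0$ out of $K_f^1(A)$, its sign is that of $a_1t_1g(t_1)-a_2t_2g(t_2)A^{-(t_2-t_1)}$, which is positive exactly when $A^{t_2-t_1}>a_2t_2g(t_2)/(a_1t_1g(t_1))$, i.e. when $A>A_1$; the analogous manipulation gives $K_f^2(A)>0$ exactly when $A<A_2$. Combining these, $(0,1)$ is a local minimizer precisely when $A_1<A<A_2$, while for $A<A_1$ one has $K_f^1(A)<0<K_f^2(A)$ and for $A>A_2$ one has $K_f^2(A)<0<K_f^1(A)$, giving opposite signs and hence a saddle point in both cases, which is the second assertion.

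The main obstacle is the positivity of $g(s)$ and $k(s)$ for $s\in\{t_1,t_2\}$ (in fact for all $s>1$): without it the equivalences ``$K_f^1>0\Leftrightarrow A>A_1$'' and ``$K_f^2>0\Leftrightarrow A<A_2$'' fail, and $A_1,A_2$ need not even be real. I would attack this by rewriting $g$ and $k$ as single $\Z^2$-sums — using $S_2(s+1)=\sum_{m,n}(m^4+m^2n^2)(m^2+n^2)^{-s-2}$ one gets $g(s)=\sum_{m,n}(m^4-(2s+1)m^2n^2)(m^2+n^2)^{-s-2}$, and similarly for $k$ — and controlling the sign by isolating the dominant shells together with an asymptotic and monotonicity argument in $s$. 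As a by-product one should also verify $A_1\le A_2$, so that the complement $A\notin[A_1,A_2]$ genuinely consists of saddle points (with entries of opposite sign) rather than regions where both entries are negative, which would instead produce local maxima.
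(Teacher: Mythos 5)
Your proposal follows the paper's route in every step it actually carries out, and those steps are correct: the paper likewise combines Proposition \ref{D1xy}, Corollary \ref{Hessienne01} and Proposition \ref{CNminsquare}, obtains the same two expressions
\begin{align*}
K^1_{V_{a,t}^{LJ}}(A)&=\frac{1}{A^{t_2+1}}\left\{a_1t_1g(t_1)A^{t_2-t_1}-a_2t_2g(t_2)\right\},\\
K^2_{V_{a,t}^{LJ}}(A)&=\frac{1}{A^{t_2+1}}\left\{-a_1t_1k(t_1)A^{t_2-t_1}+a_2t_2k(t_2)\right\},
\end{align*}
and concludes via the same equivalences $K^1>0\Leftrightarrow A>A_1$ and $K^2>0\Leftrightarrow A<A_2$. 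The genuine gap is exactly the step you defer: the positivity of $g(s)$ and $k(s)$. This is not a by-product but the mathematical core of the theorem, and your sketched plan (isolate dominant shells, monotonicity in $s$) cannot work as stated. Writing $g(s)=\sum_{m,n}\left(m^4-(2s+1)m^2n^2\right)(m^2+n^2)^{-s-2}$, the positive and negative parts both diverge as $s\to1^+$ with \emph{exactly cancelling} singular parts, because the angular average of $\cos^4\theta$ is three times that of $\cos^2\theta\sin^2\theta$ while $2s+1\to3$; near $s=1$ the sign of $g$ is a finite remainder left over after a cancellation of infinities, which no shell-by-shell domination can see. (A structural remark that would help you: $S_2(s+1)=\tfrac12\zeta_{\Z^2}(2s)$ and $S_4(s+2)=\zeta_{\Z^2}(2s)-4S_3(s+2)$, whence $k(s)=2g(s)+(s-1)\zeta_{\Z^2}(2s)$; so for $s>1$ the positivity of $k$ follows from that of $g$, and only $g$ needs a proof.)

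You should also know that your instinct about where the difficulty lies is confirmed by the paper itself: its own proof of this positivity is defective. The paper claims, via the change of variables $(m,n)=(k+\ell,k-\ell)$, that
$$
S_3(s+2)\leq\sum_{k,\ell}\frac{(k+\ell)^2(k-\ell)^2}{(2k^2+2\ell^2)^{s+2}},
$$
and deduces $S_3(s+2)\leq\frac{1}{1+2^{s+1}}\sum_{m,n}m^4(m^2+n^2)^{-s-2}$. But that substitution parametrizes only the sublattice $\{m\equiv n \pmod 2\}$, so the right-hand sum is a \emph{sub}-sum of $S_3(s+2)$ and the inequality runs the other way; the deduced bound is in fact false, e.g. at $s=6$ one has $(1+2^{7})S_3(8)\approx 2.027$ while $\sum_{m,n}m^4(m^2+n^2)^{-8}\approx 2.016$ (the needed conclusion $g(6)\approx1.81>0$ is nevertheless true, with much room to spare). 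So a correct proof of $g(s)>0$ for all $s>1$ is still missing on both sides. Finally, your demand that $A_1\leq A_2$ be verified is a second good catch which the paper passes over in silence: if $A_1>A_2$, assertion 2 would fail, since for $A_2<A<A_1$ both Hessian entries are negative and $(0,1)$ is a local maximum, not a saddle point. Given positivity, $A_1\leq A_2$ is equivalent to $k(t_1)/g(t_1)\leq k(t_2)/g(t_2)$, i.e. to a monotonicity property of $s\mapsto 2+(s-1)\zeta_{\Z^2}(2s)/g(s)$, and this too remains to be proved.
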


\begin{proof}
We apply Proposition \ref{CNminsquare} and we compute
\begin{align*}
&K_{V_{a,t}^{LJ}}^1(A)=\frac{1}{A^{t_2+1}}\left\{a_1t_1 g(t_1)A^{t_2-t_1}-a_2t_2g(t_2) \right\},\\
&K_{V_{a,t}^{LJ}}^2(A)=\frac{1}{A^{t_2+1}}\left\{-a_1t_1 k(t_1)A^{t_2-t_1}+a_2t_2k(t_2) \right\}.
\end{align*}
We now remark that $g(s)>0$ and $k(s)>0$. Indeed, we have
\begin{align*}
g(s)&=\sum_{m,n}\frac{m^2}{(m^2+n^2)^{s+1}}-2(s+1)\sum_{m,n}\frac{m^2n^2}{(m^2+n^2)^{s+2}}\\
&=\sum_{m,n}\frac{m^2(m^2+n^2)}{(m^2+n^2)^{s+2}}-2(s+1)\sum_{m,n}\frac{m^2n^2}{(m^2+n^2)^{s+2}}\\
&=\sum_{m,n}\frac{m^4}{(m^2+n^2)^{s+2}}-(2s+1)\sum_{m,n}\frac{m^2n^2}{(m^2+n^2)^{s+2}}.
\end{align*}
By change of variable $(m,n)=(k+\ell,k-\ell)$, we obtain, since the number of terms in the right-hand sum is larger than in the left-hand one,
\begin{align*}
\sum_{m,n}\frac{m^2n^2}{(m^2+n^2)^{s+2}}&\leq \sum_{k,\ell}\frac{(k+\ell)^2(k-\ell)^2}{(2k^2+2\ell^2)^{s+2}}\\
&=\frac{1}{2^{s+2}}\sum_{k,\ell}\frac{k^4+\ell^4-2k^2\ell^2}{(k^2+\ell^2)^{s+2}},\\
&=\frac{1}{2^{s+1}}\sum_{k,\ell}\frac{k^4}{(k^2+\ell^2)^{s+2}}-\frac{1}{2^{s+1}}\sum_{k,\ell}\frac{k^2\ell^2}{(k^2+\ell^2)^{s+2}},
\end{align*}
i.e. $\displaystyle \sum_{m,n}\frac{m^2n^2}{(m^2+n^2)^{s+2}}\leq \frac{1}{1+2^{s+1}}\sum_{m,n}\frac{m^4}{(m^2+n^2)^{s+2}}$. Therefore, we get, for any $s>1$,
$$
g(s)\geq \left( 1-\frac{1+2s}{1+2^{s+1}} \right)>0.
$$
With exactly the same arguments, it follow that, for any $s>1$,
$$
k(s)\geq \left(\frac{s2^{s+2}-4}{1+2^{s+1}}  \right)\sum_{m,n}\frac{m^4}{(m^2+n^2)^{s+2}}>0.
$$
Hence, the result is proved because
$$
K_{V_{a,t}^{LJ}}^1(A)>0\iff A>A_1\quad \textnormal{and} \quad K_{V_{a,t}^{LJ}}^2(A)>0\iff A<A_2.
$$
\end{proof}

\begin{remark}
In the classical Lennard-Jones case $a=(2,1)$ and $t=(3,6)$, we numerically compute $A_1\approx 1.143$ and $A_2\approx 1.268$. In particular, if $A>A_2$, then the square lattice cannot be a minimizer of $(x,y)\mapsto E_{V}(x,y,A)$.
\end{remark}

\section{The classical Lennard-Jones energy: numerical study, degeneracy as $A\to +\infty$ and conjecture}\label{part4}

In this part, we study the energy per point associated to the classical Lennard-Jones potential, i.e. $a=(2,1)$ and $t=(3,6)$. The corresponding interaction potential is hence given by
$$
V(r)=\frac{1}{r^6}-\frac{2}{r^3},
$$
and its lattice energy is defined, for any Bravais lattice $L$, by
$$
E_V[L]=\zeta_L(12)-2\zeta_L(6).
$$
\subsection{Minimality among rhombic lattices}\label{secrhomb}

In Table \ref{table2}, we give the results of our numerical and theoretical investigations for the minimization of 
$$
\theta\mapsto E_V(\theta,A):=E_V(\cos\theta,\sin\theta,A)
$$
with respect to the area $A$. For any fixed $A>0$, we call $\theta_A$ a minimizer of $\theta\mapsto E_V(\theta,A)$. We have split $(0,+\infty)$ into four domains Rhi, $1\leq i\leq 4$, and we explain below the results we obtained.

\begin{table}[!h]
\centering
\begin{tabular}{|c|c|c|c|}
\hline
\textbf{Domain}  & \textbf{Area $A$} & \textbf{Minimizer $\theta_A$} & \textbf{Status}\\
\hline
Rh1 & $0<A<\frac{\pi}{(120)^{1/3}}\approx 0.637$ & $60^\circ$ & proved in \cite{Betermin:2014fy}\\
\hline
Rh2 & $\frac{\pi}{(120)^{1/3}}<A<A_{BZ}\approx 1.138$ & $60^\circ$ & num.+loc. min. proved in Th. \ref{THmain1}\\
\hline
Rh3 & $A_{BZ}<A<A_1\approx 1.143$ & $76.43^\circ\leq \theta< 90^\circ$ & num.\\
\hline
Rh4 & $A>A_1$ & $90^\circ$ & num.+loc. min. proved in Th. \ref{THmain2}\\
\hline
\end{tabular}
\caption{Summary of our numerical and theoretical studies for the minimization among rhombic lattices of $\theta \mapsto E_V(\theta,A)$ with respect to the value of $A$.}
\label{table2}
\end{table}

In \cite[Theorem 3.1]{Betermin:2014fy}, we proved that if $0<A<\frac{\pi}{(120)^{1/3}}$, then $\Lambda_A$ is the unique minimizer of $L\mapsto E_V[L]$ among Bravais lattices of fixed area $A$. Hence, it is clear that, on Rh1, $\theta_A=60^\circ$.

\medskip

The optimality of $\theta_A=60^\circ$ on Rh2 follows from \cite[Proposition 3.5]{Betermin:2014fy}. Indeed, we proved that $\Lambda_A$ is a minimizer of $L\mapsto E_V[L]$ among Bravais lattices of fixed area $A$ if and only if 
$$
A\leq A_{BZ}:=\inf_{|L|=1\atop L\neq \Lambda_1} \left(\frac{\zeta_L(12)-\zeta_{\Lambda_1}(12)}{2(\zeta_L(6)-\zeta_{\Lambda_1}(6))}  \right)^{1/3},
$$ 
and we numerically compute that $\displaystyle A_{BZ}\approx 1.138$. Furthermore, we have plotted in Figure  \ref{Fig1} $\theta\mapsto E_V(\theta,1)$. It turns out that this $A=1$ case plays an important role for the global minimization (i.e. without a density constraint) of $L\mapsto E_V[L]$ (see Section \ref{globminrmk}).

\begin{figure}[!h]
\centering
\includegraphics[width=8cm]{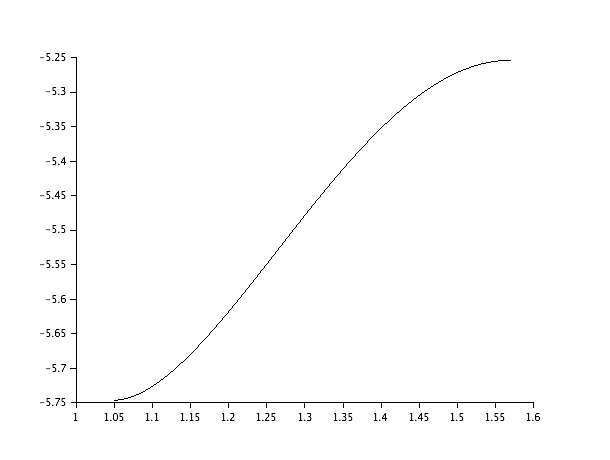}
\caption{Plot of $\theta\mapsto E_V(\theta,1)$, for $A=1$, on $[60^\circ,90^\circ]$ (in radians on the figure). The minimizer seems to be $\theta_A=60^\circ$.}
\label{Fig1}
\end{figure}

For an area between $A_{BZ}\approx 1.138$ and $A_1\approx 1.143$, the minimizer seems (numerically) to cover monotonically and continuously the interval $[76.43^\circ,90^\circ)$ (see Figure \ref{Fig2}, \ref{Fig3} and \ref{Fig4}). There is no doubt about the fact that the transition from $60^\circ$ to $76.43^\circ$ is discontinuous (see Figure \ref{Fig2}).

\begin{figure}[!h]
\centering
\includegraphics[width=8cm, trim=0cm 0cm 1mm 0cm, clip]{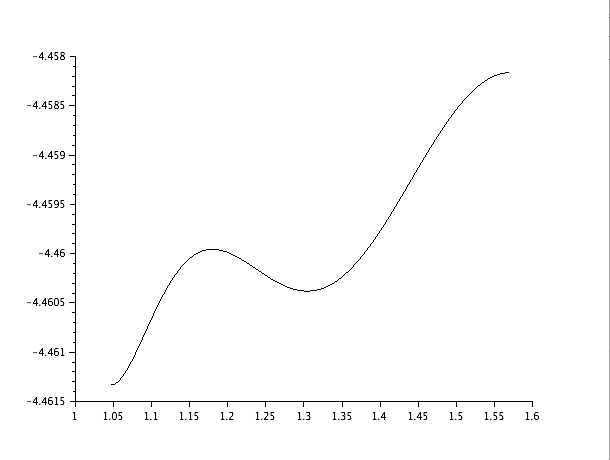} \includegraphics[width=8cm]{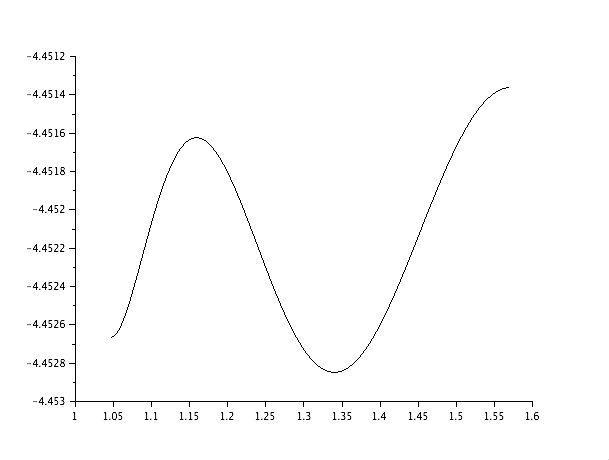}
\caption{Plots of $\theta\mapsto E_V(\theta,A)$ on $[60^\circ,90^\circ]$ (in radians on the figure), for $A=1.137$ (on the left) and $A=1.138$ (on the right). The minimizer seems to be $\theta_A=60^\circ$ for $A=1.137$ and $\theta_A=76.43^\circ$ for $A=1.138$.}
\label{Fig2}
\end{figure}

\begin{figure}[!h]
\centering
\includegraphics[width=8cm]{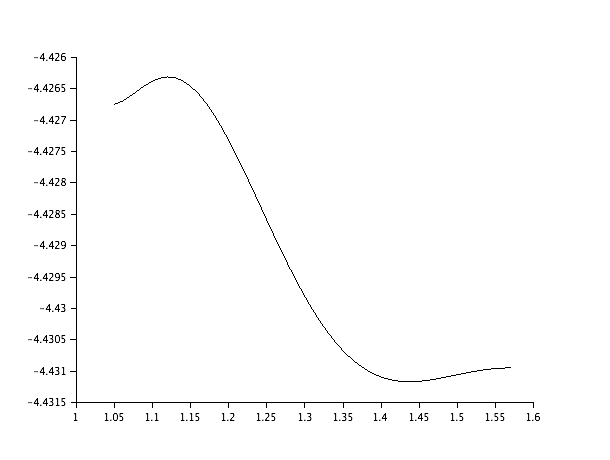}
\caption{Plot of $\theta\mapsto E_V(\theta,1.141)$ on $[60^\circ,90^\circ]$ (in radians on the figure), for $A=1.141$. The minimizer seems to be $\theta_A\approx 82.51^\circ$.}
\label{Fig3}
\end{figure}

\begin{figure}[!h]
\centering
\includegraphics[width=8cm]{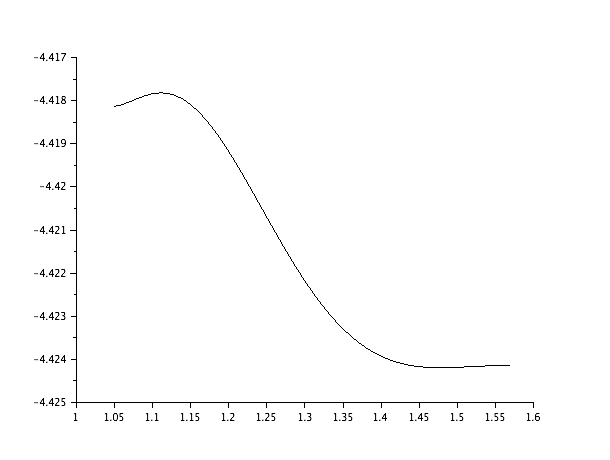} \includegraphics[width=8cm]{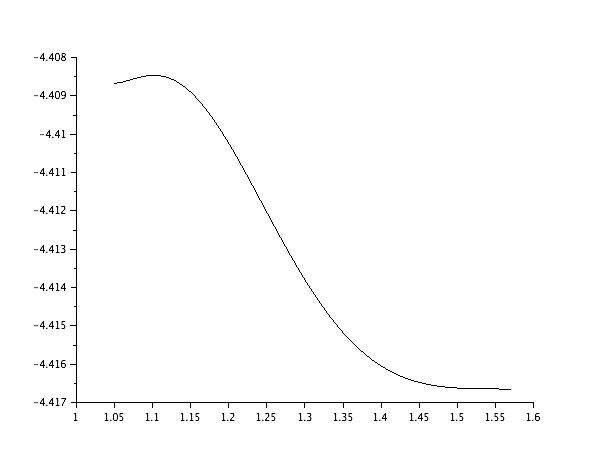}
\caption{Plots of $\theta\mapsto E_V(\theta,A)$ on $[60^\circ,90^\circ]$ (in radians on the figure), for $A=1.142$ (on the left) and $A=1.1431$ (on the right). The minimizer seems to be $\theta_A\approx 89.74^\circ$ for $A=1.142$ and $\theta_A=90^\circ$ for $A=1.1431$.}
\label{Fig4}
\end{figure}

For $A$ in the domain Rh4, our numerical simulations give us the optimality of $\theta_A=90^\circ$ for $A_1<A<20$. We will see in the next subsection that the minimizer stays rectangular if $A$ is large enough (see Figure \ref{Fig5} for the $A=3$ case).

\begin{figure}[!h]
\centering
\includegraphics[width=8cm]{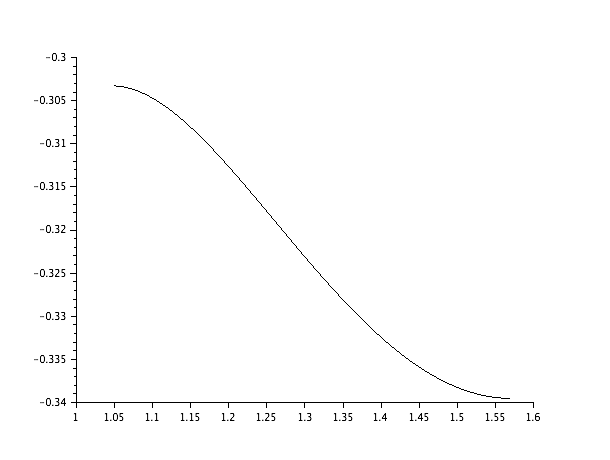}
\caption{Plot of $\theta\mapsto E_V(\theta,3)$ on $[60^\circ,90^\circ]$ (in radians on the figure), for $A=3$. The minimizer seems to be $\theta_A=90^\circ$.}
\label{Fig5}
\end{figure}

\begin{remark}
It numerically appears that the minimizers of $(x,y)\mapsto E_V(x,y,A)$ on $\mathcal{D}$ are rhombic lattices if $0<A<A_2$.
\end{remark}

\subsection{Minimality among rectangular lattices}\label{rectangular}

As in the previous subsection, we give the results of our numerical and theoretical investigations for the minimization of 
$$
y\mapsto E_V(y,A):=E_V(0,y,A)\quad \textnormal{on}\quad [1,+\infty)
$$ 
with respect to area $A$ in Table \ref{table3}. For any fixed $A>0$, we call $y_A$ a minimizer of $y\mapsto E_V(y,A)$. We have split $(0,+\infty)$ into three domains Recti, $1\leq i\leq 3$ and we explain below the results we obtained. In particular, we will partially explain the behavior of the minimizer on Rect3.

\begin{table}[!h]
\centering
\begin{tabular}{|c|c|c|c|}
\hline
\textbf{Domain}  & \textbf{Area $A$} & \textbf{Minimizer $y_A$}& \textbf{Status}\\
\hline
Rect1 & $0<A<\frac{\pi}{(120)^{1/3}}\approx 0.637$ & $1$ & proved in \cite{Betermin:2014fy}\\
\hline
Rect2 & $\frac{\pi}{(120)^{1/3}}<A<A_2\approx 1.268$ & $1$ & num.+loc. min. proved in Th. \ref{THmain2}\\
\hline
Rect3 & $A>A_2$ & $\nearrow$ from $1$ to $+\infty$ & num.+proved for large A in Prop. \ref{degrect} \\
\hline
\end{tabular}
\caption{Summary of our numerical and theoretical studies for the minimization among rhombic lattices of $y\mapsto E_V(y,A)$.}
\label{table3}
\end{table}

The optimality of $y_A=1$, i.e. the square lattice, on Rect1 is clear by \cite[Theorem 3.1]{Betermin:2014fy} and Montgomery result \cite[Lemma 7]{Mont}. Indeed, Montgomery proved that $\partial_y \theta(x,y,\alpha)\geq 0$ for any $(x,y)\in\mathcal{D}$ and any $\alpha>0$, where $\theta(x,y,\alpha):=E_{f_\alpha}(x,y,1/2)$ and $f_\alpha(r)=e^{-\pi\alpha r}$. Furthermore, we proved in \cite[Theorem 3.1]{Betermin:2014fy} that, for any $0<A<\frac{\pi}{(120)^{1/3}}$ and any Bravais lattice $L_A$ with area $A$,
$$
E_V[L_A]=C_A+\frac{\pi^3}{A^3}\int_1^{+\infty}\left(\theta_{L_A}\left(\frac{\alpha}{2A}  \right)-1  \right)g_A(\alpha)\frac{d\alpha}{\alpha},
$$
where $C_A$ is a constant depending on $A$ but independent of $L_A$, and $g_A(\alpha)\geq 0$ for any $\alpha\geq 1$. Thus, we get
$$
\partial_y E_V(y,A)\geq 0
$$ 
for any $y\geq 1$ when $0<A<\frac{\pi}{(120)^{1/3}}$. Therefore, $y=1$ is the unique minimizer of $y\mapsto E_V(y,A)$.

\medskip

On Rect2, $y=1$ seems numerically to be the minimizer (see Figure \ref{Fig6}). Actually, it is not difficult to prove rigorously, by using the algorithmic method detailed in \cite[Lem. 4.19]{BeterminPetrache}, that $y\mapsto E_V(y,A)$ is an increasing function, for any chosen value $A\in\textnormal{Rect2}$.

\begin{figure}[!h]
\centering
\includegraphics[width=8cm]{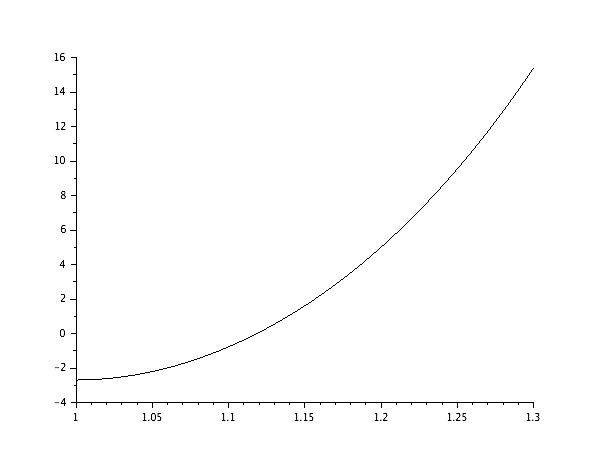} \includegraphics[width=8cm]{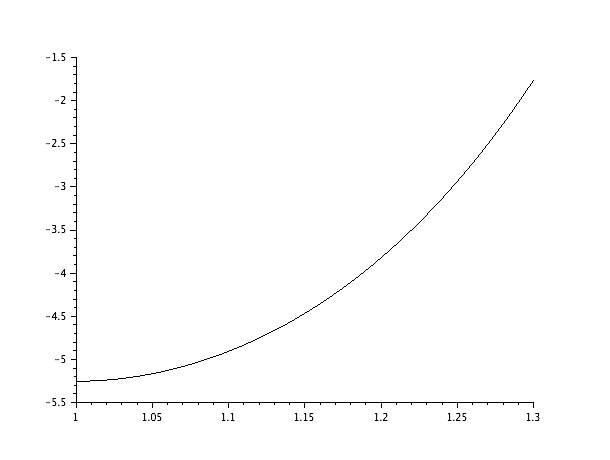}
\caption{Plots of $y\mapsto E_V(y,A)$ for $A=0.8$ (on the left) and $A=1$ (on the right). It seems that the minimizer is $y_A=1$.}
\label{Fig6}
\end{figure}

\begin{figure}[!h]
\centering
\includegraphics[width=8cm]{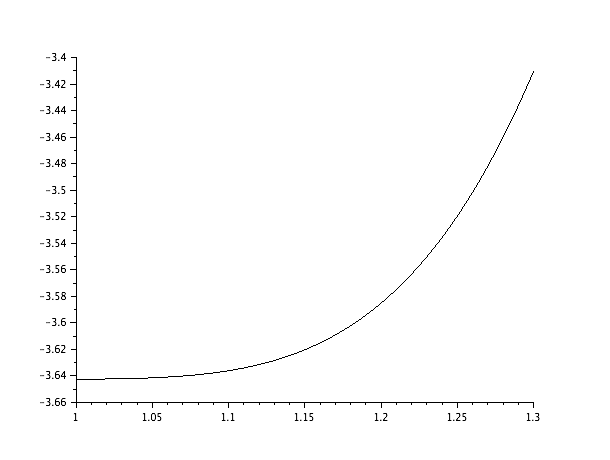} \includegraphics[width=8cm, trim=0cm 1mm 0cm 0cm,clip]{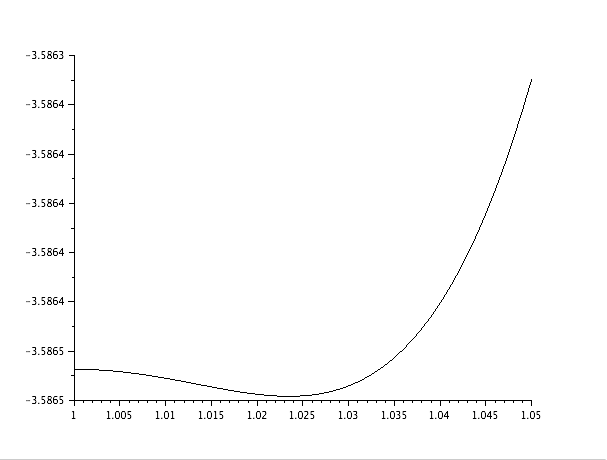}
\caption{Plots of $y\mapsto E_V(y,A)$ for $A=1.26$ (on the left) and $A=1.27$ (on the right). It seems that the minimizer is $y_A=1$ in the fist case and $y_A\approx 1.033$ in the second case.}
\label{Fig7}
\end{figure}

Numerically, in the domain Rect3, the minimizer seems to cover $(1,+\infty)$ monotonically and continuously with respect to $A$. In particular, the minimizer degenerates as $A$ goes to infinity, i.e. $\lim_{A\to +\infty} y_A=+\infty$ (see Figures \ref{Fig7}, \ref{Fig8} and \ref{Fig9}).\\

\begin{figure}[!h]
\centering
\includegraphics[width=8cm, trim=0cm 1mm 1mm 1mm, clip]{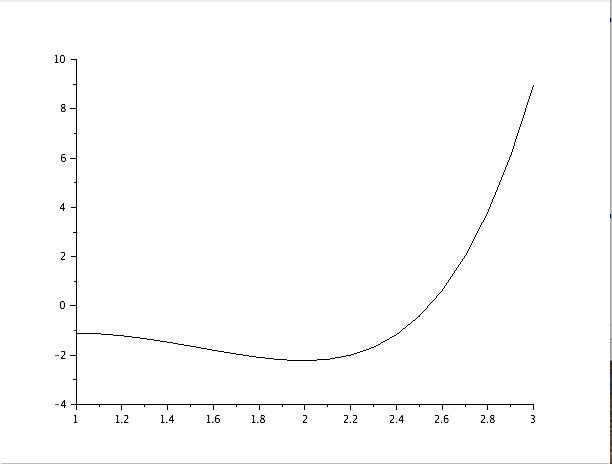} \includegraphics[width=8cm]{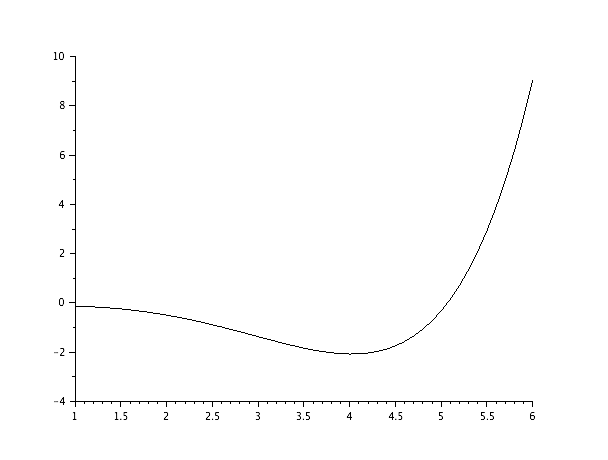}
\caption{Plots of $y\mapsto E_V(y,A)$ for $A=2$ (on the left) and $A=4$ (on the right).}
\label{Fig8}
\end{figure}

\begin{figure}[!h]
\centering
\includegraphics[width=8cm]{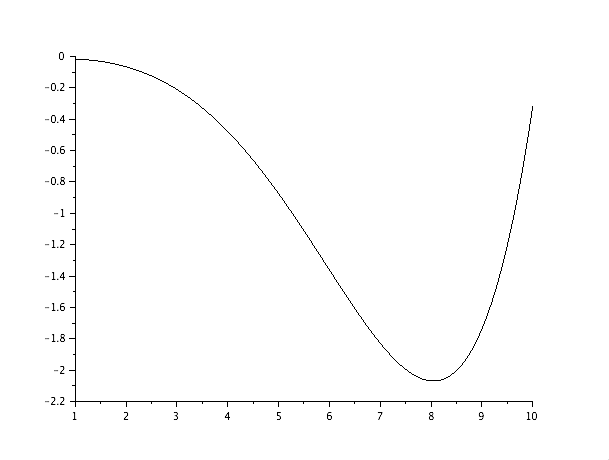} \includegraphics[width=8cm, trim=0cm 0cm 0cm 1mm,clip]{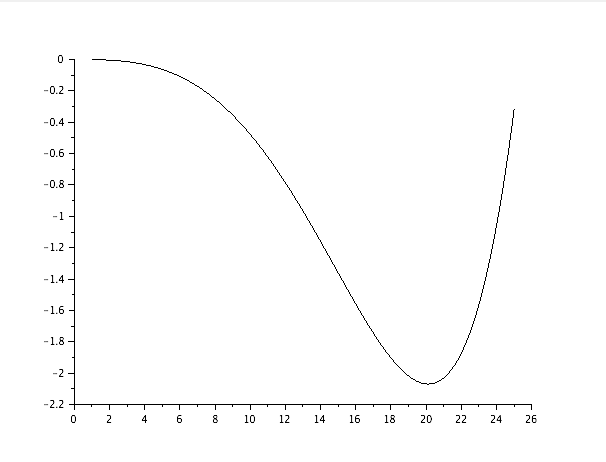}
\caption{Plots of $y\mapsto E_V(y,A)$ for $A=8$ (on the left) and $A=20$ (on the right).}
\label{Fig9}
\end{figure}

\begin{remark}
Heuristically, the degeneracy of the minimizer as $A\to +\infty$ follows from the fact that
$$
E_V[L_A]=\frac{1}{A^6}\left\{ \zeta_L(12)-2A^3\zeta_L(6) \right\}\sim -\frac{2}{A^3}\zeta_L(6)
$$
and the derivative, with respect to $x$, of the right-hand side expression is positive by Proposition \ref{genmgt}. Furthermore, the competition between $\zeta_L(12)$ and $-2A^3\zeta_L(6)$ is naturally won by the first one as $A\to 0$, and that explains why the triangular lattice is the minimizer for $A\in \textnormal{Rect1}$.
\end{remark}

The following results prove the degeneracy of the minimizer among rectangular lattices, as $A\to +\infty$, and the fact that the minimizer of $L_A\mapsto E_V[L_A]$ is rectangular if $A$ is large enough.

\begin{prop}[Degeneracy in the rectangular case]\label{degrect}
There exists $A_3>0$ such that, if $y_A\geq 1$ is a minimizer of $y\mapsto E_V(y;A)$, then for any $A>A_3$,
\begin{equation*}
X_1(A)^{1/3}\leq y_A\leq X_2(A)^{1/3}
\end{equation*}
where
$$
X_1(A)=\frac{2\zeta_{\Z^2}(6)A^3-\sqrt{4\zeta_{\Z^2}(6)^2A^6-32A^4+8\zeta_{\Z^2}(12)A^2}}{4}
$$
and
$$
X_2(A)=\frac{2\zeta_{\Z^2}(6)A^3+\sqrt{4\zeta_{\Z^2}(6)^2A^6-32A^4+8\zeta_{\Z^2}(12)A^2}}{4}
$$
In particular, 
\begin{enumerate}
\item we have $\displaystyle \lim_{A\to +\infty} y_A=+\infty$;
\item more precisely, there exists $C>0$ such that, for any $A>A_3$, $y_A\leq CA$.
\end{enumerate}
\end{prop}
\begin{proof} Let $A>0$ and $y\geq 1$, then we have
$$
 E_V(y,A)=A^{-6}\left(y^6\sum_{m,n}\frac{1}{(m^2+y^2n^2)^6}-2y^3A^3 \sum_{m,n}\frac{1}{(m^2+y^2n^2)^3} \right).
$$
Let $y_A$ be a minimizer, then we have $E_V(y_A,A)\leq E_V(A^{1/3},A)$, that is to say
\begin{align*}
y_A^6\sum_{m,n}\frac{1}{(m^2+y_A^2n^2)^6}-2y_A^3A^3 \sum_{m,n}\frac{1}{(m^2+y_A^2n^2)^3}\leq A^2\sum_{m,n}\frac{1}{(m^2+A^{2/3}n^2)^6}-2A^4 \sum_{m,n}\frac{1}{(m^2+A^{2/3}n^2)^3}.
\end{align*}
We remark that, for any $s\in\{3,6\}$ and $\alpha\geq 1$,
$$
2\leq \sum_{m,n} \frac{1}{(m^2+\alpha n^2)^s}\leq \zeta_{\Z^2}(2s).
$$
Thus, we get for $A\geq 1$,
$$
-4A^4+2y_A^3\zeta_{\Z^2}(6)A^3+\zeta_{\Z^2}(12)A^2-2y_A^6\geq 0.
$$
In particular, this inequality fails if $A$ is large enough. Indeed, we can rewrite this inequality as $R_A(y_A^3)\geq 0$ where $R_A$ is defined by
$$
R_A(X):=-2X^2+2\zeta_{\Z^2}(6)A^3 X+\zeta_{\Z^2}(12)A^2-4A^4.
$$
The discriminant of the polynomial $R_A$ is
$$
\Delta_A=4\zeta_{\Z^2}(6)^2A^6+8(\zeta_{\Z^2}(12)A^2-4A^4).
$$
Thus, if $A$ is sufficiently large, then $0<\Delta_A<4\zeta_{\Z^2}(6)^2A^6$. If follows that $R_A$ admits two positive zeros if $A$ is large enough, which are $1\leq X_1(A)<X_2(A)$ given in the statement of the proposition. Therefore, $R_A(y_A^3)\geq 0$ implies that, for $A$ sufficiently large, $X_1(A)\leq y_A^3\leq X_2(A)$ with
$$
X_1(A)^{1/3}\sim C_1 A^{1/3}\quad \textnormal{and}\quad X_2(A)^{1/3}\sim C_2 A,
$$
as $A\to +\infty$, where $C_1$ and $C_2$ are both positive constants, and the result is proved.
\end{proof}

\begin{remark}
It is crystal clear that the same result holds for all the Lennard-Jones type potentials.
\end{remark}

The following result shows why the minimizer is rectangular if $A$ is large enough.

\begin{prop}[The minimizer is rectangular at sufficiently low density]\label{Rankinmethod} 
There exists $A_4>0$ such that for any $A> A_4$, a minimizer $(x_A,y_A)$ of $(x,y)\mapsto E_V(x,y,A)$ satisfies $x_A=0$, i.e. any minimizer is a rectangular lattice.
\end{prop}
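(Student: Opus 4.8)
The plan is to reduce everything to a sign statement for $\partial_x E_V$ in the degenerate regime $y\to+\infty$, using the exact splitting $A^6E_V(x,y,A)=\zeta(x,y,12,1)-2A^3\zeta(x,y,6,1)$, where $\zeta(x,y,s,1)=E_{f_{s/2}}(x,y,1)$ and $f_{s/2}(r)=r^{-s/2}$. Differentiating in $x$ yields $A^6\partial_xE_V(x,y,A)=\partial_x\zeta(x,y,12,1)-2A^3\partial_x\zeta(x,y,6,1)$. A minimizer exists because $E_V(x,y,A)\to+\infty$ as $y\to+\infty$ (the term $\zeta(x,y,12,1)\sim 2\zeta_{\mathrm R}(12)\,y^6$ dominates, $\zeta_{\mathrm R}$ being the Riemann zeta function). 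Since each $f_{s/2}$ is completely monotone, Proposition \ref{genmgt} gives $\partial_x\zeta(x,y,s,1)<0$ for all $0<x<1/2$, $y>\sqrt3/2$ and $s\in\{6,12\}$, so I may write $A^6\partial_xE_V=\left|\partial_x\zeta(x,y,6,1)\right|\bigl(2A^3-R(x,y)\bigr)$ with $R(x,y):=\partial_x\zeta(x,y,12,1)/\partial_x\zeta(x,y,6,1)>0$. Thus it suffices to bound $R$ by a constant $M$ on the region visited by the minimizer: then $\partial_xE_V>0$ whenever $A^3>M/2$, the slice $x\mapsto E_V(x,y_A,A)$ is strictly increasing on $[0,1/2]$, and its minimum is at $x=0$, i.e. $x_A=0$.

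Next I would localize the minimizer by proving $y_A\to+\infty$. Testing against the rectangular competitor $y=A$ gives $E_V(0,A,A)\to 2\zeta_{\mathrm R}(12)-4\zeta_{\mathrm R}(6)<0$, so the global minimum value is $\le c_0<0$ for all large $A$, with $c_0$ a constant. Conversely, on any compact strip $K_{Y_0}=\{(x,y)\in\mathcal D:y\le Y_0\}$ one has $\zeta(x,y,12,1)\ge0$ and $\zeta(x,y,6,1)\le Z:=\max_{(x,y)\in K_{Y_0}}\zeta(x,y,6,1)<\infty$, whence $E_V\ge-2ZA^{-3}\to0^-$ on $K_{Y_0}$. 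Comparing the two bounds forces the minimizer out of $K_{Y_0}$ once $A$ is large; since $Y_0$ is arbitrary, $y_A\to+\infty$ (the matching upper bound $y_A\le CA$, not needed here, follows as in Proposition \ref{degrect}). In particular $y_A\ge1>\sqrt3/2$ for $A$ large, so Proposition \ref{genmgt} applies at $y=y_A$ and the whole slice $[0,1/2]$ lies in $\mathcal D$.

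The heart of the matter, and the step I expect to be the main obstacle, is to show that $R(x,y)$ stays bounded as $y\to+\infty$, uniformly in $x\in(0,1/2)$. From Proposition \ref{deriv}, $\partial_x\zeta(x,y,s,1)=-\tfrac sy\sum_{n\ne0}n\sum_m(m+xn)\bigl[\tfrac1y(m+xn)^2+yn^2\bigr]^{-(s/2+1)}$; the pairing $n\leftrightarrow-n$ shows both signs of $n$ contribute equally and the dominant shell is $|n|=1$. Applying Poisson summation to $\phi(w)=w\bigl[\tfrac1yw^2+y\bigr]^{-p}$ with $p=s/2+1$, the zeroth coefficient vanishes (odd integrand), and after the scaling $w=yu$ the $k$-th coefficient reduces to $y^{2-p}\int_{\mathbb R}u(u^2+1)^{-p}e^{-2\pi iky\,u}\,du$, whose large-$y$ asymptotics is governed by the order-$p$ pole at $u=-i$ and has size $\asymp k^{p-1}y\,e^{-2\pi ky}$. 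Tracking prefactors, the $k=\pm1$ modes give $\partial_x\zeta(x,y,s,1)\sim-\kappa_s\,e^{-2\pi y}\sin(2\pi x)$ as $y\to+\infty$ with $\kappa_s>0$; crucially the algebraic powers of $y$ cancel, so both the exponential rate $e^{-2\pi y}$ and the angular factor $\sin(2\pi x)$ are identical for $s=6$ and $s=12$ (the same remains true near $x\in\{0,\tfrac12\}$, where both derivatives vanish to the same order). Hence $R(x,y)\to\kappa_{12}/\kappa_6$ uniformly in $x$, so $R\le M$ on $\{y\ge y_0,\ 0<x<1/2\}$ for some $M,y_0$. Combined with $y_A\to+\infty$, this gives $\partial_xE_V(x,y_A,A)>0$ on $(0,1/2)$ as soon as $A^3>M/2$; taking $A_4$ to be the resulting threshold yields $x_A=0$ for every $A>A_4$. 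The delicate point is precisely the uniform asymptotic equivalence of the two $x$-derivatives: crude termwise bounds miss the cancellation responsible for the $e^{-2\pi y}$ decay and are far too lossy to control the ratio $R$.
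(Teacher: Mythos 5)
Your proof is correct, but it follows a genuinely different architecture from the paper's. The paper never localizes the minimizer: it proves the stronger, global statement that for $A$ large enough $\partial_x E_V(x,y,A)\geq 0$ at \emph{every} $(x,y)\in\mathcal{D}$, with equality exactly at $x=0$. It does so by writing $A^6\partial_x E_V=\partial_x\zeta(x,y,12,1)-2A^3\partial_x\zeta(x,y,6,1)$ through Rankin's Fourier--Bessel expansion, whose $k$-th term is a second difference $\Lambda(k,y,s)=\lambda_{k+2}-2\lambda_{k+1}+\lambda_k$ of Bessel-type quantities multiplied by the factor $(k+1)\sin 2\pi x-\sin 2\pi(k+1)x$; term-by-term positivity for large $A$ then follows from Rankin's lemma $\Lambda(k,y,3)>0$ (for $k\geq 1$, $y\geq\sqrt{3}/2$), Montgomery's inequality that the trigonometric factor is nonnegative and vanishes for all $k$ only at $x=0$, and a (rather loosely stated) uniform comparison of $\Lambda(k,y,6)$ with $y^3\Lambda(k,y,3)$. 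You instead prove only what the proposition needs: first the localization $y_A\to+\infty$ by an energy comparison (the competitor $(0,A)$ gives $E_V(0,A,A)\to 2\zeta_{\mathrm R}(12)-4\zeta_{\mathrm R}(6)<0$, while $E_V\geq -2ZA^{-3}\to 0^-$ on any compact strip), a step absent from the paper's proof of this proposition though close in spirit to Prop.~\ref{degrect}; then the large-$y$ asymptotics $\partial_x\zeta(x,y,s,1)\sim-\kappa_s e^{-2\pi y}\sin 2\pi x$, uniform in $x$, to bound the ratio $R$ and get $2A^3-R>0$ on the slices actually visited. The analytic core is the same object in both cases --- your Poisson summation in $m$ is precisely how Rankin's expansion arises, and the cancellation of powers of $y$ you emphasize is visible there as $\sqrt{y}\,K_{s-1/2}(2\pi y)\sim\tfrac12 e^{-2\pi y}$ --- but because you only need the regime where the $k=|n|=1$ mode dominates, you can dispense with Rankin's positivity lemma and Montgomery's trigonometric inequality altogether; the price is the extra localization step, and the remaining estimate (uniform control of $R$ for large $y$, including near $x\in\{0,\tfrac12\}$, which your sketch correctly handles since $|\sin 2\pi kx|\leq k\sin 2\pi x$ dominates the higher modes by the first one) is arguably easier to make fully rigorous than the paper's uniform-in-$(k,y)$ ratio claim. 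Both arguments conclude identically: monotonicity of $x\mapsto E_V(x,y_A,A)$ forces $x_A=0$.
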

\begin{proof}
Let us prove that, for $A$ sufficiently large and any $(x,y)\in\mathcal{D}$, $\partial_x E_V(x,y,A)\geq 0$ with equality if and only if $x=0$. Using Rankin's notations \cite[Section 4., p. 157]{Rankin} and the notation $\zeta(x,y,s,A)=\zeta_{L_A}(s)$, we get, for any $x\neq 0$,
\begin{align*}
&A^6\partial_x E_V(x,y,A)\\
&=\partial_x\zeta(x,y,12,1)-2A^3\partial_x\zeta(x,y,6,1)\\
&=\frac{16\sqrt{\pi}}{4\sin^2\pi x}\sum_{k=1}^{+\infty}\left(\frac{C_1}{y^3} \Lambda(k,y,3)A^3-\frac{C_2}{y^6}\Lambda(k,y,6)  \right)\left\{ (k+1)\sin2\pi x -\sin2\pi(k+1)x \right\}\\
&=\frac{16\sqrt{\pi}}{4y^6\sin^2\pi x}\sum_{k=1}^{+\infty}\left(C_1y^3 \Lambda(k,y,3)A^3-C_2\Lambda(k,y,6)  \right)\left\{ (k+1)\sin2\pi x -\sin2\pi(k+1)x \right\}
\end{align*}
where
\begin{itemize}
\item $C_1$ and $C_2$ are both positive constants;
\item $\Lambda(k,y,s):=\lambda_{k+2}(y,s)-2\lambda_{k+1}(y,s)+\lambda_k(y,s)$;
\item $\lambda_k(y,s):=\sigma_{1-2s}(k)(2\pi ky)^{s+1/2}K_{s-1/2}(2\pi k y)$;
\item $\displaystyle \sigma_k(n)=\sum_{d|n}d^k$;
\item $\displaystyle K_\nu(u)=\int_0^{+\infty}e^{-u\cosh t}\cosh(\nu t)dt$.
\end{itemize}
Rankin \cite[Eq. (21)]{Rankin} proved that $\Lambda(k,y,3)>0$ for any $k\geq 1$ and any $y\geq \sqrt{3}/2$. Furthermore, by definition, we can write $\Lambda(k,y,3)=y^{7/2}\tilde{\Lambda}(k,y,3)$ and $\Lambda(k,y,6)=y^{13/2}\tilde{\Lambda}(k,y,6)$, where $\tilde{\Lambda}(k,y,3)$ and $\tilde{\Lambda}(k,y,6)$ are of the same order with respect to $y$. Therefore, we get, for any $(x,y)\in\mathcal{D}$,
\begin{align*}
&A^6\partial_x E_V(x,y,A)\\
&=\frac{16\sqrt{\pi}\sqrt{y}}{4\sin^2\pi x}\sum_{k=1}^{+\infty}\left(C_1\tilde{\Lambda}(k,y,3)A^3-C_2\tilde{\Lambda}(k,y,6)  \right)\left\{ (k+1)\sin2\pi x -\sin2\pi(k+1)x \right\}.
\end{align*}
Thus, since (see Rankin \cite[p. 158]{Rankin})
$$
1\leq  \sigma_{1-2s}(r)<\zeta(2s-1), \quad s\in\{3,6\}
$$
and (see \cite[p. 81]{Mont})
$$
(k+1)\sin2\pi x -\sin2\pi(k+1)x\geq 0, \quad k\geq 1, 0\leq x\leq 1/2,
$$
with equality for any $k\geq 1$ if and only if $x=0$, we obtain that $A^6\partial_x E_V(x,y,A)$ is positive, for any $(x,y)\in\mathcal{D}$, for $A$ sufficiently large. Consequently, there exists $A_4$ such that for any $A>A_4$, 
$$
\partial_x E_V(x,y,A)\geq 0,
$$
with equality if and only if $x=0$. It follows that the minimizer of $(x,y)\mapsto E_V(x,y,A)$ is such that $x_A=0$ for any $A>A_4$.
\end{proof}

A summary of both previous results is:

\begin{corollary}
For any $A>0$, we call $(x_A,y_A)\in\mathcal{D}$ a minimizer of $(x,y)\mapsto E_V(x,y,A)$. Then:
\begin{enumerate}
\item for $A$ sufficiently large, $x_A=0$;
\item it holds $\displaystyle \lim_{A\to +\infty} y_A=+\infty$.
\end{enumerate} 
\end{corollary}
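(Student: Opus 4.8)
The plan is to obtain the corollary as a direct synthesis of the two preceding propositions, since statements (1) and (2) are exactly the conclusions of Proposition \ref{Rankinmethod} and Proposition \ref{degrect} respectively, once they are read off the \emph{same} minimizer. First I would set $A_* := \max\{A_3, A_4\}$, where $A_3$ and $A_4$ are the thresholds furnished by Proposition \ref{degrect} and Proposition \ref{Rankinmethod}, and restrict attention to $A > A_*$.

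For the first assertion I would simply invoke Proposition \ref{Rankinmethod}: for every $A > A_4$, any minimizer $(x_A, y_A)$ of $(x,y) \mapsto E_V(x,y,A)$ on $\mathcal{D}$ satisfies $x_A = 0$, which is precisely statement (1). For the second assertion, the key observation is that once $x_A = 0$ is established, the defining constraint $x_A^2 + y_A^2 \geq 1$ of $\mathcal{D}$ forces $y_A \geq 1$. Moreover, because the global minimizer now lies on the rectangular slice $\{(0,y) : y \geq 1\}$, the coordinate $y_A$ must itself minimize the one-variable function $y \mapsto E_V(0,y,A) = E_V(y,A)$ over $y \geq 1$: otherwise a competitor on this slice would strictly lower the energy inside $\mathcal{D}$, contradicting global minimality. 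Hence $y_A$ is a minimizer in the sense required by Proposition \ref{degrect}, so for $A > A_3$ the bounds \eqref{asymptrect} apply and give $\lim_{A \to +\infty} y_A = +\infty$. Taking $A > A_*$ then delivers both statements simultaneously.

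I do not anticipate a genuine analytic obstacle here, since the substantial work has already been done: the Rankin-type sign analysis of $\partial_x E_V$ underlying Proposition \ref{Rankinmethod} and the discriminant estimate trapping $y_A$ between $X_1(A)^{1/3}$ and $X_2(A)^{1/3}$ in Proposition \ref{degrect}. The only point deserving a moment's care is the compatibility step, namely confirming that the global minimizer, once shown to be rectangular, truly solves the restricted rectangular problem to which Proposition \ref{degrect} is stated; this is immediate from the nesting of the minimization over the slice inside the minimization over $\mathcal{D}$, and is where I would place a single clarifying sentence in the written proof.
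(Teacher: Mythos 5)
Your proof is correct and is essentially the paper's own argument: the paper presents this corollary as a direct summary of the two preceding results, combining Proposition \ref{Rankinmethod} (which gives $x_A=0$ for $A$ large) with Proposition \ref{degrect} (which then gives $X_1(A)^{1/3}\leq y_A\leq X_2(A)^{1/3}$ and hence the degeneracy), exactly as you do. Your clarifying sentence on the compatibility step---that the global minimizer, once known to lie on the slice $\{(0,y):y\geq 1\}$, necessarily minimizes the one-variable rectangular energy $y\mapsto E_V(y,A)$ so that Proposition \ref{degrect} applies---is precisely the small piece of glue the paper leaves implicit.
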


\begin{remark} It numerically appears that the minimizer of $(x,y)\mapsto E_V(x,y,A)$ on $\mathcal{D}$ is a rectangular lattice for any $A>A_1$.
\end{remark}

\subsection{Remarks about the global minimality}\label{globminrmk}

Using our previous work \cite{Betermin:2014fy}, we can prove the following result explaining why the $A=1$ case is fundamental for finding the global minimizer of the Lennard-Jones energy, among Bravais lattices, without an area constraint.

\begin{prop}\label{globmin}
If $(1/2,\sqrt{3}/2)$ is the unique minimizer of $(x,y)\mapsto E_V(x,y,1)$, then the global minimizer of $(x,y,A)\mapsto E_V(x,y,A)$ is unique and triangular.
\end{prop}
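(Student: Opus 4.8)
The plan is to exploit the exact homogeneity of the Lennard-Jones energy in the area variable in order to collapse the three-parameter problem to a one-parameter (shape) problem, and then to feed in the hypothesis together with the optimality of the triangular lattice for the Epstein zeta functions.

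First I would use the scaling $\zeta(x,y,s,A)=A^{-s/2}\zeta(x,y,s,1)$ to write
\[
E_V(x,y,A)=A^{-6}\zeta(x,y,12,1)-2A^{-3}\zeta(x,y,6,1).
\]
For a fixed shape $(x,y)$ the right-hand side is a strictly convex quadratic in the variable $A^{-3}$, with positive leading coefficient $\zeta(x,y,12,1)>0$; hence $A\mapsto E_V(x,y,A)$ has the unique minimizer $A_{(x,y)}=(\zeta(x,y,12,1)/\zeta(x,y,6,1))^{1/3}$ and
\[
\min_{A>0}E_V(x,y,A)=-\Phi(x,y),\qquad \Phi(x,y):=\frac{\zeta(x,y,6,1)^2}{\zeta(x,y,12,1)}.
\]
Thus the global minimizers of $(x,y,A)\mapsto E_V(x,y,A)$ are exactly the pairs $((x,y),A_{(x,y)})$ with $(x,y)$ a maximizer of $\Phi$ on $\mathcal{D}$, and the whole statement reduces to proving that $\Phi$ is uniquely maximized at $(1/2,\sqrt3/2)$.

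Second, I would recast the hypothesis as a one-parameter family. Since for fixed $A$ minimizing $E_V(\cdot,\cdot,A)$ is the same as minimizing $\Psi_c:=\zeta(x,y,12,1)-2c\,\zeta(x,y,6,1)$ with $c=A^3$, and since $\Psi_c=\Psi_1+2(1-c)\,\zeta(x,y,6,1)$, the hypothesis (that $\Psi_1$ is uniquely minimized at the triangular point) combined with the Rankin--Cassels--Ennola--Diananda/Montgomery theorem that $(x,y)\mapsto\zeta(x,y,6,1)$ is uniquely minimized at the triangular point (see Proposition \ref{genmgt} and Examples \ref{defzetatheta}) shows that for every $c\in(0,1]$ the function $\Psi_c$ is uniquely minimized at $(1/2,\sqrt3/2)$, being a sum of two functions with that same unique minimizer and a nonnegative weight $2(1-c)$ on the second. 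In particular, for any $(x,y)\neq(1/2,\sqrt3/2)$ whose optimal parameter $c_{(x,y)}=\zeta(x,y,12,1)/\zeta(x,y,6,1)$ satisfies $c_{(x,y)}\le 1$, evaluating the family at $c=c_{(x,y)}$ gives $-\Phi(x,y)=c_{(x,y)}^{-2}\Psi_{c_{(x,y)}}(x,y)>c_{(x,y)}^{-2}\Psi_{c_{(x,y)}}(1/2,\sqrt3/2)\ge-\Phi(1/2,\sqrt3/2)$, hence $\Phi(x,y)<\Phi(1/2,\sqrt3/2)$.

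The hard part will be the complementary range $c_{(x,y)}>1$, i.e. $\zeta(x,y,12,1)>\zeta(x,y,6,1)$, which forces the lattice to have very short vectors (a thin, nearly degenerate primitive cell); here the convex-combination argument breaks down because the weight $2(1-c)$ becomes negative, so the hypothesis no longer applies at the relevant value of $c$. For these competitors I would argue directly on $\Phi$: since $\zeta(x,y,12,1)\ge\zeta(1/2,\sqrt3/2,12,1)$ always, it suffices to control $\zeta(x,y,6,1)^2$ from above in terms of $\zeta(x,y,12,1)$, and the key point is that $\Phi$ tends to a finite constant strictly below $\Phi(1/2,\sqrt3/2)$ as the cell degenerates. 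Making this bound uniform over the entire thin region is the crux of the proof; I would handle it by first invoking the Rankin-type argument of Proposition \ref{Rankinmethod} to reduce the competitors in this regime to rectangular lattices, leaving only the single parameter $y\ge1$, and then estimating $\Phi(0,y)$ by separating the contribution of the shortest lattice line from the remainder and comparing with the limiting value. Equivalently, one can combine the strict inequality $q:=\zeta(x,y,12,1)-\zeta(1/2,\sqrt3/2,12,1)>2\bigl(\zeta(x,y,6,1)-\zeta(1/2,\sqrt3/2,6,1)\bigr)=:2p$ furnished by the hypothesis with a quantitative lower bound $q\ge Kp$ valid in the thin regime, where $K$ grows as the lattice becomes thinner, to close the elementary algebraic inequality $\Phi(x,y)<\Phi(1/2,\sqrt3/2)$. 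Once $\Phi$ is shown to be uniquely maximized at $(1/2,\sqrt3/2)$, the reduction of the first paragraph immediately yields that the global minimizer of $(x,y,A)\mapsto E_V(x,y,A)$ is unique and triangular.
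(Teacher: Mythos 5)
Your first three steps are sound: the homogeneity reduction $E_V(x,y,A)=A^{-6}\zeta(x,y,12,1)-2A^{-3}\zeta(x,y,6,1)$, the observation that $\Psi_c=\Psi_1+2(1-c)\,\zeta(\cdot,\cdot,6,1)$ is, for $0<c\le 1$, a sum of two functions each uniquely minimized at $(1/2,\sqrt3/2)$ (the hypothesis plus the Rankin--Cassels--Ennola--Diananda/Montgomery theorem for $\zeta_6$), and the algebra showing $\Phi(x,y)<\Phi(1/2,\sqrt3/2)$ whenever $c_{(x,y)}\le 1$. In fact this re-derives the one ingredient of the paper's proof, namely the consequence of \cite[Prop.~3.5]{Betermin:2014fy} that minimality of the triangular lattice at area $1$ propagates to every area $A\le 1$. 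But your proof is not complete: the entire regime $c_{(x,y)}=\zeta(x,y,12,1)/\zeta(x,y,6,1)>1$ (thin lattices) is left as a sketch, which you yourself call ``the crux,'' and the tools you propose do not obviously close it. Proposition~\ref{Rankinmethod} only controls fixed-area minimizers for $A>A_4$ with $A_4$ a large, unspecified constant, so it says nothing about shapes whose optimal area lies in $(1,A_4)$; and the ``uniform bound on $\Phi$ over the thin region'' is precisely the quantitative estimate that neither you nor the paper supplies. Even the existence of a maximizer of $\Phi$ on the noncompact domain $\mathcal{D}$ hinges on this missing control. So there is a genuine gap.

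The paper disposes of this regime with a single citation you do not have: \cite[Prop.~4.1]{Betermin:2014fy}, which asserts that the area of any global minimizer of $L\mapsto E_V[L]$ is smaller than $1$. In your notation this says exactly that the optimal shape satisfies $c_{(x^*,y^*)}<1$, i.e.\ the global minimizer automatically lives in the regime your convex-combination step already handles, and the ``hard part'' never has to be confronted. The paper's proof is then two lines: (a) the global minimizer has area $<1$; (b) by the hypothesis and \cite[Prop.~3.5]{Betermin:2014fy} (equivalently your $\Psi_c$ argument) the triangular lattice is the unique minimizer at every fixed area $A\le 1$; hence the global minimizer is triangular, and unique because $A\mapsto E_V(1/2,\sqrt3/2,A)$ has a unique minimizer. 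To repair your write-up you should either invoke that proposition, or prove an analogue of it (for instance, show directly that the optimal shape cannot satisfy $\zeta(x,y,12,1)\ge\zeta(x,y,6,1)$ by comparing its energy with the triangular value $-\Phi(1/2,\sqrt3/2)$); as written, the statement remains unproven.
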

\begin{proof}
By \cite[Proposition 3.5]{Betermin:2014fy}, we know that  $\Lambda_A$ is a minimizer of $L\mapsto E_V[L]$ among Bravais lattices of fixed area $A$ if and only if 
$$
A\leq \inf_{|L|=1\atop L\neq \Lambda_1} \left(\frac{\zeta_L(12)-\zeta_{\Lambda_1}(12)}{2(\zeta_L(6)-\zeta_{\Lambda_1}(6))}  \right)^{1/3}.
$$ 
Furthermore, we proved in \cite[Proposition 4.1]{Betermin:2014fy} that the area of a global minimizer is smaller than $1$. Thus, if the triangular lattice is the unique minimizer among Bravais lattices of fixed area $1$, then it is the case for every fixed area $A$ such that $0<A<1$. Consequently, the minimizer of the energy is unique and triangular, because the minimum among dilated triangular lattices with respect to the area is unique (see \cite[Proposition 4.3]{Betermin:2014fy}).
\end{proof}

We numerically check that $(1/2,\sqrt{3}/2)$ seems to be the minimizer of $(x,y)\mapsto E_V(x,y,1)$, but a rigorous proof have to be done. A strategy could be the following:
\begin{enumerate}
\item By Rankin's method (see proof of Proposition \ref{Rankinmethod}), we find $\partial_x E_V(x,y,1)\leq 0$ for any $(x,y)\in \mathcal{D}$, with equality if and only if $x=1/2$;
\item By the same arguments as in Proposition \ref{degrect}, it is possible to prove that the minimizer of $y\mapsto E_V(1/2,y,1)$ on $[\sqrt{3}/2,+\infty )$ admits an upper bound $y_1$;
\item By the algorithmic method based on \cite[Lem. 4.19]{BeterminPetrache}, the minimizer is $y=\sqrt{3}/2$ on $[\sqrt{3}/2,y_1]$.
\end{enumerate}

While the first point seems difficult to prove by using classical estimates, the proofs of both other points are clear.

\subsection{Summary of our results, numerical studies and conjectures}\label{summary}

In this part, we summarize the expected behavior of the minimizer $(x_A,y_A)$ of $(x,y)\mapsto E_V(x,y,A)$ based on our theoretical and numerical studies of the energy among rhombic and rectangular lattices. The summary is given in Figure \ref{CONJ}. In the following description, we detail the proved results and the conjectures based on numerical investigations. 

\begin{enumerate}
\item For $0<A<\frac{\pi}{(120)^{1/3}}\approx 0.637$, the minimizer is triangular. This is proved in \cite[Theorem 3.1]{Betermin:2014fy}.
\item For $\frac{\pi}{(120)^{1/3}}<A<A_{BZ}\approx 1.138$, the minimizer seems to be triangular. This is only a numerical result. In particular, if we know that $A_{BZ}>1$, then the global minimizer of $L\mapsto E_V[L]$, without a density constraint, is unique and triangular (see Proposition \ref{globmin}).
\item For $A_{BZ}<A<A_0\approx 1.152$, the triangular lattice is a local minimizer by Theorem \ref{THmain1}.
\item For $A_{BZ}<A<A_1\approx 1.143$, the minimizer seems, numerically, to be a rhombic lattice. More precisely it covers continuously and monotonically the interval of angles $[76.43^\circ,90^\circ)$.
\item For $A_1<A<A_2\approx 1.268$, the square lattice is a local minimizer, by Theorem \ref{THmain2}. Furthermore, it numerically seems that the square lattice is the unique minimizer of the energy.
\item For $A>A_2$, it numerically seems that the minimizer is a rectangular lattice. For $A$ large enough, we give a proof of this fact in Proposition \ref{Rankinmethod}.
\item As $A\to +\infty$, the minimizer becomes more and more thin and rectangular: it degenerates. This is proved in Proposition \ref{degrect}.
\end{enumerate}

\begin{figure}[!h]
\centering
\includegraphics[width=16cm]{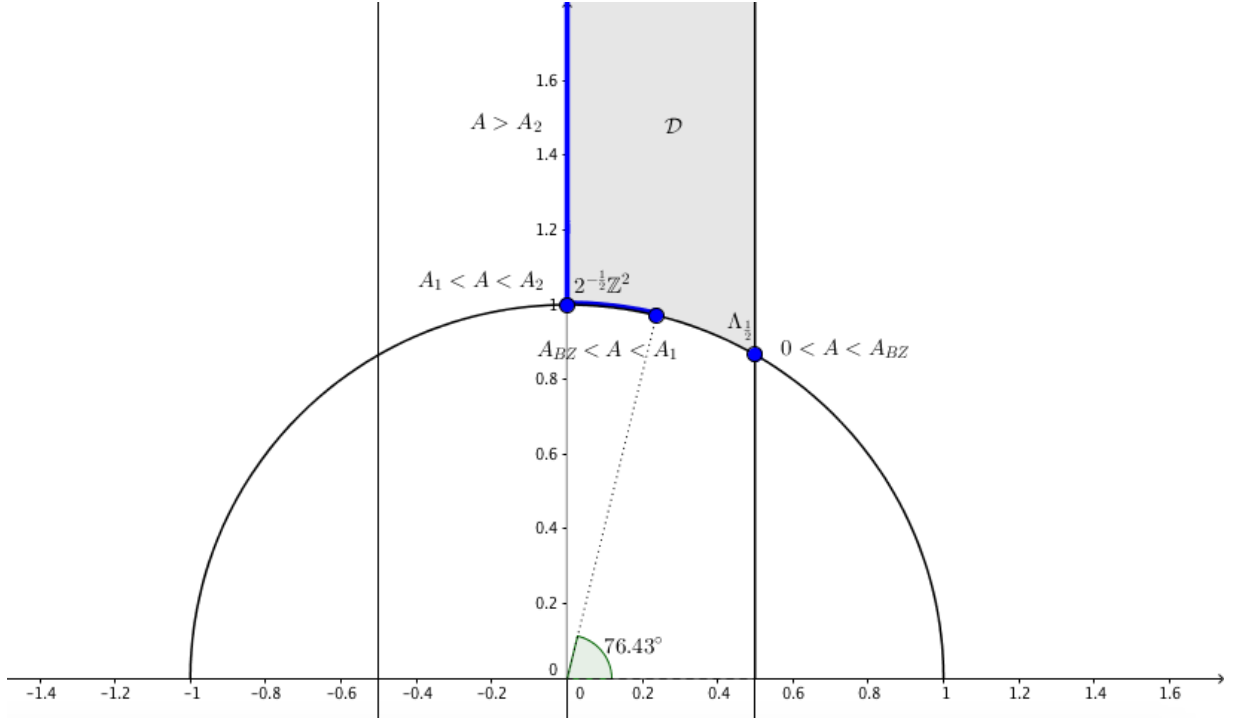} 
\caption{\textbf{Conjecture:} Behavior of the minimizer of $(x,y)\mapsto E_V(x,y,A)$ with respect to $A$.}
\label{CONJ}
\end{figure}

The evolution of the minimizer of $L_A\mapsto E_V[L_A]$ with respect to $A$ and the numerical investigations of Ho and Mueller \cite[Fig. 1 and 2]{Mueller:2002aa} (or see \cite{ReviewvorticesBEC}), for two-component Bose-Einstein Condensates are very similar. It is actually not surprising. Indeed, in their work, Ho and Muller consider the following lattice energy
$$
E_\delta(L,u):=\theta_L(1)+\delta\theta_{L+u}(1),
$$
among Bravais lattices $L\subset \R^2$ of area one and vectors $u\in \R^2$, where $\theta_L(\alpha)$ is defined by \eqref{defEpsttheta},  the translated theta function (see e.g. \cite[Sect. 1.B.]{BeterminPetrache}) is defined by
$$
\theta_{L+u}(1)=\sum_{p\in L} e^{-\pi |p+u|^2},
$$
and $-1\leq \delta \leq 1$. Thus, as we explained in \cite{BeterminPetrache}, this energy is the sum of two energies with opposite properties:
\begin{enumerate}
\item $L\mapsto \theta_{L}(1)$ is minimized by the triangular lattice $\Lambda_1$.
\item For any $u\not\in L$, $\theta_{L+u}(1)<\theta_{L}(1)$ and $L\mapsto \theta_{L+u}(1)$ does not admit any minimizer. More precisely, there exists a sequence of rectangular lattices $(L_k)_k$ which degenerates, as explained in Section \ref{rectangular}, such that $\lim_{k\to +\infty} \theta_{L_k+c_k}(1)=0$, where $c_k$ is the center of the primitive cell of $L_k$.
\end{enumerate}
Hence, since, for $L_A=\sqrt{A}L$ where $L$ has a unit area,
$$
A^6E_V[L_A]=\zeta_L(12)-A^3\zeta_L(6),
$$
$\theta_L(1)$ and $\theta_{L+u}(1)$ can be compared respectively to $\zeta_L(12)$ and $-\zeta_L(6)$. Furthermore, $\delta$ can be compared to $A^3$. Increasing $\delta$ (respectively $A$), Ho and Mueller find, as in this paper, that $L\mapsto \argmin_L\{ \min_{(L,u)}\left\{ E_\delta (L,u)\right\}\}$ (respectively $\argmin_L \{E_V[L]\}$) is triangular for small values of the parameter, becoming rhombic (with a discontinuous transition), square and finally rectangular.

\medskip

It is actually natural to conjecture that:
\begin{itemize}
\item the behavior of the minimizers of $L\mapsto E_f[L_A]$ with respect to the area $A$ is qualitatively the same for all the Lennard-Jones type potentials;
\item more generally, we can imagine that we should find the same result for any potential $f$ written as
$$
f=f_1-f_2,
$$
where $f_1$ and $f_2$ are both completely monotone and $f$ has a well, i.e. $f$ is decreasing on $(0,a)$ and increasing on $(a,+\infty)$. Indeed, for any $i\in\{1,2\}$, $L\mapsto E_{f_i}[L_A]$ has the same properties as $L\mapsto \theta_L(\alpha)$, for any $\alpha>0$ (see Proposition \ref{genmgt}).
\end{itemize}

\textbf{Acknowledgement:} I would like to thank the Mathematics Center Heidelberg (MATCH) for support, Doug Hardin for giving me the intuition of the degeneracy of the minimizer for the Lennard-Jones interaction, Mircea Petrache and Lukas Schimmer for giving me some feedbacks and Florian Nolte for interesting discussions. I also acknowledge support
from ERC advanced grant Mathematics of the Structure of Matter (project
No. 321029) and from VILLUM FONDEN via the QMATH Centre of Excellence (grant
No. 10059). I finally thank the anonymous referees for their useful suggestions and comments, that greatly helped to improve the readability of the paper.

\bibliographystyle{plain}
\bibliography{locmin}

\begin{thebibliography}{10}

\bibitem{AftBN}
A.~Aftalion, X.~Blanc, and F.~Nier.
\newblock {Lowest Landau level functional and Bargmann spaces for
  Bose--Einstein condensates}.
\newblock {\em Journal of Functional Analysis}, 241:661--702, 2006.

\bibitem{TrizacWigner16}
M.~Antlanger, G.~Kahl, M.~Mazars, L.~Samaj, and E.~Trizac.
\newblock {Rich polymorphic behavior of Wigner bilayers}.
\newblock {\em Physical Review Letters}, 117(11):118002, 2016.

\bibitem{BetTheta15}
L.~B{\'e}termin.
\newblock {Two-dimensional Theta Functions and Crystallization among Bravais
  Lattices}.
\newblock {\em SIAM J. Math. Anal.}, 48(5):3236--3269, 2016.

\bibitem{Beterminlocal3d}
L.~B{\'e}termin.
\newblock Local optimality of cubic lattices for interaction energies.
\newblock {\em Anal. Math. Phys.}, (online first)
  DOI:10.1007/s13324-017-0205-5:1--24, 2017.

\bibitem{BetKnuepfspatiallyextended}
L.~B{\'e}termin and H.~Kn{\"u}pfer.
\newblock Optimal lattice configurations for interacting spatially extended
  particles.
\newblock {\em Lett. Math. Phys.}, (first online)
  DOI:10.1007/s11005-018-1077-9:1--16, 2018.

\bibitem{BeterminPetrache}
L.~B{\'e}termin and M.~Petrache.
\newblock Dimension reduction techniques for the minimization of theta
  functions on lattices.
\newblock {\em J. Math. Phys.}, 58:071902, 2017.

\bibitem{Betermin:2014rr}
L.~B{\'e}termin and E.~Sandier.
\newblock {Renormalized Energy and Asymptotic Expansion of Optimal Logarithmic
  Energy on the Sphere}.
\newblock {\em Constr. Approx. , Special Issue: Approximation and Statistical
  Physics - Part I}, 47(1):39--74, 2018.

\bibitem{Betermin:2014fy}
L.~B{\'e}termin and P.~Zhang.
\newblock {Minimization of energy per particle among Bravais lattices in
  $\R^2$: Lennard-Jones and Thomas-Fermi cases}.
\newblock {\em {Commun. Contemp. Math.}}, 17(6):1450049, 2015.

\bibitem{Blanc:2015yu}
X.~Blanc and M.~Lewin.
\newblock {The Crystallization Conjecture: A Review}.
\newblock {\em EMS Surveys in Mathematical Sciences}, 2:255--306, 2015.

\bibitem{Cassels}
J.W.S. Cassels.
\newblock {On a Problem of Rankin about the Epstein Zeta-Function}.
\newblock {\em Proceedings of the Glasgow Mathematical Association}, 4:73--80,
  7 1959.

\bibitem{Coulangeon:kx}
R.~Coulangeon.
\newblock {Spherical Designs and Zeta Functions of Lattices}.
\newblock {\em International Mathematics Research Notices}, ID 49620(16), 2006.

\bibitem{Coulangeon:2010uq}
R.~Coulangeon and A.~Sch{\"u}rmann.
\newblock {Energy Minimization, Periodic Sets and Spherical Designs}.
\newblock {\em International Mathematics Research Notices}, pages 829--848,
  2012.

\bibitem{CoulSchurm2018}
R.~Coulangeon and A.~Sch{\"u}rmann.
\newblock Local energy optimality of periodic sets.
\newblock \textit{Preprint. arXiv:1802.02072}, 2018.

\bibitem{Diananda}
P.~H. Diananda.
\newblock {Notes on Two Lemmas concerning the Epstein Zeta-Function}.
\newblock {\em Proceedings of the Glasgow Mathematical Association},
  6:202--204, 7 1964.

\bibitem{ELi}
W.~E and D.~Li.
\newblock {On the crystallization of 2d hexagonal lattices}.
\newblock {\em Comm. Math. Phys.}, 286:1099--1140, 2009.

\bibitem{Eno2}
V.~Ennola.
\newblock {A Lemma about the Epstein Zeta-Function}.
\newblock {\em Proceedings of The Glasgow Mathematical Association},
  6:198--201, 1964.

\bibitem{GelinKarplus}
B.~R. Gelin and M.~Karplus.
\newblock {Side-chain torsional potentials: effect of dipeptide, protein, and
  solvent environment}.
\newblock {\em Biochem.}, 18(7):1256--1268, 1979.

\bibitem{Rad2}
R.~C. Heitmann and C.~Radin.
\newblock {The Ground State for Sticky Disks}.
\newblock {\em Journal of Statistical Physics}, 22:281--287, 1980.

\bibitem{LJ}
J.~E. Jones.
\newblock {On the determination of molecular fields II. From the equation of
  state of a gas}.
\newblock {\em Proc. R. Soc. London, Ser. A}, 106:463, 1924.

\bibitem{ReviewvorticesBEC}
K.~Kasamatsu, M.~Tsubota, and M.~Ueda.
\newblock {Vortices in Multicomponent Bose-Einstein Condensates}.
\newblock {\em International Journal of Modern Physics B}, 19(11):1835--1904,
  2005.

\bibitem{Luca:2016aa}
L.~De Luca and G.~Friesecke.
\newblock Crystallization in two dimensions and a discrete gauss-bonnet
  theorem.
\newblock {\em J. Nonlinear Sci.}, 28(1):69--90, 2018.

\bibitem{Stef1}
E.~Mainini, P.~Piovano, and U.~Stefanelli.
\newblock {Finite crystallization in the square lattice}.
\newblock {\em Nonlinearity}, 27:717--737, 2014.

\bibitem{Stef2}
E.~Mainini and U.~Stefanelli.
\newblock {Crystallization in carbon nanostructures}.
\newblock {\em Comm. Math. Phys.}, 328:545--571, 2014.

\bibitem{Martinet}
J.~Martinet.
\newblock {\em R{\'e}seaux euclidiens, designs sph{\'e}riques et formes
  modulaires}.
\newblock L'Ens. Math., Monographie 37, Gen{\`e}ve, 2001.

\bibitem{MEKBS}
A.~Mogilner, L.~Edelstein-Keshet, L.~Bent, and A.~Spiros.
\newblock {Mutual interactions, potentials, and individual distance in a social
  aggregation}.
\newblock {\em J. Math. Biol.}, 47(4):353--389, 2003.

\bibitem{Mont}
H.~L. Montgomery.
\newblock {Minimal Theta Functions}.
\newblock {\em Glasgow Mathematical Journal}, 30, 1988.

\bibitem{Mueller:2002aa}
E.~J. Mueller and T.-L. Ho.
\newblock {Two-Component Bose-Einstein Condensates with a Large Number of
  Vortices}.
\newblock {\em Physical Review Letters}, 88(18), 2002.

\bibitem{CondensMatter}
C.~Poole.
\newblock {\em {Encyclopedic Dictionary of Condensed Matter Physics}}.
\newblock Elsevier, 1st edition edition, 2004.

\bibitem{Rad3}
C.~Radin.
\newblock {The Ground State for Soft Disks}.
\newblock {\em Journal of Statistical Physics}, 26(2):365--373, 1981.

\bibitem{Rankin}
R.~A. Rankin.
\newblock {A Minimum Problem for the Epstein Zeta-Function}.
\newblock {\em Proceedings of The Glasgow Mathematical Association},
  1:149--158, 1953.

\bibitem{Samaj12}
L.~Samaj and E.~Trizac.
\newblock Critical phenomena and phase sequence in a classical bilayer wigner
  crystal at zero temperature.
\newblock {\em Physical Review B}, 85(20), 2012.

\bibitem{Sandier_Serfaty}
E.~Sandier and S.~Serfaty.
\newblock From the {G}inzburg-{L}andau {M}odel to {V}ortex {L}attice
  {P}roblems.
\newblock {\em Communications in Mathematical Physics}, 313(3):635--743, 2012.

\bibitem{SarStromb}
P.~Sarnak and A.~Str{\"o}mbergsson.
\newblock {Minima of Epstein's Zeta Function and Heights of Flat Tori}.
\newblock {\em Inventiones Mathematicae}, 165:115--151, 2006.

\bibitem{Crystal}
F.~Theil.
\newblock {A Proof of Crystallization in Two Dimensions}.
\newblock {\em Communications in Mathematical Physics}, 262(1):209--236, 2006.

\bibitem{VN2}
W.J. Ventevogel and B.R.A. Nijboer.
\newblock {On the Configuration of Systems of Interacting Particle with Minimum
  Potential Energy per Particle}.
\newblock {\em Physica A-statistical Mechanics and Its Applications},
  98A:274--288, 1979.

\end{thebibliography}
\end{document}